\theoremstyle{plain}%
\newtheorem{theorem}{Theorem}%  meant for continuous numbers
\newtheorem{proposition}[theorem]{Proposition}% 
\newtheorem{lemma}[theorem]{Lemma}%  meant for continuous numbers
\newtheorem{corollary}[theorem]{Corollary}%  meant for continuous numbers
\theoremstyle{remark}%
\newtheorem{remark}{Remark}%
\theoremstyle{definition}%
\newtheorem{definition}{Definition}%
\numberwithin{equation}{section}
\numberwithin{theorem}{section}
\numberwithin{definition}{section}
\newcommand{\inn}[2]{\left\langle#1,\,#2\right\rangle}
\DeclareMathOperator{\supp}{supp}
\DeclareMathOperator{\dist}{dist}
\DeclareMathOperator{\sgn}{sgn}
\newcommand{\di}{\partial}
\DeclareMathOperator{\Tr}{Tr}
\newcommand{\br}[1]{\left\langle#1\right\rangle}
\newcommand{\si}{\sigma}
\newcommand{\g}{\gamma}
\newcommand{\al}{\alpha}
\newcommand{\Cb}{\mathbb{C}}
\newcommand{\Rb}{\mathbb{R}}
\newcommand{\Lc}{\mathcal{L}}
\newcommand{\F}{\mathcal{F}}
\newcommand{\loc}{\text{loc}}
\newcommand{\one}{\ensuremath{\mathbf{1}}}
\newcommand{\hf}{\ensuremath{\mathfrak{h}}}
\newcommand{\om}{\omega}
\newcommand{\Om}{\Omega}
\newcommand{\Ga}{\Gamma}
\renewcommand{\l}{\lambda} % renewed from slash l
\newcommand{\abs}[1]{\ensuremath{\left\lvert#1\right\rvert}}
\newcommand{\norm}[1]{\ensuremath{\left\lVert#1\right\rVert}}
\newcommand{\sbr}[1]{\left[#1\right]}
\newcommand{\Set}[1]{\left\{#1\right\}}
\newcommand{\md}[6]{\ensuremath{
		\ifinner
		\tfrac{\partial{^{#2}}#1}{\partial{#3^{#4}}\partial{#5^{#6}}}
		\else
		\tfrac{\partial{^{#2}}#1}{\partial{#3^{#4}}\partial{#5^{#6}}}
		\fi
}}
\newcommand{\del}[1]{\left(#1\right)}
\newcommand{\thmref}[1]{Theorem~\ref{#1}}
\newcommand{\secref}[1]{Section~\ref{#1}}
\newcommand{\lemref}[1]{Lemma~\ref{#1}}
\newcommand{\propref}[1]{Proposition~\ref{#1}}
\newcommand{\remref}[1]{Remark~\ref{#1}}
\newcommand{\figref}[1]{Figure~\ref{#1}}
\newcommand{\corref}[1]{Corollary~\ref{#1}}
\numberwithin{equation}{section}
\definecolor{green}{rgb}{0.0, 0.5, 0.5}
\definecolor{lgray}{gray}{0.9}
\definecolor{llgray}{gray}{0.95}
\definecolor{lllgray}{gray}{0.975}
\newcommand{\R}{\mathbb{R}}
\newcommand{\C}{\mathbb{C}}
\newcommand{\cB}{\mathcal{B}}
\newcommand{\cD}{\mathcal{D}}
\newcommand{\cF}{\mathcal{F}}
\newcommand{\cG}{\mathcal{G}}
\newcommand{\cL}{\mathcal{L}}
\newcommand{\cX}{\mathcal{X}}
\newcommand{\nc}{\newcommand}
\nc{\h}{\delta}
\nc{\G}{\Gamma}
\nc{\et}{\eta} 
\nc{\gam}{\gamma}
\nc{\ka}{\kappa}
\nc{\lam}{\lambda}
\nc{\Lam}{\Lambda}
\nc{\ta}{\tau}
\nc{\w}{\omega}
\nc{\io}{\iota}
\nc{\s}{\sigma}
\nc{\vphi}{\varphi}
\nc{\e}{\epsilon}
\nc{\ran}{\rangle}
\nc{\lan}{\langle}
\newcommand{\ra}{\rightarrow}
\renewcommand{\Im}{\mathrm{Im}} % Imaginary part
\newcommand{\tr}{\mathrm{Tr}}
\nc{\bfone}{{\bf 1}}
\newcommand{\p}{\partial}
\nc{\dd}{\mathrm{d}}
\newcommand{\DETAILS}[1]{}
\newcommand{\astlo}{\hat\chi}
\DeclareMathOperator{\Ad}{ad}
\newcommand{\ad}[3]{\Ad^{#1}_{#2}(#3)}
\newcommand{\cp}{\mathrm{c}}
\newcommand{\Rem}{\mathrm{Rem}}
\DeclareMathOperator{\dG}{\mathrm{d}\Gamma}
\newcommand{\SD}{\mathrm{SD}}
\newcommand{\Norm}[1]{\norm{#1}}
\newcommand{\intn}{\mathrm{int}}
\newcommand{\ondel}[1]{\inn{\varphi}{{#1}\varphi}}
\newcommand{\phidel}[1]{\inn{\phi}{{#1}\phi}}
\newcommand{\Ondel}[1]{\inn{\Om}{{#1}\Om}}
\newcommand{\ordel}[1]{\inn{\varphi_r}{{#1}\varphi_r}}
\newcommand{\otdel}[1]{\inn{\varphi_t}{{#1}\varphi_t}}
\newcommand{\undel}[1]{\inn{\varphi}{{#1}\psi}}
\newcommand{\urdel}[1]{\inn{\varphi_r}{{#1}\psi_r}}
\newcommand{\wndel}[1]{\inn{\psi}{{#1}\psi}}
\newcommand{\wrdel}[1]{\inn{\psi_r}{{#1}\psi_r}}
\newcommand{\Undel}[1]{\inn{U_t^\xi\varphi}{{#1}U_t^\xi\varphi}}
\DeclareMathOperator{\WC}{WC}
\begin{document}
	\title[Propagation of information]{On propagation of information in quantum many-body systems
	}
	\author[I.~M.~Sigal]{Israel Michael Sigal}
	\address{Department of Mathematics, University of Toronto, Toronto, ON M5S 2E4, Canada }
	\email{im.sigal@utoronto.ca}
	
	\author[J.~Zhang]{Jingxuan Zhang
		%			(\begin{CJK*}{UTF8}{gbsn}张景宣
			%		\end{CJK*})
	}
	\address{Yau Mathematical Sciences Center,
		Tsinghua University,
		Haidian District,
		Beijing 100084, China }
	\address{Department of Mathematical Sciences, University of Copenhagen, Copenhagen 2100, Denmark}
	%\address{Department of Mathematics, University of Toronto, Toronto, ON M5S 2E4, Canada}
	\email{jingxuan@tsinghua.edu.cn}

	\date{\today}
%	\subjclass[2020]{35Q40   (primary); 81P45   (secondary)}
	\keywords{Maximal propagation speed; Lieb-Robinson bounds; quantum many-body  systems; quantum information; quantum light cones}
	\begin{abstract}
We prove bounds on the minimal time for quantum messaging, propagation/creation of correlations, and control of states  for \textit{general} lattice quantum many-body systems. The proofs are based on a maximal velocity bound, which states that the many-body evolution stays, up to small leaking probability tails, within a light cone of the support of the initial conditions. This estimate is used to prove the light-cone approximation of dynamics and Lieb-Robinson-type bound, which in turn yield the results above. Our conditions cover long-range interactions. The main results of this paper as well as some key steps of the proofs were first presented in \cite{LRSZ}.

	\end{abstract}

	\maketitle

	\section{Introduction}\label{sec:1}

The finite speed of propagation of particles and fields %radiation
  is a fundamental law of nature. It provides powerful constraints in relativistic physics. It is remarkable that such constraints also effectively exist in non-relativistic quantum theory, the only quantum theory with a solid mathematical foundation and physical consistency. %background.
This was discovered by Lieb and Robinson (\cite{LR}) 50 years ago, %. More precisely, \cite{LR}, showed the existence of the light cone for quantum spin lattice systems by proving space-time bounds 
 for quantum spin lattice systems,  in a form of space-time bounds   on the commutators of observables with disjoint space-time supports. % (see \eqref{} for a precise formulation).

%\vspace{3cm}

About  40 years later, starting with the work of Hastings (\cite{H1}) on  the Lieb-Schultz-Mattis theorem %in  condensed matter physics, multifarious applications of LRB to various areas of quantum physics to were discovered and a large area of theoretical and mathematical physics sprung out. Starting with the work of Hastings (\cite{H1}) on  the Lieb-Schultz-Mattis theorem and %its extension by   and
and  followed by Nachtergaele and Sims (\cite{NachS}) on exponential decay of correlations in  condensed matter physics,  %continuing with works of  
Bravyi, Hastings and Verstraete (\cite{BHV}),  Bravyi and Hastings (\cite{BH}),    Eisert and Osborn  (\cite{EisOsb})  and Hastings (\cite{H2, H3}) on quantum messaging, correlation creation, scaling and  area laws for the entanglement entropy and belief propagation in Quantum Information Science (QIS), % and of \cite{CL,KS_he,RS} in high energy physics,  %  In particular, important results were obtained in quantum information science (QIS), condensed matter and high energy physics, and more. %to name a few major advances.
 %FLS1:  The interest in ... rapidly surged in the early 2000s when it became clear
 it transpired that Lieb-Robinson bounds (LRBs) are among the very few effective and general tools that are available for analyzing quantum many-body systems.\footnote{It is clear in hindsight that Lieb-Robinson-type bounds on the propagation of information would play a central role in QIS.} %Accordingly, they have played a decisive role in contexts as diverse as quantum information science \cite{H1,LVV}, condensed-matter theory \cite{BdRF,BMNS,BH,BHV,H2,NachS} and high-energy physics \cite{CL,KS_he,RS} to name a few.

In the last 15 years, following these works, a new active area of theoretical and mathematical physics %related to 
 dealing with dynamics of quantum information %and grounded in LRB
  sprung to life.  A variety of improvements of the original LRB, e.g., extensions to long-range spin interactions, fermionic lattice gases and finally to bosonic systems,  have been achieved, and  their applications expanded and deepened to include, e.g.  %\cite{DLLY1,DLLY2,EMNY,Fossetal,GL,GNRS,HSS,H2,MKN,NRSS,NS1,NSY1,Tranetal}
 the state transport (\cite{EpWh, FLS2}) and the error bounds on quantum simulation algorithms (see e.g.~\cite{TranEtAl1, TranEtAl3}) in QIS,  the equilibration (\cite{GogEis}) in condensed matter physics, thermodynamic limit of dynamics (\cite{NachOgS, NachSchlSSZ, NachVerZ}) in Statistical Mechanics and scrambling time \DETAILS{{\bf (check)} } in  Quantum Field Theory \cite{%CL,KS_he,
 RS}.   %have been achieved over the past 10 years % For a more complete discussion, 
 See the survey papers \cite{KGE, NSY2} and brief reviews in \cite{FLS, FLS2}. 
%   the existence of the infinite volume dynamics for a wide class of interaction (\cite{NachOgS, NachSchlSSZ, NachVerZ})
   
Independently and using a different approach,  it was shown in \cite{SigSof2} that in Quantum Mechanics the ``essential support'' of the wave functions, i.e.~the support up to negligible probability tails, spreads  with finite speed.
The  result was improved in \cite{Skib, HeSk, APSS} and extended to the nonrelativistic QED %photons coupled to matter 
  in \cite{BoFauSig} %The approach of  \cite{SigSof, Skib, HeSk, APSS, BoFauSig} is  based on the method of differential inequalities for propagation observables and commutator expansions. It is fairly different from  the existing approaches in the literature on Lieb-Robinson bounds.
%The approach of \cite{SigSof, SigSof2} was extended
  and to condensed matter physics, i.e.~\DETAILS{from the zero density quantum systems}to systems with positive particle densities, in \cite{FLS, FLS2}. %A conceptual step from QM to Statistical Mechanics, i.e. from the zero density quantum systems to those of positive density was made in \cite{FLS, FLS2}.

%Concerning applications to QIT of interest in this paper, ... %\DETAILS{Not finished?}

%The next big step took almost 40 years. It was shown by  Hastings, ...  (\cite{BHV, MR2231689}) and EWh (\cite{EpWh})  that the information-theoretic operations such as  quantum messaging, creation and propagation of correlations and transport of states 
%The fact that the information-theoretic operations such as  sending messages, creation and propagation of correlations and transport of states could be derived from space-time (in this case, LR) bounds was discovered by Hastings, ...  (\cite{BHV, MR2231689}) and EWh (\cite{EpWh}), respectively. 

In this paper, we prove bounds   on the minimal time for quantum messaging, propagation/creation of correlations (scrambling time\footnote{{The scrambling time could be defined as the time an initially uncorrelated subsystem stays uncorrelated (i.e.~the time  needed to create correlations). }In particular, our results imply non-existence of fast scramling, c.f.~\cite{KS_he,CL}.}) and control of states,  for general  quantum many-body lattice systems.\footnote{From the condensed matter physics viewpoint, such systems arise in the standard tight-binding approximation, see e.g.~\cite{FeffLThWein} and, for rigorous results, \cite{AshMerm, Harr}. We consider them to avoid inessential technicalities in the proof of the approximation result, \thmref{thm2}.}  Fixing a lattice $\cL\subset \Rb^d,\,d\ge1$, such systems are described by the Hamiltonians
\footnote{For background on the second quantization and quantum many-body systems, see \cite{BrRo, GS}.}
\begin{align}\label{1.1} %{H\equiv}
 H_\Lam &:=  \sum_{x,y \in \Lambda} h_{xy} a_x^*a_y + \frac12\sum_{\substack{x,y\in \Lam
%			\\x\ne y
		}}a^*_xa^*_yv_{xy}a_ya_x,
\end{align} 
{for subsets $\Lam$ of $\cL$}, acting on the bosonic Fock spaces\footnote{{See Appendix \ref{sec:Fock-sp} for the definitions and discussions of Fock spaces}.}  $\mathcal{F}_\Lam$ over {the $1$-particle Hilbert spaces} $\hf_\Lam:=\ell^2(\Lambda)$.
Here $a_x$ and $a^*_x$ are the annihilation and creation operators, respectively,
$h_{xy}$ is the operator kernel (matrix) of an $1$-particle Hamiltonian $h$ acting on $\hf$ and $v_{xy}$ is a $2$-particle pair potential. {We assume that $h$ is hermitian, i.e.~$h_{xy}=\overline{h_{yx}}$ and $v$ %(besides being real)
	is real-symmetric, i.e.~$v_{xy}=\overline{ v_{xy}}=v_{yx}$\DETAILS{(e.g. $v_{xy}=v(x-y)$ for some $v:\Lam\to\Rb$ with $v(x)=v(-x)$)}. }

{Our main results are given in Theorems \ref{thm1}--\ref{thm4} below. Our starting point is the maximal velocity bound (MVB), \thmref{thm1}, which states that the many-body evolution stays, up to small leaking  tails, within a light cone of the support of the initial conditions. We use the MVB to derive  Theorem \ref{thm2} on  the light-cone approximation of quantum dynamics, which, in turn,  yields the weak LRB,  \thmref{thm3}. The latter 
%In the process, we 
%prove a weak form of the LRB. % celebrated Lieb-Robinson bounds (LRB).
%introduce the notion of weak LRB, which
 establishes power-law decay of commutators of evolving observables and holds (uniformly) on \textit{a subset of localized states}.

%Theorems \ref{thmLCA} and \ref{thmLRB} lead readily to Theorems \ref{thm5}--\ref{thm:gap} which give bounds on  propagation/creation of correlation, quantum messaging, state control times,  and the relation between a spectral gap and the decay of correlations,.  The proof of  \thmref{thmMVB} is then extended to yield \thmref{thmPT}  on macroscopic particle transport.

%Our main results are given in Theorems \ref{thmMVB}--\ref{thmPT} below. Theorems \ref{thmMVB},   \ref{thmLCA} and \ref{thmLRB} deal with the a maximal velocity bound (MVB),  the light-cone approximation of quantum dynamics and  the LRB. They provide different expressions of the fact that the many-body evolution stays, up to small leaking probability tails, within a light cone of the support of the initial conditions.   
  
  Theorems \ref{thm5}--\ref{thm:gap} %emerge from  applications of Theorems \ref{thmLCA} and \ref{thmLRB} to
  provide general constraints on  propagation/creation of correlation, quantum messaging, state control times, { and the relation between a spectral gap and the decay of correlations.} They are derived readily from Theorems \ref{thm2} and \ref{thm3}. 
  Theorem \ref{thm4} describes macroscopic particle transport. Its proof extends in an essential way  the proof of Theorem \ref{thm1}.
  
For pure states,  the correlation signifies  entanglement and Theorem \ref{thm5} gives bounds on the time for  propagation/creation of entanglement between different regions  
%In connection with our bounds on correlations, we introduce the notion of \textit{weakly correlated states}
within a given spatial domain.
%, see \secref{sec:setup}. %{sec:cor}.  

 To emphasize, %we show   in various applications that the MVB leads to
our results yield the existence of a linear light cone for general lattice quantum many-body  systems, providing powerful constraints on the evolution of information for such systems. 
%It is also verifiable experimentally {as probabilities to find particles within a light cone of the domain of initial localization could be measured directly.}

The bounds on the maximal speed of propagation are given in terms of the norm of the $1$-particle group velocity operator $i[h, x]$,  
where $h,\,x$ are the $1$-particle Hamiltonian entering \eqref{1.1} and the position observable, respectively.}

%\st{We derive our results from a maximal velocity bound (MVB), Theorem} \ref{thm1}, \st{or results from its proof.  This theorem states that the many-body evolution stays, up to small leaking  tails, within a light cone of the support of the initial conditions.} 

%\st{In the process, we 
%%prove a weak form of the LRB. % celebrated Lieb-Robinson bounds (LRB).
%introduce the notion of weak LRB, which establishes power-law decay of commutators and holds (uniformly) on \textit{a subset of localized states}, 
%and prove it in Theorem} \ref{thm3}.

The Hamiltonians under consideration in this paper are characterized by two decay rates, one for $h_{xy}$ and the other, for $v_{xy}$. %	Throughout the paper we fix an underlying lattice $\Lc\subset \Rb^d$ with grid size at least $1$, e.g. $\Lc=\Zb^d$.
We assume that there exists $n\ge1$ s.th.
	\begin{align}\label{k-cond}
		\kappa_n:=&\sup _{x\in\Lam} \sum_{y\in\Lam} \abs{h_{xy}}\abs{x-y}^{n+1}<\infty,\\
\label{v-cond}
		\nu_n:=&\sup_{x\in\Lam}	\sum_{y\in\Lam} \abs{v_{xy}}\abs{x-y}^n<\infty.
	\end{align}
Moreover, in \thmref{thm1}, we do not use condition \eqref{v-cond} and allow $v_{xy}$ to be arbitrary (apart from the standing assumption $v_{xy}=\overline{v_{xy}}=v_{yx}$).
	The parameter $n\ge1$ in \eqref{k-cond}--\eqref{v-cond} determines the time-/ space-decay rate in various statements.

All our results hold for bosonic systems with long-range interactions\footnote{In the present context, $H$ is said to be short-/long-range if $h_{xy},\,v_{xy}$ decay exponentially/polynomially  in $\abs{x-y}$.}, % and are thermodynamically stable, i.e.~independent of the volume $\abs{\Lam}$ of the domain $\Lam$ on which \eqref{1.1} is defined. 
%Our results show the existence of a linear light cone. 
say, %for $n=1$  the result applies to interactions satisfying
$\abs{h_{xy}}\leq C (1+|x-y|)^{-\alpha}$ with $\alpha>d+n+1$ and similarly for $v$, which suffices for \eqref{k-cond}--\eqref{v-cond}. %, giving %which, as we will se below, gives
% a weak LRB with a linear light cone. 
 % and a speed-up is possible \footnote{For a discussion of the effect of the long-range interactions  on  the transmission of quantum information and on speed of  thermalization see  [Bose, S. (2007), Contemporary Physics 48 (1), 13] and [Calabrese, P., and J. Cardy (2006), Phys. Rev. Lett. 96 (13), 136801], for a broad review of effects long-range interactions on quantum propagation, see \cite{DefenuEtAl}. Diverging equilibration times in the thermodynamic limit are a notorious characteristic of long-range interacting systems.}.
	Taking $n=1$, we see that,  for $d>1$ and $\alpha\in (d+2,2d+1)$, our result gives
	%	, at it seems for the first time {\bf(check)}, 
	a linear light cone as defined in terms of the weak LRB \eqref{LRB}. On the other hand, fast state-transfer and entanglement-generation protocols \cite{KS1,KS2,EldredgeEtAl,TranEtal4,TranEtal5} show that 
	 linear light cones, defined in terms of the LRB, do not exist for $\al<2d+1$.
See \cite{TranEtAl1} for the phase diagram summarizing the situation for the Lieb-Robinson light cones and \cite{Bose,DefenuEtAl} for  reviews of the effect of the long-range interactions  on quantum many-body dynamics and, in particular, on the transmission of quantum information. 
Thus our bounds narrow the class of systems for which long-range interactions lead  to speed-up of the spreading of information\footnote{We are grateful to Marius Lemm for pointing this out to us.}. %correlations, or entanglement, with potentially a wide range of exotic dynamical properties, which may be exploited for fast information transmission, improved quantum state preparation, and similar applications. 

Our results can be extended  to Hamiltonians with time-dependent and few-body interactions and adapted to long-range fermionic systems.  The main results of this paper as well as some key steps of the proofs were first presented in \cite{LRSZ}
%Moreover, Theorems \ref{thm1} and \ref{thm4}  %Our core theorem establishing the MVB 
% extend to continuous many-body systems. 
 
% \st{Our main results are given in Theorems} \ref{thm1}--\ref{thm:gap}.
% \st{Theorems} \ref{thm1}--\ref{thm3} \st{deal with the MVB, the light-cone approximation of observables, and  the LRB. Theorems} \ref{thm5}--\ref{thm4} \st{give various applications of Theorems} \ref{thm1}--\ref{thm3} \st{to  the particle transport,  propagation/creation of correlation, quantum messaging, state control times,  and the relation between a spectral gap and the decay of correlations.}
% 
% \st{For pure states, since the correlation signifies the entanglement, the bounds of Theorem} \ref{thm5} \st{on propagation/creation of correlations give corresponding estimates on the time for  propagation/creation of entanglement between different spatial regions.  In connection with our bounds on correlations, we introduce the notion of \textit{weakly correlated states} (within given spatial domains), see} \secref{sec:cor}.  
% 
%\st{ To emphasize, we show   in various applications that the MVB leads to general constraints on the evolution of information for the most general discrete quantum many-body  systems. 
% It is also verifiable experimentally as probabilities to find particles within a light cone of the domain of initial localization could be measured directly.}

 The main results of this paper as well as some key steps of the proofs were first presented in \cite{LRSZ}.
 \subsection{Related results}\label{sec:rr}
 %As was mentioned at the beginning of this introduction, our results on the MVB and on ..., 

% Our  results, except for those for correlations, extend the results in \cite{FLS,FLS2} from the Bose-Hubbard model to general discrete quantum many-body systems. 
 
Results similar to  Theorems \ref{thm1}--\ref{thm3}, \ref{thm6}--\ref{thm:qst}, and \ref{thm4}, but for the Bose-Hubbard model, were obtained in \cite{FLS, FLS2}. Our proofs of those theorems follow  the corresponding proofs in \cite{FLS,FLS2}. 
%{(In fact, the proof in \cite{FLS2} of the results corresponding to our \thmref{thm1} extends to  our context.)}
The rendition in the present paper is more geometric, which streamlines the derivations and makes the arguments more transparent. 
%antecedent %are entirely new. %The fact that the information-theoretic operations such as ... could be derived from space-time (LR) bounds was discovered by Hastings, ...  and EWh, respectively.	
 	Furthermore, we view the results in \thmref{thm:qst} as bounds on quantum control time, rather than those on the time for state transfer as in \cite{EpWh,FLS2}.

	Earlier on, results similar to Thms~\ref{thm1} and \ref{thm2} and \ref{thm3} have previously been obtained in \cite{SHOE}, \cite{YL} and \cite{KS2}, respectively (the last two papers deal with the Bose-Hubbard model), for detailed comparisons, see  \remref{rem:lit} in Section \ref{sec:ex}. 	
Moreover, LRB for a special class of bosonic  lattice systems was proved in \cite{MR2472026}.

	The constraints imposed by the LRB on the propagation/creation of correlations were first discussed in \cite{BHV}, with rigorous results for fermionic systems given in \cite{NachOgS}. 	
	\DETAILS{Using the LRB, the authors of \cite{NachOgS} proved an upper bound  on the rate of propagation/creation of correlations between observables with separated spatial support for long-range interacting fermions, provided the initial state is a product state (uncorrelated).}

	 The relation between spectral gap and the decay of correlations (clustering) for fermionic systems was established in \cite{H0,H1,HastKoma,NachS}, with the sharpest results given in \cite{NachS}. %Our results on correlations, Thms.~\ref{thm5} and \ref{thm:gap},
%Thm. \ref{thm:gap}	  extends {these results }\DETAILS{from short-range fermionic systems} to general long-range interacting many-body systems (either fermionic or bosonic).

%	Theorem \ref{thm5} establishes such constraint for general quantum many-body  systems. % and does not have rigorous precursors. 	
%For pure states, since the correlation signifies the entanglement, Theorem \ref{thm5} gives bounds on the time for  propagation/creation of entanglement between different regions.  In connection with our bounds on correlations, we introduce the notion of \textit{weakly correlated states} (within given spatial domains), see \secref{sec:cor}.  

%{Theorems \ref{thm2} and \ref{thm3} are related to quantum scrambling in long-range interacting systems. The effective light cones for local observables we established here reaffirm and generalize previous results from \cite{KS_he,CL} on the finite speed of quantum scrambling for long-range interacting bosons/fermions. }
 
%\DETAILS{Pls check }

%We mention two open problems which would significantly extend the results of this paper: \begin{enumerate}
% 		\item Exponential, rather than polynomial, bound on the leaking tails in the MVB (see \thmref{thm1}) for short-range interactions. 
% 		%		Recently, some results of this kind are obtained for the Bose-Hubbard model \cite{KVS}.
% 		\item Larger class of initial states (i.e.~weakening condition \eqref{emptyshellcond}).
% \end{enumerate}
%While we believe that solving the first problem is a  technical matter, we do not expect that conditions such as \eqref{emptyshellcond} can be removed entirely. 

As we were preparing the present paper for publication, a new preprint \cite{KVS} was posted with deep results related to those in Thms.~\ref{thm1}--\ref{thm2} for quantum many-
body Hamiltonians with nearest-neighbour interactions.
Assuming the initial state satisfies a uniform low density condition, the authors  of \cite{KVS}  proved the existence of the superlinear light cone $\abs{x}\sim t\log t$ (resp.~$\abs{x}\sim t^d\,\mathrm{polylog}\,t$, where $d$ is the dimension and $\mathrm{polylog}$ is the polylogarithmic function) for particle transport  (resp.~the light-cone approximation of observables), up to fast decaying leaking probability tails. 
%	This gives an affirmative answer to Problem (1) above in the case of finite-range interaction. 

\subsection{Organization of the paper}\label{sec:1.1}

In \secref{sec:setup}, we describe the dynamics generated by the Hamiltonian \eqref{1.1} and formulate the main results of this paper, Theorems \ref{thm1}-\ref{thm4}. 
 Their proofs are given in Sects.~\ref{sec:pfthm1}--\ref{sec:pfthm4}. %of the main results in Sects.~\ref{sec:MVB}--\ref{sec:2.8} are conducted to Sects.~\ref{sec:MVB}--\ref{sec:2.8} Some general 
 The technicalities are  deferred to the appendices, with Appendix \ref{sec:Fock-sp} containing some general facts about the Fock spaces. In Section \ref{sec:ex}, we comment on possible extensions of the main theorems.  % and the second quantization.

\subsection{Notation}
Throughout the paper, we fix the underlying lattice $\cL$, with  grid size $\ge1$, and the domain $\Lam\subset\cL$, and we do not display these in our notations, e.g., we write $H$, $\cF$, and $\hf$ for $H_\Lam$, $\cF_\Lam$, and $\hf_\Lam$. 
	
	We denote by $\cD(A)$ the domain of an operator $A$ and 
$\|\cdot\|$,   the norm of  operators on $\cF$ and sometimes on $\hf$.
For a bounded operator $A$ on $\hf$, we denote by $A_{xy},\,x,y\in \Lam$, the operator kernel (matrix) of $A$. 
%To shorten notations, we will drop the dependence of various operators on the full domain $\Lam$ when no confusion arises (see e.g.~\eqref{1.1}).  
We make no distinction in our notation between a function $f\in\hf$ and the associated multiplication operator $\psi(x)\mapsto f(x)\psi(x)$ on $\hf$. 

All quantities and equations we work with are dimensionless. 	{In particular, in our units, the Planck constant is set to $2\pi$ and speed of light, to one ($\hbar = 1$ and $c = 1$).  }

\section{Setup and main results}\label{sec:setup}

{For symmetric $h$ and $v$, the Hamiltonian $H$ in \eqref{1.1} is symmetric and therefore self-adjoint. To show the latter, one can} use the canonical commutator relations to show that the number operator 
\begin{equation}\label{1.4}N\equiv N_\Lam,\quad \text{where}\quad N_X := \sum_{x\in X}a^*_xa_x,\end{equation} commutes with $H$. Since {the operators $H_n:=H\upharpoonright_{\Set{N=n}},\,n=0,1,\ldots,$ are symmetric and bounded, they are  self-adjoint. Hence} so is $H=\oplus_{n=0}^\infty H_n$ as an infinite direct sum of self-adjoint operators. Therefore the propagator $e^{-itH}$ is well-defined for every $t\in\Rb$. 
%\DETAILS{See \cite[Sect.~5.2.1]{BrRo} for detailed discussions.}

It is convenient to extend the state space $\cF$ by going to the space $S(\cF)$ of density operators on $\cF$, i.e.~positive trace-class operators $\rho$ on $\cF$, which we identify with  positive linear  functionals (i.e.~expectations)  %$\om: \cA\ra \R$,\\  on an algebra $\cA$ of
of observables,
 $\om(A)\equiv \om_\rho(A):=\Tr(A \rho)$. Consequently, 
  we pass from the Schr\"odinger equation $i\p_t\psi=H\psi$ %\eqref{SE}
   on $\cF$, to the von Neumann equation  
\begin{equation}\label{vNL}\p_t\rho_t =-i[H, \rho_t] \quad \text{ or } \quad \p_t\om_t(A)=\om_t(i [H, A]).\end{equation}   
The domain of $\Ad_H:A\mapsto [A,H]$ in the space $S(\cF)$ of density operators over the Fock space $\cF$ is given by
		\begin{align}\label{domain_def}
	\cD&:=\{\rho\in S(\cF)\mid \rho\cD(H)\subset\cD(H)\text{ and }[H,\rho]\in S(\cF)\},
\end{align}
{We write $\om\in\cD$ if $\om=\om_\rho$ for some $\rho\in\cD$. }
For each $\rho\in\cD$, the Cauchy problem \eqref{vNL} with initial configuration $\rho$ has a unique solution given by
\begin{equation}\label{1.5'}
	\rho_t\equiv \al'_t(\rho):=e^{-itH}\rho e^{itH}.
\end{equation}
It is straightforward to check that the evolution \eqref{1.5'} preserves total probability and positivity, i.e.,
\begin{equation}
	\label{1.6}
	\Tr(\rho_t)\equiv \Tr(\rho)\quad\text{and}\quad \rho\ge0\implies \rho_t\ge0\quad (t\in \Rb)	,
\end{equation}
as well as the eigenvalues of $\rho$. 

The evolution of observables, dual to $\al_t'$ in \eqref{1.5'} w.r.t.~the coupling $(A,\rho)\mapsto\Tr(A\rho)$, is given by
\begin{equation}\label{1.5}
	A_t\equiv \al_t(A):=e^{itH}Ae^{-itH}.
\end{equation}
%Formally we have $\al_t'=\al_{-t}$, {if one ignores that the maps on the l.h.s. and r.h.s. act on different spaces. } 
In terms of linear functionals with initial condition $\om$, evolution \eqref{1.5'} becomes  $\om_t (A)=\om(A_t)$, where  $\om_t=\om \circ \al_t$, and relations \eqref{1.6} become $\om_t(\one)\equiv \om(\one)$ and $\om\ge0\implies \om_t\ge0$.

%For the main results of this paper, we prove several bounds imposing general constraints on the many-body quantum evolutions.
%Our main results (see below) are valid for evolutions with initial state satisfying 

Below, 
 we evaluate our inequalities on states $\om$ (which we also consider as initial conditions for \eqref{vNL}) %we use initial states 
satisfying the following conditions:
%, for some $p\ge1$,
\begin{equation}\label{g0-cond}
\om \in\cD,\quad	\om (N^2)<\infty,
\end{equation}
where, recall, $N\equiv N_\Lam$ is the total number operator. In what follows, for a subset $X\subset \Lam$, we denote by 
%$\diam X:=\sup_{x,y\in X}{\abs{x-y}}$ its diameter,  
$X^\cp:=\Lam\setminus X$ its complement in $\Lam$, $d_X(x)\equiv \dist(\Set{x},X):=\inf_{y\in X}\abs{x-y}$ the distance function to $X$, $X_\xi:=\Set{x\in\Lam:d_X(x)\le \xi}$ (see \figref{fig:Xxi} below), 
%and $N_X$  the local particle number operator: \begin{equation}\label{1.4}	N_X := \sum_{x\in X}a^*_xa_x. \end{equation}  We note that $[N_X,N_Y]=0$ for any domains $X,\,Y\subset \Lam$,
  and $X_\xi^\cp$ is always understood as $(X_\xi)^\cp$. 
\begin{figure}[H]
	\centering
	\begin{tikzpicture}[scale=.8]
		\draw  plot[scale=.52,smooth, tension=.7] coordinates {(-3,0.5) (-2.5,2.5) (-.5,3.5) (1.5,3) (3,3.5) (4,2.5) (4,0.5) (2.5,-2) (0,-1.5) (-2.5,-2) (-3,0.5)};
		
		\draw  plot[shift={(-0.2,-0.25)}, scale=.76,smooth, tension=.7] coordinates {(-3,0.5) (-2.5,2.5) (-.5,3.5) (1.5,3) (3,3.5) (4,2.5) (4,0.5) (2.5,-2) (0,-1.5) (-2.5,-2) (-3,0.5)};
		
		\draw [->] (6,.5)--(1.3,.5);
		\node [right] at (6,.5) {$X$};
		
		\draw [->] (6,0)--(2.3,0);
		\node [right] at (6,0) {$X_\xi$};
		
		\node [below] at (2.32,0) {$\underbrace{}_{\xi}$};
	\end{tikzpicture}
	\caption{Schematic diagram illustrating $X_\xi$.}
	\label{fig:Xxi}
\end{figure}
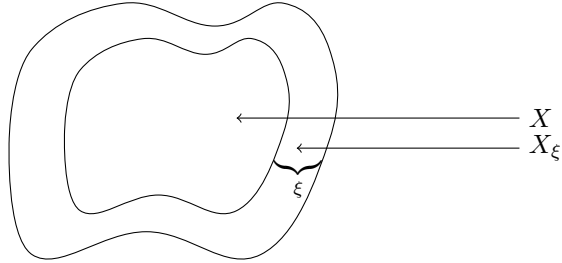
Finally, let
\begin{equation}\label{kappa}
	\kappa\equiv \kappa_0=\sup _{x\in\Lam} \sum_{y\in\Lam} \abs{h_{xy}}\abs{x-y}.
\end{equation}
{The number $\kappa$ bounds the norm of the $1$-particle group velocity operator $i[h,x]$,  see \remref{remk} below.}

\subsection{Maximal velocity bound}\label{sec:MVB}

The main result in this section gives an estimate on the {maximal velocity of propagation of particles into empty regions}. {Continuing with}  terminology of \cite{SigSof2, FLS,FLS2}, %{MR1029392}, 
we call such an estimate the \textit{maximal velocity bound} (MVB).

\begin{theorem}[MVB for lattice quantum many-body system]\label{thm1}
	Suppose \eqref{k-cond} holds with some $n\ge1$. Then, for every $c > \kappa$,
there exists $C=C(n,\kappa_n,c )>0$ s.th.~for all $\eta\ge1,\,X\subset \Lam$,
we have the following estimate for all $\abs{t}< \eta/c$:
	\begin{equation}
		\label{MVE} \al_t(N_{X_\eta^\cp})
		\le %\big(1+C\eta^{-1}\big) 
{C	(N_{X^\cp} +\eta^{-n} N)}.
	\end{equation}
\end{theorem}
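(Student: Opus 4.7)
The approach is an adiabatic ``moving cutoff'' argument that reduces the many-body inequality to a single-particle commutator estimate. Two preliminary observations frame the argument. (i) The pair interaction $V=\frac12\sum_{xy}v_{xy}a^*_xa^*_ya_ya_x$ commutes with every density observable $N_\chi:=\dG(\chi)=\sum_x\chi(x)a^*_xa_x$ (directly from the canonical commutation relations, since the 2-body term preserves site occupations pairwise), so
\[i[H,N_\chi]=i\,\dG([h,\chi]),\]
confirming that hypothesis \eqref{v-cond} plays no role and $v$ is unconstrained. (ii) Pick $c'\in(\kappa,c)$ and a smooth non-decreasing $\phi:\Rb\to[0,1]$ with $\phi=0$ on $(-\infty,0]$ and $\phi=1$ on $[1,\infty)$, and set
\[\chi_s(x):=\phi\!\left(\frac{d_X(x)-c'(t-s)}{\eta-c't}\right),\quad s\in[0,t],\ x\in\Lam.\]
Direct verification gives $\chi_0\ge\one_{X_\eta^\cp}$, $\chi_t\le\one_{X^\cp}$ and $\dot\chi_s\ge0$.

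For any $\om$ satisfying \eqref{g0-cond}, set $F(s):=\om(\al_{t-s}(N_{\chi_s}))$, so that $\om(\al_t(N_{X_\eta^\cp}))\le F(0)$ and $F(t)\le\om(N_{X^\cp})$. By the chain rule and $[H,N]=0$, the theorem reduces to the single-particle operator inequality
\[i[h,\chi_s]\le\dot\chi_s+C_n\,\eta^{-n}\,I\qquad\text{on }\hf,\]
which, passed to Fock via $\dG$ and integrated over $s\in[0,t]$ with $|t|<\eta/c$, yields \eqref{MVE} for expectations and hence (using the richness of admissible states) as an operator inequality. To prove the displayed bound, Taylor-expand $\phi$ around $g_s(x):=(d_X(x)-c'(t-s))/(\eta-c't)$ to order $n+1$: the $k$-th term has kernel $\phi^{(k)}(g_s(x))(d_X(y)-d_X(x))^k h_{xy}/(k!(\eta-c't)^k)$, whose operator norm is $\lesssim\|\phi^{(k)}\|_\infty\kappa_{k-1}\eta^{-k}$ by \eqref{k-cond}, while the remainder is $\lesssim\|\phi^{(n+1)}\|_\infty\kappa_n\eta^{-n-1}$. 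The leading ($k=1$) piece is symmetrized via the identity $AB=A^{1/2}BA^{1/2}+A^{1/2}[A^{1/2},B]$ applied to $A=\phi'(g_s)$, $B=i[h,d_X]$, giving the principal part $\phi'(g_s)^{1/2}\,i[h,d_X]\,\phi'(g_s)^{1/2}/(\eta-c't)$; since $\|i[h,d_X]\|\le\kappa$, this is bounded by $(\kappa/c')\dot\chi_s$, strictly below $\dot\chi_s$ because $c'>\kappa$, and thus absorbed.

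The main obstacle is the systematic control of all intermediate Taylor terms to obtain the full $\eta^{-n}$ decay, since a naive Taylor bound gives only $\eta^{-2}$ for the $k=2$ term. The fix is to combine the Taylor expansion with an iterated commutator expansion of each intermediate term: commuting a factor of $\phi^{(k)}(g_s)$ past $h$ gains an extra $\eta^{-1}$ at the cost of one order in the moment of $h$, so after sufficiently many iterations all subleading pieces are $O(\eta^{-n-1})$ and are controlled via \eqref{k-cond}. The crucial role of the strict inequality $c>\kappa$ is precisely to permit the absorption of the leading first-order Taylor term into $\dot\chi_s$, which would otherwise appear as an irreducible error of size $O(\eta^{-1})\gg O(\eta^{-n})$. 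The whole argument adapts the one-particle ARG propagation estimate (cf.~\cite{SigSof2,Skib,FLS2}) to the lattice setting with long-range hopping.
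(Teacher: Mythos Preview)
Your overall strategy coincides with the paper's: reduce to one particle via $[V,\dG(\chi)]=0$, introduce a smooth moving cutoff $\chi_s$, expand $i[h,\chi_s]$ by a symmetrized commutator/Taylor series, and absorb the symmetrized first-order piece $\sqrt{\phi'(g_s)}\,i[h,d_X]\,\sqrt{\phi'(g_s)}/(\eta-c't)\le(\kappa/c')\dot\chi_s$ using $\kappa<c'$. The endpoint comparisons $\chi_0\ge\one_{X_\eta^\cp}$ and $\chi_t\le\one_{X^\cp}$, and the role of the strict inequality $c>\kappa$, are also correctly identified.

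The gap is in the intermediate orders $k=2,\dots,n$. First a bookkeeping point: the displayed remainder must be $C\eta^{-(n+1)}$, not $\eta^{-n}$, since integrating over $s\in[0,t]$ with $t\lesssim\eta$ costs one power of $\eta$. More importantly, the pointwise operator inequality $i[h,\chi_s]\le\dot\chi_s+C\eta^{-(n+1)}I$ is \emph{false} for general $n$. After full symmetrization the $k$-th piece becomes, up to $O(\eta^{-(k+1)})$ corrections, a sum of terms $u\,B_k\,u/(\eta-c't)^k$ with $B_k=\mathrm{ad}^k_{d_X}(ih)$ and $u$ a bounded function supported in $\supp\phi'$; this has operator norm $\sim\|B_k\|\|u\|_\infty^2\,\eta^{-k}$, not $\eta^{-(n+1)}$. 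Iterated commutation only pushes the \emph{correction} from each symmetrization step down by $\eta^{-1}$; the main symmetrized pieces persist at order $\eta^{-k}$. Nor can they be absorbed into $\dot\chi_s$, since that would require $u^2\le C\phi'$, which fails at $\partial\,\supp\phi'$ where $\phi'\to0$ while $|\phi^{(k)}|$ need not vanish comparably fast, for any smooth $\phi$. What the paper exploits is the \emph{structure} rather than the size of these pieces: each $u^2$ is dominated by $(\xi^k)'$ for some $\xi^k$ in the same function class $\cX$ as $\chi$, yielding the recursive monotonicity inequality $d\chi_{ts}\le-\tfrac{c'-\kappa}{s}\chi'_{ts}+C\sum_{k=2}^ns^{-k}(\xi^k)'_{ts}+Cs^{-(n+1)}$ (here $s\sim\eta$). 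One integrates in time, drops the nonnegative $\chi_s(t)$, and observes that each integral $\int_0^t(\xi^k)'_s(r)\,dr$ has exactly the form of $\int_0^t\chi'_s(r)\,dr$ with $\xi^k$ replacing $\chi$. Reapplying the inequality to $\xi^k$ and substituting back---bootstrapping $n-1$ times---converts every intermediate integral into an initial-data term $s^{2-k}\xi^k_s(0)\le Cs^{2-k}\chi^\sharp_{X^\cp}$ plus the final remainder $ts^{-n}\lesssim\eta^{-n}$. This recursion on the \emph{integrated} estimate (the paper's \propref{prop:propag-est1}), rather than a single pointwise operator bound, is the missing mechanism.
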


Theorem \ref{thm1} is  proved in \secref{sec:pfthm1}, with an outline given in \secref{sec:1.6}. Here and below, an operator inequality $A\le B$ means that $\om(A)\le \om(B)$ for all states $\om$ satisfying \eqref{g0-cond}.
% for an appropriate $p$. 

Estimate \eqref{MVE} shows that, if the initial condition $\om$ satisfies \eqref{g0-cond} and is localized in $X$\DETAILS{ in the sense that $\om(N_{X^\cp})=0$}, then, up to polynomially vanishing probability tails, the particles propagate within the light cone $$X_{ct}\equiv \Set{d_X(x)\le ct}$$ for every fixed $c>\kappa$ and all $t$. More precisely,  if we assume the initial state satisfies
		\begin{equation}\label{emptyshellcond}
 \om (N_{X^\cp})=0,
		\end{equation}
	and use the observation, due to M.~Lemm, that under \eqref{emptyshellcond},
	\begin{equation}\label{NNX}
		\om(N^p)=\om(N_X^p)\qquad(p=1,2),
	\end{equation}
 we find, for all $\abs{t}<\eta/c$,
\begin{equation}\label{part-numb-est-om} 
\om_t(N_{X^\cp_\eta})=\om (\al_t (N_{X^\cp_\eta}))\le C\eta^{-n}\om (N_X).\end{equation}
Put differently, the probability that particles are transported from $X$ to any test (or probe) domain $Y$ outside the light cone $X_{ct}$  is of the order $O (\eta^{-n})$, where $\eta=\dist(X,Y)$. %$Y$ is small for all times $t$ satisfying $ct<\eta$. 

%\bigskip

%\bigskip

\subsection{Light-cone approximation of evolution \eqref{1.5}}\label{sec:2.2}
The main result of this section and the next one concerns {the evolution of general \textit{local observables.}}
We say that an operator $A$ acting on $\cF$ is \textit{localized} in $X\subset \Lam$ if \begin{equation}\label{A-loc}
	\sbr{A,a_x^\sharp}=0 \quad \forall  x\in X^\cp,
\end{equation} where $a_x^\#$ stands for either $a_x$ or $a_x^*$.
Denote by $\supp A$ the intersection of all  $X$ s.th.~\eqref{A-loc} holds. 
Then $A$ is localized in $X$ if and only if  $$\supp A\subset X.$$ The support of an initially localized observable generally spreads over the entire space immediately for any $t>0$. Nonetheless, in \thmref{thm2} below, we show that the evolution of local observables under \eqref{1.5} can be approximated by {a family of observables localized within the light cone of the initial support. }

To state our result, we introduce some notations. For a subset $X\subset \Lam$, define the \textit{localized evolution} of observables as
\begin{equation}\label{alloc}
	\al_t^X(A):=e^{itH_X}Ae^{-itH_X},
\end{equation}
where $H_X$ is defined by \eqref{1.1} but with $X$ in place of $\Lam$, and the set of operators
\begin{equation}\label{BX}
	\cB_X :=\Set{A\in\cB(\cF):[A,N]=0,\, \supp A\subset X},
\end{equation}
where  $\cB(\cF)$ is the space of bounded operators on $\cF$. One can check using  definitions \eqref{A-loc}, \eqref{alloc}, and the relation $[H_X,N]=0$ that, for all $A\in\cB_X$, the evolution   $\al_t^X(A)$ lies in $\cB_X$ for all $t$. 

The main result of this section is that  the full evolution $%A_t\equiv 
\al_t(A)$ can be well approximated by the localized evolution $\al_t^{X_\xi}(A)$, supported  inside the light cone of $\supp A$:
\begin{theorem}[Light-cone approximation of quantum evolution]\label{thm2}
		Suppose \eqref{k-cond}--\eqref{v-cond} hold with some $n\ge1$. Suppose a state $\om$ satisfies \eqref{g0-cond} and   \eqref{emptyshellcond},
with some $X\subset \Lam$.
{Then, for every $c>2\kappa$, there exists $C=C(n,\kappa_n,\nu_n,c)>0$ s.th.~for all }$\xi\ge1$ and operator $A\in \cB_X$ (see \eqref{BX}),   the full evolution $%A_t\equiv 
\al_t(A)$ is approximated by 
the local evolution $%A_t^\xi\equiv 
\al_t^{X_\xi}(A)$ for all $\abs{t}<\xi/c$, as
	\begin{equation}\label{lcae} %{eq:loc'}
\begin{aligned}
\abs{\om {	\del{\al_t (A)-\al_t^{X_\xi}(A)} }} \le C\abs{t}\xi^{-n}\Norm{A}\om(N_X^2).
\end{aligned}
	\end{equation}

\end{theorem}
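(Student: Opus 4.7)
The natural starting point is Duhamel's formula: differentiating the interpolating family $G(s) := \alpha_s(\alpha_{t-s}^{X_\xi}(A))$, which satisfies $G(0) = \alpha_t^{X_\xi}(A)$ and $G(t) = \alpha_t(A)$, yields
\begin{equation*}
	\alpha_t(A) - \alpha_t^{X_\xi}(A) = i\int_0^t \alpha_s\bigl([H - H_{X_\xi},\; \alpha_{t-s}^{X_\xi}(A)]\bigr)\,ds.
\end{equation*}
I would next decompose $H - H_{X_\xi}$ into (i) terms supported entirely in $X_\xi^\cp$, which commute with every observable localized in $X_\xi$ and so annihilate against $B := \alpha_{t-s}^{X_\xi}(A)\in\cB_{X_\xi}$, and (ii) ``boundary'' terms: the hoppings $h_{xy}a_x^*a_y$ with $x\in X_\xi$, $y\in X_\xi^\cp$ (plus adjoint), together with the quartic pieces $v_{xy}a_x^*a_y^*a_ya_x$ having at least one index in each region. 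Only (ii) contributes.

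For each boundary term the canonical commutation relations, combined with the identity $[a_y^\sharp,B]=0$ for $y\in X_\xi^\cp$, allow every $X_\xi^\cp$-factor to be pushed past $B$, so that $[\cdot,B]$ collapses to a single $[a_x^\sharp,B]$ at the lone $X_\xi$-site $x$, multiplied on the other side by creation/annihilation operators at the $X_\xi^\cp$-sites. Taking expectations in $\omega\circ\alpha_s$ and applying an operator Cauchy--Schwarz inequality separates these into a factor bounded by a harmless multiple of $\|A\|$ and a factor built from number operators at outside sites, schematically
\begin{equation*}
	|h_{xy}|\,\|A\|\,\omega(\alpha_s(N_y))^{1/2}\,(\text{companion factor})
\end{equation*}
for the hopping, and analogous quartic expressions carrying at most two outside-site factors for $v$.

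The last step is to sum over $x\in X_\xi$, $y\in X_\xi^\cp$ using the moment bounds \eqref{k-cond}--\eqref{v-cond} and invoke \thmref{thm1}. Summing $\omega(\alpha_s(N_y))$ over $y\in X_\xi^\cp$ gives $\omega(\alpha_s(N_{X_\xi^\cp}))$, which \thmref{thm1}, applied with a speed strictly between $\kappa$ and the $c$ of the statement, controls by $C\xi^{-n}\omega(N)$ for $|s|\le|t|<\xi/c$; the identity \eqref{NNX} then converts $\omega(N)$ into $\omega(N_X)$. For the quartic interaction the Cauchy--Schwarz pairing leaves one extra power of $N$ alongside a number operator at an $X_\xi^\cp$-site, so \thmref{thm1} is applied to the mixed expectation $\omega(\alpha_s(N_{X_\xi^\cp}N))$, yielding the advertised $\omega(N_X^2)$ on the right, while the $s$-integration contributes the factor $|t|$, producing \eqref{lcae}.

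The main technical difficulty will be the quartic commutator: it must be rearranged---via careful reordering with the CCR and a judicious Cauchy--Schwarz pairing---so that every surviving dangerous expectation contains at least one number operator supported in $X_\xi^\cp$ (so that \thmref{thm1} applies) while the companion factor remains a polynomial in $N$ of degree no greater than two (so that the hypothesis $\omega(N^2)<\infty$ suffices). One also has to verify that $\omega\in\cD$ and $\omega(N^2)<\infty$ propagate along $\alpha_s$ on $|s|\le|t|$, so that the Duhamel identity and the subsequent manipulations are rigorously justified.
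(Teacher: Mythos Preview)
Your Duhamel formula and boundary decomposition match the paper's approach exactly, and you are right that the quartic term is where the second power of $N$ enters. But there is a genuine gap in your Cauchy--Schwarz step.

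After Cauchy--Schwarz, the hopping term $h_{xy}a_x^*a_y$ with $x\in X_\xi,\,y\in X_\xi^\cp$ does \emph{not} split into ``a harmless multiple of $\|A\|$'' times an outside number operator. Writing $B=A_{t-r}^\xi$ and expanding $|\langle\varphi_r,a_x^*a_yB\varphi_r\rangle|\le\|a_x\varphi_r\|\,\|a_yB\varphi_r\|$, the first factor is $\omega(\alpha_r(n_x))^{1/2}$ at an \emph{inside} site $x\in X_\xi$; only the second carries $n_y$ with $y\in X_\xi^\cp$. Summing, the outside factor is controlled by Theorem~\ref{thm1} and gives $\xi^{-n}$, but the inside factor is bounded only by $\omega(N)$, so the product yields $\xi^{-n/2}$, not $\xi^{-n}$.

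The paper closes this gap by an annular decomposition of $X_\xi=X_{(1-\gamma)\xi}\cup X_{(1-\gamma)\xi,\xi}$ for small $\gamma>0$: for $x$ in the core $X_{(1-\gamma)\xi}$ one has $|x-y|\ge\gamma\xi$, so the moment condition \eqref{k-cond} extracts an explicit $(\gamma\xi)^{-n}$; for $x$ in the annulus $X_{(1-\gamma)\xi,\xi}$ one has $d_X(x)\ge(1-\gamma)\xi$, so $n_x$ is itself outside $X$ and the MVB applies to it as well. Thus \emph{both} Cauchy--Schwarz factors acquire the full $\xi^{-n}$ decay (or land on $N_{X_{2\xi}\setminus X}$, which vanishes under \eqref{emptyshellcond}). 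The same trick is needed for the term $\langle\varphi_r,A_s^\xi n_x(A_s^\xi)^*\varphi_r\rangle$: here one must first undo $\alpha_s^{X_\xi}$, commute $A$ past the annular number operator (possible since $\supp A\subset X$), and apply the MVB a second time to $\alpha_{-s}^{X_\xi}$. This double application of Theorem~\ref{thm1} over $r+s=t$ is precisely why the hypothesis is $c>2\kappa$ rather than $c>\kappa$, a point your sketch does not account for.
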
 
\thmref{thm2} follows from \thmref{thm2-gen}.
We  sketch the proof of \thmref{thm2}  in \secref{sec:4.1}.

%While preparing this paper for submission, jointly with M. Lemm (\cite{LSZ}) we managed, by a simple multiscale argument, to remove the factor $\om(N^2)$ in \eqref{lcae}. Consequently, this automatically removes %releases
% this factor from the estimates in Theorems \ref{thm3} and \ref{thm5}-\ref{thm2-gen}.

 \subsection{Lieb-Robinson-type bounds}\label{sec:LRB}

Using \thmref{thm2}, we prove a Lieb-Robinson-type bound 
% (c.f.\cite{MR312860}), which is valid for possibly unbounded observables with finite degrees. To our knowledge, this is the first Lieb-Robinson bound 
for general interacting quantum many-body systems:
%\DETAILS{Should include \cite{MR3964149} and the cited works in section 3 of this work}

\begin{theorem}[Weak Lieb-Robinson bound] 
	\label{thm3}
Suppose the assumptions of \thmref{thm2} hold with $n\ge1,\,X\subset\Lam$. {Then, for every $c>2\kappa$, there exists $C=C(n,\kappa_n,\nu_n,c)>0$ s.th.~for all }$\xi\ge1$,  $Y\subset\Lam$ with $\dist(X, Y)\ge2\xi$, and  operators $A\in \cB_X,\,B\in \cB_Y$, 
we have the following estimate for all $\abs{t}< \xi/c$:
\begin{align}
		\abs{\om \del{[ \al_t (A), B] }}  
	\le C\abs{t}\Norm{A}\Norm{B}\xi^{-n}\om(N^2_X).\label{LRB}
	\end{align}
\end{theorem}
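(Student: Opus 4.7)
My plan is to reduce Theorem~\ref{thm3} to Theorem~\ref{thm2} by combining a disjoint-support observation with a polarization trick on a ``rotated'' state, thereby absorbing the extra factor $\|B\|$.

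First, $\dist(X,Y)\ge 2\xi$ implies $\dist(X_\xi,Y)\ge \xi\ge 1$, so $X_\xi\cap Y=\emptyset$. Since $\al_t^{X_\xi}(A)\in\cB_{X_\xi}$ and $B\in\cB_Y$ have disjoint supports, they commute, hence
\begin{equation*}
[\al_t(A),B]=[D,B],\qquad D:=\al_t(A)-\al_t^{X_\xi}(A).
\end{equation*}
Expanding $\om([D,B])=\om(DB)-\om(BD)$, it suffices to bound each of $|\om(DB)|$ and $|\om(BD)|$ by $C|t|\xi^{-n}\|A\|\|B\|\om(N_X^2)$.

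To control these, I will employ a state-rotation trick. Noting that $I\in\cB_Y$, for any $G\in\cB_Y$ with $\om(G^*G)>0$ set $\tilde\om(\,\cdot\,):=\om(G^*\,\cdot\,G)/\om(G^*G)$, associated to the density operator $\tilde\rho:=G\rho G^*/\om(G^*G)$, which is positive and normalized. I claim $\tilde\om$ satisfies the hypotheses \eqref{g0-cond} and \eqref{emptyshellcond} of Theorem~\ref{thm2} with the \emph{same} $X$. Indeed, the combination of spatial locality ($\supp G\subset Y\subset X^\cp$) and particle-number conservation ($[G,N]=0$) encoded in $\cB_Y$ forces $[G,N_X]=[G,N_{X^\cp}]=0$, whence
\begin{equation*}
\tilde\om(N_{X^\cp})=\om(N_{X^\cp}G^*G)/\om(G^*G)=0 \quad\text{and}\quad \tilde\om(N_X^2)\le \|G\|^2\om(N_X^2)/\om(G^*G),
\end{equation*}
using $\rho N_{X^\cp}=0$ for the first identity. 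Theorem~\ref{thm2} applied to $\tilde\om$ and multiplying through then give the \emph{diagonal} estimate
\begin{equation*}
|\om(G^*DG)|\le C|t|\xi^{-n}\|A\|\|G\|^2\om(N_X^2),
\end{equation*}
valid for all $G\in\cB_Y$ (the case $\om(G^*G)=0$ being trivial).

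Finally, I polarize. The sesquilinear form $\Phi(G_1,G_2):=\om(G_1^*DG_2)$ on $\cB_Y$ inherits the above diagonal bound, so the standard identity $4\Phi(G_1,G_2)=\sum_{k=0}^3 i^{-k}\Phi(G_1+i^kG_2,G_1+i^kG_2)$ combined with the triangle inequality yields $|\Phi(G_1,G_2)|\le C'|t|\xi^{-n}\|A\|(\|G_1\|+\|G_2\|)^2\om(N_X^2)$. Choosing $(G_1,G_2)=(B^*/\|B\|,I)$ and $(I,B/\|B\|)$ and rescaling homogeneously gives $|\om(BD)|,\,|\om(DB)|\le C''|t|\xi^{-n}\|A\|\|B\|\om(N_X^2)$, and combining with $\om([D,B])=\om(DB)-\om(BD)$ yields \eqref{LRB}. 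The main obstacle is verifying that the rotated state $\tilde\om$ still satisfies \eqref{emptyshellcond}: this relies critically on \emph{both} features of $\cB_Y$, the spatial locality $\supp G\subset Y\subset X^\cp$ \emph{and} the particle-number conservation $[G,N]=0$. Without the latter, a generic creation-type $B$ could move particles into $X^\cp$, breaking the empty-shell assumption needed to invoke Theorem~\ref{thm2}.
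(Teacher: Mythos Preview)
Your argument is correct and takes a genuinely different route from the paper's. Both proofs begin identically, observing that $\al_t^{X_\xi}(A)$ and $B$ commute so that $[\al_t(A),B]=[\Rem_t(A),B]$. From there the paper invokes the stronger approximation result, Theorem~\ref{thm2-gen}, whose estimate \eqref{lcae-gen} already carries the factor $B$ on the left and yields $|\om(B\,\Rem_t(A))|$ directly; the adjoint trick then handles $|\om(\Rem_t(A)\,B)|$. Your proof instead treats Theorem~\ref{thm2} as a black box applied to the rotated state $\tilde\om(\cdot)=\om(G^*\cdot G)/\om(G^*G)$, and recovers the off-diagonal terms $\om(BD)$, $\om(DB)$ by polarization against $I\in\cB_Y$. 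The verification that $\tilde\om$ inherits the empty-shell condition \eqref{emptyshellcond} is exactly right and, as you note, uses both defining properties of $\cB_Y$.

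What each approach buys: the paper's route is shorter once Theorem~\ref{thm2-gen} is in hand, and it makes transparent that the only state-dependence enters through $\om(N_{X_{2\xi}\setminus X}N)$ and $\om(N^2)$, which then collapse under \eqref{emptyshellcond}. Your route is more self-contained at the level of the stated main theorems, since it never unpacks the proof of Theorem~\ref{thm2} to insert a $B$; the polarization device is a clean way to pass from diagonal to sesquilinear bounds and would transfer to other settings where only a ``$B=\one$'' light-cone approximation is available. One minor technical point: you should check that $\tilde\rho=G\rho G^*$ lies in the domain $\cD$ of \eqref{domain_def} (equivalently, that $G$ preserves $\cD(H)$); this uses $[G,N]=0$ together with the block structure $H=\oplus_n H_n$ with each $H_n$ bounded, and is at the same level of rigor as the corresponding step in the paper's own Step~4 of Appendix~\ref{sec:4.11}.
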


\thmref{thm3} is proved in \secref{sec:6}. We call a bound of the form \eqref{LRB} the \textit{weak Lieb-Robinson bound (LRB)}.  Unlike the classical LRB, estimate \eqref{LRB} depends on a subclass of  states 
and provides power-law, rather than exponential, decay.

	Estimate \eqref{LRB} shows that, with the probability approaching $1$ as $t\ra \infty$, an evolving family of observable $A_t=\al_t(A)$ remains commuting with any other observable supported outside the light cone \[\{x\in \Lambda \,|\, \dist(x, \supp A)\le c t\},\] for any fixed {$c>2\kappa$,} provided the supports of these observables are separated  {by initially empty regions}. \DETAILS{This implies that the maximal speed of quantum propagation for \eqref{1.5} is bounded  (up to an absolute constant) by the number $\kappa$ defined in \eqref{kappa}.}

\subsection{Propagation/creation of correlations}\label{sec:cor}

In this section we address the following questions (c.f.~\cite{BHV,NachOgS}): 
\begin{itemize}
	\item {Assuming the initial state $\om$ is weakly correlated in a domain $Z^\cp\subset \Lam$, %has short range correlations, 
	how long does it take for the correlations in $Z$ to spread, under the evolution \eqref{1.5}, into $Z^\cp$? Put differently, how long does it take to create correlations in $Z^\cp$?}

\end{itemize}

To begin with, we define what we mean by weakly correlated states.

\begin{definition}\label{def:WC'}

	Let  $Z\subset \Lam$. For subsets $X,\,Y\subset \Lam$,  let $d_{XY}:=\dist(X,Y)$ and  
	\begin{equation}\label{dXYZ}
{		d_{XY}^Z:=\min (d_{XY}, d_{XZ}, d_{YZ}).}
	\end{equation}
We say a state $\om$ is {\it weakly correlated  in a subset $Z^\cp$ at a scale} $\l>0$, or $\WC(Z^\cp, \l)$, if there exists $C>0$ s.th.~for all subsets $X,\,Y\subset Z^\cp$ with {$d_{XY}^Z>0$} and operators $A\in\cB_X,\,B\in\cB_Y$ (see \eqref{BX}), the following estimate holds:
\begin{equation}\label{221'}
	|\om^c(AB)|\le C\norm{A}\norm{B} (d_{XY}^Z/\l)^{1-n}, 
\end{equation}
	where $\om^c(A,B):=\om(AB)-\om(A)\om(B)$ is a ($2$-point) {\it connected correlation function}. %Similarly, we define WC-$d$ in a subset $Z\subset \Lam$ by requiring $X,\,Y\subset Z$ in the preceding definition and replacing $d_{XY}$ by the distance $d_{XY}^Z$ in $\cL$ in the Riemannian metric vanishing on $Z$. 
	
%	\DETAILS{If we shift $n\to n-1$, then every estimate that carries a power of $n$ will be changed to $n+1$. This is a very tedious process and will make it harder  for the readers to trace parameters in equations.}

\end{definition}

%{Note that $d_{XY}^Z= 0$ if either $X$ or $Y$ intersects (the possibly highly correlated region) $Z$ nontrivially.} For bounded operators $A,B$, the quantities $\abs{\om(A)},\,\abs{\om(B)},$ $\abs{\om(AB)}<\infty$ and therefore $\om^c(A,B)$ is well-defined.

% For short-range interactions, \eqref{221} holds for all $n$.
  Clearly,  $\l$ plays the same role as the correlation length for exponentially decaying correlations. The main result of this section, proved in \secref{sec:pfthm8.2}, shows that the maximal speed for the propagation/creation of correlations  is bounded by $3\kappa$:
\begin{theorem}[Propagation/creation of correlation]\label{thm5}
%Suppose the assumptions of \thmref{thm2} hold  for some  constant $\eta\ge 1$.
		Suppose \eqref{k-cond}--\eqref{v-cond} hold with some $n\ge1$. 	Let  $Z \subset \Lam$ and suppose the initial state $\om$ is $\WC(Z^\cp, \l)$  and satisfies \eqref{g0-cond} and
	\begin{equation}\label{emptyshellcond'}
		\om(N_{Z^\cp})=0.
	\end{equation}	 Then,  $\om_t$ is  $\WC(Z^\cp, 3\l)$ %in $Z^\cp_{\kappa t}\equiv \{d_Z(x)\ge \kappa t\}$
	for %every fixed $c>4\kappa$  and
	all $ \abs{t}<\l/3\kappa$; specifically, for any operators $A\in\cB_X,\,B\in\cB_Y$	supported in $X,\,Y\subset Z^\cp$ with $d_{XY}^Z>0$ and for $\abs{t}<\l/3\kappa$,
\begin{equation}\label{221}
	|\om^c_t(AB)|\le C\norm{A}\norm{B} (d_{XY}^Z/3\l)^{1-n}\om(N_Z^2) ,
\end{equation}	  For short-range (i.e.~exponentially decaying) interactions, \eqref{221} holds %with $\l$ replaced by $3\l$ and
for all $n\ge1$. 
\end{theorem}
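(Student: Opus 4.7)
Set $\xi := d_{XY}^Z/3$. Since $d_{XY}, d_{XZ}, d_{YZ} \geq 3\xi$, a direct check gives $X_\xi \cap Y_\xi = \emptyset$, $\dist(X_\xi, Y_\xi) \geq \xi$, and $\dist(X_\xi, Z), \dist(Y_\xi, Z) \geq 2\xi$; in particular $X_\xi, Y_\xi \subset Z^\cp$ and $d_{X_\xi Y_\xi}^Z \geq \xi$. The plan is to approximate $\alpha_t(A), \alpha_t(B)$ by the localized evolutions $A' := \alpha_t^{X_\xi}(A) \in \Bc_{X_\xi}$ and $B' := \alpha_t^{Y_\xi}(B) \in \Bc_{Y_\xi}$, apply the $\WC(Z^\cp,\lambda)$ hypothesis on $\omega$ to the pair $(A', B')$, and control the remainder using \thmref{thm2} and \thmref{thm1}. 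Note that $\|A'\| = \|A\|$ and $\|B'\| = \|B\|$, while $[A', B'] = 0$ since $A', B'$ have disjoint bosonic supports.

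Decompose $\omega^c_t(A, B) = \omega^c(A', B') + R_1 - R_2$, where $R_1 := \omega_t(AB) - \omega(A'B')$ and $R_2 := \omega_t(A)\omega_t(B) - \omega(A')\omega(B')$. Using $d_{X_\xi Y_\xi}^Z \geq \xi$, the $\WC(Z^\cp,\lambda)$ hypothesis applied to $A', B'$ yields the main-term bound
\[
|\omega^c(A', B')| \leq C\|A\|\|B\|(d_{XY}^Z/(3\lambda))^{1-n}.
\]
It remains to bound $R_1, R_2$ by $C|t|\xi^{-n}\|A\|\|B\|\omega(N_Z^2)$. Splitting
\[
R_1 = \omega((\alpha_t(A) - A')\alpha_t(B)) + \omega(A'(\alpha_t(B) - B'))
\]
and similarly $R_2 = (\omega_t(A) - \omega(A'))\omega_t(B) + \omega(A')(\omega_t(B) - \omega(B'))$, each summand has the form $\omega(FC)$ with $F$ a single-observable light-cone leakage and $\|C\| \leq \|A\|$ or $\|B\|$. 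A generalization of \thmref{thm2}---obtained by carrying $C$ as a passive factor through the Duhamel expansion for $\alpha_t(A) - A'$---reduces each term to an integrand of the form $\omega(\alpha_s([H - H_{X_\xi}, \alpha_{t-s}^{X_\xi}(A)])\,C)$ for $s \in [0,t]$. Only the boundary terms of $H - H_{X_\xi}$ contribute (the exterior terms commute with $\alpha_{t-s}^{X_\xi}(A) \in \Bc_{X_\xi}$), and each such term produces a creation/annihilation at a site of $X_\xi^\cp$ at distance $\geq 2\xi$ from $Z$. Combining $\omega(N_{Z^\cp}) = 0$, the coupling-decay conditions \eqref{k-cond}--\eqref{v-cond}, and the MVB \thmref{thm1} (applied with initial support $Z$ and buffer $2\xi$) produces the factor $\xi^{-n}\omega(N_Z^2)$. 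Hence $|R_1| + |R_2| \leq C|t|\xi^{-n}\|A\|\|B\|\omega(N_Z^2)$, uniformly for $|t| < \lambda/(3\kappa) \leq \xi/(3\kappa)$. Together with the main-term bound this proves \eqref{221}.

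The main obstacle is the bilinear expectation bound underlying the error control: \thmref{thm2} is a one-point estimate, and its adaptation to an inserted-factor form is what yields the correct particle-number weight $\omega(N_Z^2)$---rather than the spurious $\omega(N_X^2) = 0$ that would follow from literal application of \thmref{thm2} with $X$ taken to be $\supp A$, since $X \subset Z^\cp$ has no particles. This is precisely where the empty-shell condition $\omega(N_{Z^\cp}) = 0$ is used, via the MVB-controlled flux of particles initially confined to $Z$ into the buffer of $X$. For short-range (exponentially decaying) couplings, \eqref{k-cond}--\eqref{v-cond} hold for every $n \geq 1$, so the same argument yields \eqref{221} uniformly in $n$.
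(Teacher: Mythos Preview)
Your approach is essentially the same as the paper's: set $\xi = d_{XY}^Z/3$, replace $\alpha_t(A),\alpha_t(B)$ by the localized evolutions $A_t^\xi,B_t^\xi$, apply the $\WC(Z^\cp,\lambda)$ hypothesis to the leading term, and control the remainders via a bilinear light-cone approximation combined with the empty-shell condition $\omega(N_{Z^\cp})=0$. The ``generalization of \thmref{thm2}'' you invoke---carrying a passive bounded factor through the Duhamel expansion---is stated explicitly in the paper as \thmref{thm2-gen} (estimate \eqref{lcae-gen}), and the paper uses it in exactly the way you describe: since $X_{2\xi},Y_{2\xi}\subset Z^\cp$, the term $\omega(N_{X_{2\xi}\setminus X}N)$ in \eqref{lcae-gen} vanishes by \eqref{emptyshellcond'} and \eqref{2.32s}, leaving $\xi^{-n}\omega(N^2)=\xi^{-n}\omega(N_Z^2)$ via \eqref{NNX}.

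One point to tighten: in your splitting $R_1 = \omega((\alpha_t(A)-A')\alpha_t(B)) + \omega(A'(\alpha_t(B)-B'))$, the first summand has passive factor $\alpha_t(B)$, which is \emph{not} localized and hence does not satisfy the hypothesis $B\in\cB_Y$ of \thmref{thm2-gen}. The paper's decomposition \eqref{8.1s} has the same feature. The clean fix is a three-term split $\omega(A_tB_t)-\omega(A'B') = \omega((A_t-A')B') + \omega(A'(B_t-B')) + \omega((A_t-A')(B_t-B'))$: the first two terms now have localized passive factors $B'\in\cB_{Y_\xi}$ and $A'\in\cB_{X_\xi}$ (with the required separations $\dist(X,Y_\xi),\dist(Y,X_\xi)\ge 2\xi$), while the cross term is handled by the underlying sesquilinear estimate \eqref{RemFormEst}, which requires no localization of $\varphi,\psi$, together with $[{\rm Rem}_t(B),N]=0$ and $N_{1,\xi}\le N$.
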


%This theorem is proved in \secref{sec:pfthm8.2}.
For the second statement, we note that for short-range interactions, conditions \eqref{k-cond}--\eqref{v-cond} are valid for all $n$.

\subsection{Constraint on the propagation of quantum signals}\label{sec:6.1}

The weak LRB  \eqref{LRB} imposes a direct constraint on the propagation of information through the quantum channel defined by the time evolution $\al_{t}'$ of quantum states (see e.g.~\cite{BHV, FLS2, Pou}). For example,  assume that Bob at a location $Y$ is in possession of a state $\rho$ and an observable $B$ and would like to send a signal through the quantum channel $\alpha_{t}'$ to Alice who is at $X$ and who possesses the same state $\rho$ and an observable $A$. {To send a message}, Bob uses $B$ as a Hamiltonian to evolve $\rho$ for a time   $r>0$, and then sends Alice the resulting state $\rho_r=\tau_{r}(\rho)$, where $\tau_{r}(\rho):=e^{-iBr}\rho e^{iBr}$, %by the quantum channel,
as $\alpha_{t}'(\rho_r)$. %To send the message ``$0$'', Alice simply sends $\alpha_{t}'(\rho)$. 
To see whether Bob sent his message, %``$0$'' or ``$1$'', 
Alice computes the difference between the expectations of $A$ in the states $\al_{t}'(\rho_r)$ and $\al_{t}'(\rho)$, which we call the signal detector, SD: %Using the approximation $\rho_r=e^{-iAr}\rho e^{iAr}\approx \rho - r i[A, \rho]$, this gives
\begin{align} \label{CommEstPhys'} \SD(t,r):=\tr\big[A\al_{t}'(\rho_r)-A\al_{t}'(\rho) \big].&
	%\approx  r\tr\big[B\al_{t}'( i[\rho, A])\big]&= r\tr\big( i[A,\al_{t}'(B)] \rho\big).
	%=\int_0^r\tr\big[A\al_{t}'(\tau_{s}( i[\rho, B]))\big]\,ds. %\notag\\&= r\tr\big( i[A,\al_{t}'(B)] \rho\big). 
\end{align} %checked 2.7
The main result of this section gives an upper bound on this difference:

\begin{theorem}\label{thm6}

Let the assumptions of \thmref{thm2} hold with $n\ge1,\,{X\subset\Lam}$ and $\om(\cdot)=\Tr((\cdot)\rho)$.
%Let $B\in \cB_X$ (see \eqref{BX}) and suppose the operator kernel of $A$ satisfies  \eqref{k-cond} with $n\ge1$. 
{{Then, for every $c>4\kappa$, there exists $C=C(n,\kappa_n,\nu_n,c)>0$ s.th.~for all } $\xi\ge2$,  $X,\,Y\subset\Lam$  with {$\dist(X, Y)\ge2\xi$,} and  operators $A\in\cB_X$, $B\in \cB_Y$, with the operator kernel of $B$ satisfying  \eqref{k-cond} with $n\ge1$,}
 we have the following estimate for all $r,\abs{t}< \xi/c$:
	\begin{equation}\label{7.2}
		\abs{	\SD(t,r)}\le Cr\abs{t}\xi^{-n}\Norm{A}\Norm{B}\Tr(N_X^2\rho).	\end{equation}
\end{theorem}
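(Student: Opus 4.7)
The plan is to rewrite the signal detector $\SD(t,r)$ as a single time integral of commutators that fit the hypotheses of the weak Lieb--Robinson bound (\thmref{thm3}), and then to apply that bound under the integral. Passing to the Heisenberg picture $A_t := \al_t(A)$ via cyclicity of the trace gives $\SD(t,r)=\Tr[A_t(\tau_r(\rho)-\rho)]$, and Duhamel's formula for the Bob evolution, $\tfrac{d}{ds}\tau_s(\rho)=-i[B,\tau_s(\rho)]$, converts this into
\[
	\SD(t,r)=-i\int_0^r\tilde\om_s\bigl([\al_t(A),B]\bigr)\,ds,\qquad \tilde\om_s(O):=\Tr\bigl(O\,\tau_s(\rho)\bigr).
\]

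Next, I would verify that every auxiliary state $\tilde\om_s$ satisfies the hypotheses \eqref{g0-cond} and \eqref{emptyshellcond} of \thmref{thm3}. Since $B\in\cB_Y$ commutes with $N$, and since $\dist(X,Y)\ge 2\xi>0$ forces $X\subset Y^\cp$, the operator $B$ commutes with $a_x^\sharp$ for every $x\in X$; combined with the decomposition $N_{X^\cp}=N_Y+N_{X^\cp\setminus Y}$ and $X^\cp\setminus Y\subset Y^\cp$, this gives $[B,N_X]=[B,N_{X^\cp}]=0$. Hence conjugation by $e^{iBs}$ preserves the expectations of $N$, $N_X$, and $N_{X^\cp}$, so that
\[
	\tilde\om_s(N_{X^\cp})=\om(N_{X^\cp})=0,\quad \tilde\om_s(N^2)=\om(N^2)<\infty,\quad \tilde\om_s(N_X^2)=\om(N_X^2);
\]
boundedness of $B$ together with $[B,N]=0$ further makes $e^{\pm iBs}$ preserve $\cD(H)$ sector-by-sector on the particle-number decomposition of $\cF$, so $\tilde\om_s\in\cD$.

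Finally, for each fixed $s\in[0,r]$, I would apply \thmref{thm3} to the state $\tilde\om_s$ with the same operators $A,B$ and the same $\xi$. Since $c>4\kappa$, one may choose $c'\in(2\kappa,c)$, and then $|t|<\xi/c<\xi/c'$ satisfies the time condition of \thmref{thm3} for $c'$; the resulting estimate is
\[
	|\tilde\om_s([\al_t(A),B])|\le C|t|\,\|A\|\,\|B\|\,\xi^{-n}\,\tilde\om_s(N_X^2)=C|t|\,\|A\|\,\|B\|\,\xi^{-n}\,\om(N_X^2).
\]
Integrating this pointwise bound against $ds$ over $[0,r]$ produces \eqref{7.2}. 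The step I expect to require the most care is the preservation of the ``empty-shell'' condition \eqref{emptyshellcond} and the number moments under the Bob evolution $\tau_s$, so that the right-hand side of the weak LRB remains uniform in $s\in[0,r]$; as outlined, this boils down to the commutation relations $[B,N_X]=[B,N_{X^\cp}]=0$, which are built into $B\in\cB_Y$ together with the separation $\dist(X,Y)\ge 2\xi$.
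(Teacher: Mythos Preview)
Your proof is correct and takes a genuinely simpler route than the paper's. Both arguments begin identically, rewriting $\SD(t,r)$ as $\int_0^r\tilde\om_s(i[B,\al_t(A)])\,ds$ with $\tilde\om_s(\cdot)=\Tr((\cdot)\tau_s(\rho))$. From there the paper does \emph{not} invoke \thmref{thm3} directly: instead it applies the state-independent estimate \eqref{lcae-gen} of \thmref{thm2-gen} (with $\xi'=\xi/2$, $c'=c/2$), which produces a term $\tilde\om_s(N_{X_{\xi/2,3\xi/2}}N)$ that must then be propagated back to time $0$ using \corref{cor3} for the evolution generated by $B$. This last step is precisely what forces the extra hypothesis that the kernel of $B$ satisfy \eqref{k-cond}, and it is also the origin of the thresholds $c>4\kappa$ and $\xi\ge2$. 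Your observation that $[B,N_X]=[B,N_{X^\cp}]=0$ (via $[B,N]=0$, $[B,N_{Y^\cp}]=0$, and $N_{X^\cp}=N_Y+N_{X^\cp\setminus Y}$ with $X^\cp\setminus Y\subset Y^\cp$) shows that the empty-shell condition \eqref{emptyshellcond} and the moment $\om(N_X^2)$ are exactly preserved under $\tau_s$, so \thmref{thm3} applies to each $\tilde\om_s$ with a constant independent of $s$. Your argument therefore never uses the kernel hypothesis on $B$ and would in fact go through with the weaker thresholds $c>2\kappa$, $\xi\ge1$; the paper's route, by contrast, trades this simplicity for a demonstration of how \corref{cor3} controls local number operators under auxiliary evolutions.
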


The proof of this theorem is found in \secref{sec:pfthm6}.

\subsection{Bound on quantum state control} %transfer
\label{sec:6.2}
In this section, we  derive a bound on the information-theoretic task of %state transfer/
state control. {For any subset $S\subset \Lam$, we denote by  $\cF_{S}$  the  Fock space over the one-particle Hilbert space $\ell^2(S)$, see Appendix \ref{sec:Fock-sp} for the definitions and discussions. Due to the tensorial structure $\cF \simeq \cF_Y\otimes \cF_{Y^\cp}$ (see \eqref{F-factor}), we can define  the partial  trace $\mathrm{Tr}_{\cF_{Y^c}}$ over ${\cF_{Y^c}}$, e.g. by the equation $\mathrm{Tr}_{\cF_{Y}}(A\mathrm{Tr}_{\cF_{Y^c}}\rho)=\mathrm{Tr}((A\otimes \one_{\cF_{Y^c}})\rho)$ for every bounded operator $A$ acting on $\cF_{Y}$. This allows one to define  a {\it restriction} of a state $\rho$ to the density operators on the local Fock  space $\cF_{Y}$, $Y\subset \Lam,$ %is defined as
	by $\rho_Y:= \mathrm{Tr}_{\cF_{Y^c}} \rho$. %, where %$\mathrm{Tr}_{Y^c}:$ (states on $\Lam)  \ra$ (states on $Y$)
%	$\mathrm{Tr}_{Y^c} \gamma$ is the {\it trace over ${\cF_{Y^c}}$} %\footnote{} $\mathrm{Tr}_{Y^c}$ is a partial trace 
	%defined e.g. by requiring $\mathrm{Tr}_{\cF_{Y}}(A\mathrm{Tr}_{\cF_{Y^c}}\gamma)=\mathrm{Tr}((A\otimes \one_{\cF_{Y^c}})\gamma)$ for every $A$ acting on $\cF_{Y^\cp}$.
	}
	
	Let  $\tau$ be a quantum map (or {\it state control
		map}) %unitary operator $A$  on a region
	supported in $X$. % and denote 	
	Given  a density operator $\rho$, our task is to design $\tau$ so that  at some time $t$, the evolution $\rho^\tau_t:=\al_t(\rho^\tau)$  of  the density operator $\rho^\tau:= \tau(\rho)$  has the restriction $[\rho^\tau_t]_{Y}$ to $S(\cF_{Y})$, which is close to a desired state, say $\s$.

	To measure the success of  the  transfer operation, one can use the figure of merit  
	\begin{equation}\label{fom1}
		F([\rho^\tau_t]_Y, \s),
	\end{equation}
	where $F(\rho,\sigma)=\|\sqrt{\rho}\sqrt{\sigma}\|_{S_1}$ is the {\it fidelity}. 
	%\medskip
	Here $\|\rho\|_{S_1}$ denotes the Schatten $1$-norm.
	Note that $0\le F(\rho,\sigma)\le 1$, with $F(\rho,\sigma)=1$ if and only if $\rho=\sigma$. If $\s=\abs{\phi\rangle\langle\phi}$, then $F(\rho,\si)=\sqrt{\br{\phi,\rho\phi}}$ and therefore
	$F(\rho,\sigma)=0$ if  $\rho\phi=0$.
	So, one would like to find $\tau$ maximizing \eqref{fom1}. %This leads
	Using this figure of merit, one might be able to estimate the upper bound on the state transfer time.
	
	On the other hand, to show that the state  transfer is impossible in a given time interval, we would compare $\rho^\tau_t$ and $\rho_t:=\alpha_t(\rho)$ by using  (c.f.~\cite{EpWh,FLS2}) \begin{equation}\label{fom2}
		F([\rho^\tau_t]_Y, [\rho_t]_Y),
	\end{equation}
	as a figure of merit, and try to show that it is close to $1$ for $t\le t_*$ and for all state preparation (unitary) maps  $\tau$   localized in $X$. If this is true, then clearly using $\tau$'s   localized in $X$ does not affect states in $Y$.

\DETAILS{Following \cite{EpWh}, we use the figure of merit to quantify the success of the state control
	\begin{equation}\label{eq:fom}
		F(\mathrm{Tr}_{Y^\cp} \alpha_t(\rho),\mathrm{Tr}_{Y^\cp} \alpha_t(A\rho A^*))
	\end{equation}
	where $F(\xi,\si)$ is given, for non-negative, trace class operators $\xi,\si,$, by
	$$F(\xi,\sigma):=\norm{\sqrt{\si}\sqrt{\xi}}_{S^1}=\Tr\del{\sqrt{\sqrt\xi \si\sqrt \xi})}$$ is the fidelity, where $\norm\cdot_{S^1}$ denotes the Schatten $1$-norm. 
	For  two normalized density operators $\xi,\si$,  the fidelity $F(\xi,\si)=1$ if and only $\xi=\sigma$. 
	%As explained in \cite{EpWh}, the availability to effect quantum state transfer on general input states, in particular orthogonal states, requires the fidelity in \eqref{eq:fom} to be small.
	{Clearly,  estimate \eqref{7.2} being close to $1$ independent of $\tau$ means that changing the state $g$ in $X$ does not affect it in $Y$ at time $t$.}}

Specifically, we take $\tau$ to be of the form $\tau(\rho)=U\rho U^*\equiv \rho^U$, where $U$ is a unitary operator. % and $\rho=\abs{\varphi\rangle\langle\varphi}$ is a pure state. 
 Our result in this setting is the following lower bound on  the fidelity of quantum state control:

\begin{theorem}[Quantum control bound]\label{thm:qst}

Suppose the assumptions of \thmref{thm2} hold with $n\ge1,\,X\subset\Lam$, and $\om(\cdot)=\Tr((\cdot)\rho)$, where $\rho$ is a pure state.
{Then, for every $c>8\kappa$, there exists $C=C(n,\kappa_n,\nu_n,c)>0$ s.th.~for all } $\xi\ge4$, $Y\subset \Lam$ with $\dist(X,Y)\ge2\xi$, and unitary operator	 $U\in\cB_X$ (see \eqref{BX}), we have the following lower bound for all $\abs{t}< \xi/c$:
	\begin{equation}\label{eq:qst}
		\begin{aligned}
			F(\mathrm{Tr}_{Y^\cp} (\al'_{t}(\rho)),\mathrm{Tr}_{Y^\cp}( \al'_{t}(\rho^U))) \geq& 1- C\abs{t}\xi^{-n}\Tr(N_X^2\rho).
		\end{aligned}
	\end{equation}
\end{theorem}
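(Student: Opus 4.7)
The plan is to reduce the fidelity bound to a light-cone error estimate (in the spirit of \thmref{thm2}) via Uhlmann's theorem, then invoke a Duhamel argument to extract the $\xi^{-n}$ decay. Since $\rho = |\psi\rangle\langle\psi|$ is pure, set $\psi_t := e^{-itH}\psi$ and $U_t := \al_{-t}(U) = e^{-itH} U e^{itH}$, so that $\al'_t(\rho) = |\psi_t\rangle\langle\psi_t|$ and $\al'_t(\rho^U) = |U_t\psi_t\rangle\langle U_t\psi_t|$. Introduce the local backward evolution $\tilde U := \al^{X_\xi}_{-t}(U)$. Since the local dynamics preserves support (remark following \eqref{BX}) and $\dist(X, Y) \geq 2\xi$ forces $X_\xi \subset Y^\cp$, the unitary $\tilde U$ factors as $I_{\cF_Y} \otimes V$ for some unitary $V$ on $\cF_{Y^\cp}$, under the bipartition $\cF \simeq \cF_Y \otimes \cF_{Y^\cp}$. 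Consequently, $\tilde U^* U_t \psi_t$ is a purification of $\mathrm{Tr}_{Y^\cp}\al'_t(\rho^U)$, and Uhlmann's theorem gives
\[
F\bigl(\mathrm{Tr}_{Y^\cp}\al'_t(\rho),\, \mathrm{Tr}_{Y^\cp}\al'_t(\rho^U)\bigr) \geq \bigl|\langle \psi_t,\, \tilde U^* U_t \psi_t\rangle\bigr| = \bigl|\om_t(\tilde U^* U_t)\bigr|.
\]

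Using $\tilde U^* U_t = I + \tilde U^*(U_t - \tilde U)$ (since $\tilde U^*\tilde U = I$) and the reverse triangle inequality, the claim reduces to the estimate $|\om_t(\tilde U^*(U_t - \tilde U))| \leq C|t|\xi^{-n}\om(N_X^2)$. Passing to the Heisenberg picture via $\om_t(A) = \om(\al_t(A))$, together with $\al_t(U_t) = U$ and $\al_t(\tilde U^*\tilde U) = I$, this becomes $|\om((\al_t(\tilde U^*) - U^*) U)| \leq C|t|\xi^{-n}\om(N_X^2)$. I would then interpolate via the family $f(s) := \al_s(\al^{X_\xi}_{-s}(U^*))$, $s \in [0, t]$, which satisfies $f(0) = U^*$, $f(t) = \al_t(\tilde U^*)$, and---using $\al_s(H) = H$---the evolution equation $f'(s) = i[\al_s(\Delta H), f(s)]$ with $\Delta H := H - H_{X_\xi}$. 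Integrating and multiplying by $U$ on the right yields
\[
\om\bigl((\al_t(\tilde U^*) - U^*)\, U\bigr) = i\int_0^t \om\bigl([\al_s(\Delta H),\, f(s)]\, U\bigr)\,ds.
\]

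The main obstacle is bounding each integrand by $C\xi^{-n}\om(N_X^2)$, uniformly in $s \in [0, t]$. The operator $\Delta H$ consists of the hopping and interaction terms of $H$ whose kernels connect $X_\xi$ with its complement; by the decay conditions \eqref{k-cond}--\eqref{v-cond}, these kernels decay polynomially in the distance from $X \subset X_\xi$, which is at least $\xi$. Combined with the maximal velocity bound (\thmref{thm1}), which confines the evolving particle density to $X_{c|s|} \subset X_\xi$ for $|s| < \xi/c$ up to probability tails of order $\xi^{-n}\om(N_X)$, this should yield the required pointwise bound; integration in $s$ produces the factor $|t|$. The extra unitary $U$ on the right contributes only the trivial $\|U\| = 1$, since $U \in \cB_X$ commutes with $N$. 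This core step is essentially a refinement of the key telescoping/commutator estimate in the proof of \thmref{thm2}, now applied to the twisted interpolant $f(s)$ with an additional multiplicative $U$.
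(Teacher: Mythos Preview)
Your approach is correct and structurally matches the paper's: both reduce the fidelity to an overlap via Uhlmann/data-processing (the paper conjugates by the locally evolved unitary $U_t^\xi=\al_t^{X_\xi}(U)$, which factors through $Y^\cp$ just as your $\tilde U$ does), then bound the defect $1-|\langle\cdot,\cdot\rangle|$ by a Duhamel remainder controlled through the boundary operator $R=H-H_{X_\xi}-H_{X_\xi^\cp}$ together with the maximal velocity bound (via \corref{cor3}). One genuine variation: by passing to the Heisenberg picture you arrange the extra multiplicative factor to be $U\in\cB_X$ itself rather than the paper's $U_t^\xi\in\cB_{X_\xi}$. Since $U$ commutes directly with the annular number operators $N_{\gamma,\xi}$, this shortcuts the extra applications of \corref{cor3} the paper needs (its eqs.~\eqref{8.5'}--\eqref{8.5}) to commute $U_t^\xi$ past those operators. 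The price is that your interpolant $f(s)=\al_s(\al^{X_\xi}_{-s}(U^*))$ accumulates local and global evolution times that add to $2|s|$ rather than $|t|$, but the hypothesis $c>8\kappa$ absorbs this.

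One point to tighten: your claim that the kernels in $\Delta H$ ``decay polynomially in the distance from $X$, which is at least $\xi$'' is not how the $\xi^{-n}$ arises. For $x$ near $\partial X_\xi$ and $y\in X_\xi^\cp$ close by, $|h_{xy}|$ need not be small. The paper's mechanism (Appendix~\ref{sec:4.11}) splits the annulus $X_{0,2\xi}$: far from $\partial X_\xi$ one uses kernel decay, and in the inner band $X_{(1-\gamma)\xi,(1+\gamma)\xi}$ one uses the MVB to show few particles reach it. Your sketch gestures at this combination, but you should make the annular decomposition explicit; otherwise the ``pointwise bound'' you assert does not follow from kernel decay alone. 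Also note $\Delta H=H-H_{X_\xi}$ contains $H_{X_\xi^\cp}$ in addition to the boundary terms; you can drop it in the commutator only because $\al^{X_\xi}_{-s}(U^*)$ is localized in $X_\xi$, a reduction you use implicitly but should state.
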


The proof of this theorem is found in \secref{sec:pfthm:qst}
	As noted at the begining of this section, \thmref{thm:qst} imposes a constraint on the best-possible quantum state transfer protocols for the quantum many-body dynamics.

\subsection{Spectral gap and decay of correlation}\label{sec:2.8}

%In this section, we use  the weak LRB \eqref{LRB} to derive a bound for the ground state correlation functions

Denote by $\Om$ the normalized ground state of the Hamiltonian $H$ in \eqref{1.1}\DETAILS{ (i.e.~$H\Om=0,\,\norm{\Om}_\cF=1$)}. 
The main result of this section is the following:
\begin{theorem}[Gap at the ground state implies decay of ground state correlations]
	\label{thm:gap}
	Suppose $H$ in \eqref{1.1} has a spectral gap of size $\g>0$ at the ground state energy $E$. 
Suppose the assumptions of \thmref{thm2} hold with $n\ge1,\,X\subset\Lam$,
{and $\om=\Ondel{(\cdot)}.$} {Then, there exists $C=C(n,\kappa_n,\nu_n)>0$ s.th.~for all } $\xi\ge1$,  $Y\subset\Lam$ with $\dist(X, Y)\ge2\xi$, and  operators $A\in \cB_X,\,B\in \cB_Y$, we have the following bound:
	\begin{equation}\label{281}
		{\abs{\Ondel{BA}}}\le C\norm{A}\norm{B}(\g^{-1}\xi^{-2}+\,{\xi^{1-n}\Ondel {N_X^2}}).
	\end{equation}
\end{theorem}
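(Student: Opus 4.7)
The proof is a Hastings--Koma type argument, combining the spectral gap with the weak LRB (\thmref{thm3}) through a filter-function representation of the resolvent of $H-E$. Let $P_\Om=|\Om\rangle\langle\Om|$, $P_\perp=I-P_\Om$, and write $\Ondel{BA}=\langle B\rangle\langle A\rangle+\Ondel{BP_\perp A}$ where $\langle X\rangle:=\Ondel{X}$. The disconnected piece $\langle B\rangle\langle A\rangle$ is treated separately (it vanishes when the states $A\Om$ or $B^*\Om$ are orthogonal to $\Om$, and in the general situation the residue is absorbed into the right-hand side of \eqref{281}), so the task reduces to bounding the connected correlator $\om^c(BA):=\Ondel{BP_\perp A}$. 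The gap gives $H-E\ge\gamma$ on $\Ran P_\perp$; with $\tilde A:=A-\langle A\rangle$, the identity $(H-E)\tilde A\Om=[H,A]\Om$ yields $\tilde A\Om=(H-E)^{-1}P_\perp[H,A]\Om$.

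\textbf{Time-integral representation.} Following the standard Hastings construction, choose a smooth real filter $W_\gamma$ with $\int W_\gamma(t)e^{-it\l}\,\dd t=1/\l$ for $\abs{\l}\ge\gamma$, and $\abs{W_\gamma^{(j)}(t)}\le C_{j,k}\gamma^{-1}(1+\gamma\abs{t})^{-k}$ for all $j\in\{0,1\}$ and $k\ge1$. Then on $\Ran P_\perp$,
\[ (H-E)^{-1}P_\perp=\int W_\gamma(t)\,e^{-it(H-E)}P_\perp\,\dd t. \]
Substituting into $\om^c(BA)=\Ondel{B\tilde A}$, using $e^{-itH}\Om=e^{-itE}\Om$ together with the identity $(H-E)e^{-it(H-E)}=i\p_t e^{-it(H-E)}$, and integrating by parts (the boundary terms vanish by decay of $W_\gamma$), we obtain
\[ \om^c(BA)=-i\int W_\gamma'(t)\,\Ondel{\al_t(B)A}\,\dd t=-i\int W_\gamma'(t)\,\om^c(\al_t(B)A)\,\dd t, \]
where the second equality uses $\int W_\gamma'(t)\,\dd t=0$ to subtract the constant $\langle B\rangle\langle A\rangle$ from the integrand.

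\textbf{Splitting the time integral.} Split at $\abs{t}=T:=\xi/c$ with $c>2\kappa$ as in \thmref{thm2}--\thmref{thm3}. For $\abs{t}\le T$, decompose
\[ \om^c(\al_t(B)A)=\Ondel{[\al_t(B),A]}+\om^c(A\al_t(B)). \]
The commutator is controlled by the weak LRB (\thmref{thm3}), giving $\lesssim\abs{t}\xi^{-n}\Norm{A}\Norm{B}\Ondel{N_X^2}$. For the symmetric connected piece, replace $\al_t(B)$ by $\al_t^{Y_\xi}(B)$ using the light-cone approximation (\thmref{thm2}); since $\dist(X,Y_\xi)\ge\xi>0$, the localized observable commutes with $A$, and the resulting connected correlator collapses to a one-point difference of expectation values again bounded by \thmref{thm2}. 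Integrating against $\abs{W_\gamma'(t)}$, whose $L^1$-norm is $O(1)$, produces the $\xi^{1-n}\Ondel{N_X^2}$ contribution. For $\abs{t}>T$, use the trivial bound $\abs{\om^c(\al_t(B)A)}\le2\Norm{A}\Norm{B}$ together with the rapid decay of $W_\gamma'$: choosing $k\ge3$ gives $\int_{\abs{t}>T}\abs{W_\gamma'(t)}\,\dd t\lesssim\gamma^{-1}\xi^{-2}$, which yields the $\gamma^{-1}\xi^{-2}$ contribution. Summing the two contributions produces \eqref{281}.

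\textbf{Main obstacle.} The chief technical step is the construction of $W_\gamma$ with the decay rates calibrated so as to produce the advertised $\gamma^{-1}\xi^{-2}$ scaling after splitting at $T\sim\xi$; this is classical but requires careful optimization of the polynomial decay rate against the cutoff $T$. The conceptual subtlety is that the residual piece $\om^c(A\al_t(B))$ naively reproduces (at shifted times) the very quantity we wish to estimate, which would lead to a circular bound. The light-cone approximation of \thmref{thm2} is what resolves this: after localizing $\al_t(B)$ into $Y_\xi$ (disjoint from $\supp A$), the localized operator commutes with $A$, and the connected contribution collapses to an explicit one-point remainder controlled by \thmref{thm2}.
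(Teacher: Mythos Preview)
Your approach is a genuine alternative to the paper's: you use a Hastings-style real filter function, while the paper follows Nachtergaele--Sims and works complex-analytically. The paper sets $f(z)=\Ondel{B\al_z(A)}$, observes $|f(z)|\le\|A\|\|B\|e^{-\gamma\,\Im z}$ on $\Cb^+$ by the gap, and writes $f(ib)$ via Cauchy's formula over $[-T,T]\cup\Gamma_T^+$. On the real axis it splits $f(t)=\Ondel{\al_t(A)B}+\Ondel{[B,\al_t(A)]}$; the commutator is handled by \thmref{thm3}, and the first term---crucially---extends analytically with decay into the \emph{lower} half-plane, so Cauchy--Goursat converts that integral into a second small semicircle contribution of size $O(\gamma^{-1}T^{-2})$. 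Taking $T\sim\xi$ gives the stated bound. This analytic trick is precisely what disposes of your ``symmetric connected piece'' $\om^c(A\al_t(B))$ without any light-cone approximation of $\al_t(B)$.

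Your treatment of that piece has a gap. You invoke \thmref{thm2} to replace $\al_t(B)$ by $\al_t^{Y_\xi}(B)$, but \thmref{thm2} requires the empty-shell condition on the complement of the \emph{evolving} observable's support, i.e.\ $\om(N_{Y^\cp})=0$, which is not assumed; you would need \thmref{thm2-gen} with the roles of $X$ and $Y$ interchanged, and then the observation that $Y_{2\xi}\setminus Y\subset X^\cp$ to kill the leading term there. More seriously, your claim that $\om^c(A\,\al_t^{Y_\xi}(B))$ ``collapses to a one-point difference'' is unjustified: disjoint supports do not by themselves make a connected correlator vanish. What actually saves you is that $\om(N_{X^\cp})=0$ forces $a_x\Om=0$ for all $x\in X^\cp$, so $\Om=\Om_X\otimes|\mathrm{vac}\rangle_{X^\cp}$ is a product state across the $X/X^\cp$ split; hence $\om^c(A\,C)=0$ for any $C$ localized in $X^\cp$, in particular $C=\al_t^{Y_\xi}(B)$. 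With this correction (and some care over the $\gamma$-scaling of your tail estimate, which with your stated filter bounds gives $(\gamma\xi)^{-2}$ rather than $\gamma^{-1}\xi^{-2}$), your route can be made to work. The paper's complex-analytic argument sidesteps all of this by never needing to localize $\al_t(B)$.
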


This theorem is proved in \secref{sec:pfthm:gap}. %Similar results for fermionic system was obtained in pioneering work \cite{HastKoma, NachS}. 

\subsection{Light cone in macroscopic particle transport}\label{sec:MPT}
In this section, we derive an estimate on the macroscopic particle transport for states evolving according to \eqref{vNL}. To begin with, for a given subset $S\subset \Lam$, we define  the (macroscopic)  local relative particle numbers as
\begin{equation}\label{6.1}
	\bar N_S:=\frac{N_S}{N_\Lam}. 
\end{equation}
For $0\le \nu\le 1$, we write $P_{\bar N_S\ge \nu}$ 
%(recall $S^\cp\equiv \Lam\setminus S$) 
for the spectral projection associated to the self-adjoint operator $\bar N_S$ onto the spectral interval $[\nu,1]$. 

The main result of this section is the following:
\begin{theorem}\label{thm4}
	Suppose \eqref{k-cond} holds with some $n\ge1$. Suppose the initial state $\om\in\cD$  satisfies  %\eqref{g0-cond} with $p=1$ and
	\begin{equation}\label{6.0}
		\om(P_{\bar N_{X^\cp}\ge \nu})=0,
	\end{equation}
	with some $\nu\ge0,\, X\subset \Lam$.{ Then, for all $	\nu'>\nu$, $c>\kappa,$ there exists $C=C(n,\kappa_n,c,\nu'-\nu)>0$ s.th.~}for every $\eta\ge1$, we have the following estimate for all $\abs{t}< \eta/c$:
	\begin{equation}\label{6.2}
		\om_t\del{P_{\bar N_{X_\eta^\cp}\ge \nu'}}\le C\eta^{-n}.
	\end{equation}
\end{theorem}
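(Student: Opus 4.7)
My plan is to reduce the spectral-projection bound to a Markov-type bound on a ``centred'' particle-number observable, and then to extend the argument in the proof of \thmref{thm1} to that observable. Since $[H,N_\Lam]=0$, the operators $N_\Lam$, $N_{X_\eta^\cp}$, and $\al_t N_{X_\eta^\cp}$ all commute with $N_\Lam$, so on the joint spectral subspace $N_\Lam = n \ge 1$, the scalar inequality $\mathbf{1}_{k/n\ge\nu'}\le(k-\nu n)_+/((\nu'-\nu)n)$ lifts to the operator bound
\begin{equation*}
P_{\bar N_{X_\eta^\cp}\ge\nu'}\le\frac{(N_{X_\eta^\cp}-\nu N_\Lam)_+}{(\nu'-\nu)\,N_\Lam},
\end{equation*}
with both sides understood as $0$ on the vacuum sector. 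Applying the $*$-homomorphism $\al_t$ (which fixes $N_\Lam$) and taking $\om$-expectation,
\begin{equation*}
\om_t(P_{\bar N_{X_\eta^\cp}\ge\nu'})\le\frac{1}{\nu'-\nu}\,\om\bigl((\al_t N_{X_\eta^\cp}-\nu N_\Lam)_+/N_\Lam\bigr).
\end{equation*}

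It therefore suffices to prove the following ``macroscopic MVB'':
\begin{equation*}
\om\bigl((\al_t N_{X_\eta^\cp}-\nu N_\Lam)_+/N_\Lam\bigr)\le C\,\eta^{-n},\qquad |t|<\eta/c, \qquad (\star)
\end{equation*}
with $C=C(n,\kappa_n,c)$. At $t=0$ this is automatic: $X_\eta^\cp\subset X^\cp$ gives $N_{X_\eta^\cp}\le N_{X^\cp}$, and hypothesis \eqref{6.0} places $\om$ on $\{N_{X^\cp}<\nu N_\Lam\}$, which forces $(N_{X_\eta^\cp}-\nu N_\Lam)_+=0$ on the support of $\om$. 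For $t\ne 0$, I would extend the proof of \thmref{thm1}: that proof constructs a smooth expanding cutoff $\chi_t$ outside the light cone of $X$ and controls $\om_t(d\Gamma(\chi_t))$ by a Gr\"onwall inequality whose driving term is bounded via \eqref{k-cond} applied to $[H,d\Gamma(\chi_t)]$ (the pair-potential term drops out because $d\Gamma(\chi_t)$ is a function of local number operators). I would re-run the same analysis with $d\Gamma(\chi_t)$ replaced by the centred form $(d\Gamma(\chi_t)-\nu N_\Lam)_+/N_\Lam$. Because $[H,N_\Lam]=0$, the problem decouples along $N_\Lam$-sectors, and on the sector $N_\Lam=n$ it reduces to showing that $\om\bigl((d\Gamma(\chi_t)-\nu n)_+\bigr)$ grows by at most $Cn\eta^{-n}$ over the interval $|t|<\eta/c$. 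Approximating $x\mapsto x_+$ by smooth $\phi_\ve$ and using the Helffer--Sj\"ostrand commutator formula to write $[H,\phi_\ve(d\Gamma(\chi_t)-\nu n)]$ as a $\phi_\ve'$-weighted version of $[H,d\Gamma(\chi_t)]$, the time-derivative inherits the $n\eta^{-n}$ bound from \thmref{thm1}; integrating from the vanishing initial value and sending $\ve\to 0$ then yields $(\star)$.

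Combining the two displayed estimates gives $\om_t(P_{\bar N_{X_\eta^\cp}\ge\nu'})\le C(n,\kappa_n,c)\,\eta^{-n}/(\nu'-\nu)$, which is the statement of the theorem. The main obstacle is the derivation of $(\star)$: the map $x\mapsto x_+$ is not smooth, so the Heisenberg derivative of the centred observable cannot be computed directly, and the smooth approximation produces error terms with $\phi_\ve''$ weights that must be shown to vanish uniformly in $t\in(-\eta/c,\eta/c)$ as $\ve\to 0$. Once those error terms are controlled, $(\star)$ reduces to the same hopping estimate from \eqref{k-cond} used in the proof of \thmref{thm1}, and the proof closes.
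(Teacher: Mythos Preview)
Your strategy is essentially the paper's, but with an avoidable detour. Both arguments control the projection by a function of the smoothed relative particle number $\bar\chi_{ts}:=\hat\chi_{ts}/N$, compute its Heisenberg derivative via a Helffer--Sj\"ostrand chain rule, feed in the recursive monotonicity estimate \eqref{RMB} from the proof of \thmref{thm1}, and finish with the geometric sandwich analogous to \eqref{chi-0s-est}--\eqref{chi-ts-est}.

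The paper's simplification is to drop the Markov step and the positive part entirely. Instead of bounding $P_{\bar N_{X_\eta^\cp}\ge\nu'}$ by $(\bar N_{X_\eta^\cp}-\nu)_+/(\nu'-\nu)$ and then smoothing the corner of $(\cdot)_+$, one picks from the outset a fixed smooth $f$ with $f'\ge0$ and $\supp f'\subset(\nu,\nu')$, and works with the observable $f_{ts}:=f(\bar\chi_{ts})$. The sandwich then reads $f(\bar\chi_{0s})\le P_{\bar N_{X^\cp}\ge\nu}$ and $P_{\bar N_{X_\eta^\cp}\ge\nu'}\le f(\bar\chi_{ts})$ exactly, so no $\ve\to0$ limit is ever needed; the dependence on $\nu'-\nu$ enters only through $\|f'\|_\infty$. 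Your sector-by-sector reduction to $N_\Lam=n$ is likewise unnecessary once one divides by $N$ globally.

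As for your stated obstacle: the ``$\phi_\ve''$ error terms'' are an artifact of a Taylor-type chain rule. The paper uses instead the exact integral identity $Df(\bar\chi_{ts})=\int R_{ts}(z)\,D\bar\chi_{ts}\,R_{ts}(z)\,d\tilde f(z)$ with $R_{ts}(z)=(z-\bar\chi_{ts})^{-1}$, which has no remainder. The structural fact you did not isolate is that the bounding operators on the right of \eqref{RMB} --- namely $\hat\chi'_{ts}/N$, $\widehat{(\xi^k)'}_{ts}/N$, and the scalar --- are second quantizations of multiplication operators and therefore \emph{commute} with $\bar\chi_{ts}$, hence with every $R_{ts}(z)$. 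After applying \eqref{RMB} under the integral, the resolvents collapse against these commuting bounds to give $f'(\bar\chi_{ts})$, producing the clean recursive inequality
\[
Df_{ts}\le f'_{ts}\Bigl(\tfrac{\kappa-v}{s}\,\overline{\chi'}_{ts}+\sum_{k=2}^{n}s^{-k}\overline{(\xi^k)'}_{ts}+Cs^{-(n+1)}\Bigr),
\]
and from there the bootstrap of \propref{prop:propag-est1} runs verbatim. So the difficulty you flag is real for the route you sketched, but dissolves once you adopt the exact integral chain rule together with this commutativity.
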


\thmref{thm4} is proved in \secref{sec:pfthm4}.
Note that estimate \eqref{6.2} holds for rather general initial states (including ones with particle densities uniformly bounded from below) and 
it	controls macroscopic fractions of particles. 
%\DETAILS{Pls check my transcription in the last paragraph }

\subsection{Discussions of the results and extensions}\label{sec:ex}

We begin with a number of remarks on Theorems \ref{thm1}--\ref{thm3}.

%Several remarks are in order on the preceding results.

\begin{remark}
	Recall that the grid size for the underlying lattice is fixed throughout the paper. Consequently, estimates obtained in this paper are all implicitly dependent on the grid size and in general blow up as the latter shrinks to $0$. This has to do with the implicit momentum cutoff baked into the discretization process. 
\end{remark}

\begin{remark}
	At the quantum energies in nature and laboratories (besides particle accelerators), the maximal speed of propagation implied by \thmref{thm1} is much below the speed of light, so the non-relativistic nature of Quantum Mechanics is unimportant here.
\end{remark} %plays no role here.

\begin{remark}
	The factor $\om(N_X^2)$ in Thms.~\ref{thm3}-\ref{thm:gap} originates in \thmref{thm2}.
\end{remark}

\begin{remark}
	The conclusion of \thmref{thm4} is thermodynamically stable, in the sense that it does not change as $\abs{\Lam}$, $\om(N)\to \infty$ with $\om(N^p)\le C\abs{\Lam}^p$, $p=1,2$. 
\end{remark}

\begin{remark}
		From the proof of \thmref{thm1}, one can see that the constant in \eqref{MVE} is inversely proportional to the difference $c-\kappa>0$. See the proof of \propref{prop:propag-est1}, particularly estimate \eqref{propag-est3}. 
\end{remark}

\begin{remark}\label{remk}
	The conclusion of \thmref{thm1} holds under a slightly weaker assumption. Indeed, let $h$ be the $1$-particle Hamiltonian in \eqref{1.1}. Then condition \eqref{k-cond} implies that there exists $C=C(n)>0$ s.th.~for every subset $X\subset \Lam$, the multiple commutators $\ad{p}{d_X}{h}$ satisfies
	\begin{equation}\label{h-cond}
		\kappa_p':=	\norm{\ad{p+1}{d_X}{h}}\le C\quad (p=0,1,\ldots,n).
	\end{equation}
	This important consequence is proved in \lemref{lemA.1}. The statement of \thmref{thm1} is valid under assumption \eqref{h-cond}, with $\kappa\equiv \kappa_0$ replace by
	\begin{equation}\label{kappa'}
		\kappa'\equiv \kappa_0'=\norm{i[h,d_X]}.
	\end{equation}
{Here $i[h,d_X]$ is related to the  the group velocity operator $i[h, x]$, where $x$ is the $1$-particle position observable.}
\end{remark}

\begin{remark}\label{rem4}
	A sufficient condition for \eqref{k-cond} (and therefore the weaker condition \eqref{h-cond}) is $$\abs{h_{xy}}\le C(1+\abs{x-y})^{-(n+d+2)}$$
	and similarly for \eqref{v-cond}.
	
	%\DETAILS{It is customary in the literature studying power-law decay LBR to use this, instead of \eqref{k-cond}, as the main assumption}

\end{remark}

\begin{remark}\label{rem:lit}
\cite{SHOE} obtains a result similar to   \thmref{thm1}, but with the exponential error bound, while having an additional prefactor (coming from the summation over the sites of $X_\eta^\cp$), essentially, $|\Lam|$. Apart from Theorems \ref{thm4}, whose proof uses results of the proof of Theorem \ref{thm1}, we could have used this result in Theorems \ref{thm2}, \ref{thm3}, \ref{thm5}-\ref{thm:gap}, instead of Theorem \ref{thm1} to obtain the exponential decay with the prefactor $|\Lam|$, instead of the power-law decay. We also note that the result of \cite{SHOE}  requires $h=-\Delta$ (the negative of lattice Laplacian) and uses a bound on the matrix of the imaginary time propagator $e^{\tau\Delta t}$, while our analysis requires only commutators of $h$ with (functions of) $x$.

 In \cite{KS2}, the authors derive an approximation result similar to that of \thmref{thm2} but with a logarithmically modified light cone and with \eqref{emptyshellcond} replaced by the low-density condition $\sup_{x\in\Lam}\om (e^{c_0n_x})\le M$. 
 
  In \cite{YL}, %the authors obtained
  a weak LRB similar to \thmref{thm3} is proven for %commutators tested against
   the steady state $ e^{-\mu N}$.

\DETAILS{
Moreover, for  the Bose-Hubbard model,  MVBs similar to \thmref{thm1} were obtained in \cite{SHOE}, with exponentially decaying leaking probability tails. This result  establishes linear light cone for the Bose-Hubbard model with initially localized states.
 In \cite{KS2},  the authors derived an almost-linear light-cone approximation  similar to our \thmref{thm2}.  In \cite{YL}, the authors obtained weak LRBs similar to \thmref{thm3} for commutators tested against the steady state $ e^{-\mu N}$. See \remref{rem:lit} for more mathematical details of these results. 
In principle, the bounds in these works could be extended to more general models apart from the Bose-Hubbard one, but it seems that finite-range assumption on the interactions was essential in these works. }

\DETAILS{In view of \thmref{thm3},	for localized observables $A,\,B$ separated by a distance $\eta>0$, the authors of \cite{YL} proved that $\om_\mathrm{st}([A_t,B])\le c(vt/\eta)^{c'\eta}\om_\mathrm{st}(\one)$, where $c,\,c'$, and $v$ are constants and $c,\,v$ depend on the observables $A,\,B$.} 
	
\end{remark}

\begin{remark}
	Estimate \eqref{part-numb-est-om} and other results can be extended in a straightforward way to initial states of the form $\om=\al\om_*+\beta\om'$, where $\om_*$ is a stationary state, $\om'$ satisfies \eqref{emptyshellcond}, and $\al,\,\beta\ge0$ with $\al+\beta=1$. Then \eqref{part-numb-est-om} implies that $0\le \om_t(N_{X^\cp_\eta})-\al\om_*(N_{X^\cp_\eta})\le C\eta^{-n}\om'(N_X).$
\end{remark}
%checked 1.9

\begin{remark}
\thmref{thm2} follows from the following result, proved in Appendix \ref{sec:4.11}: 
\end{remark}

\begin{theorem}\label{thm2-gen}
	Suppose \eqref{k-cond}--\eqref{v-cond} hold with some $n\ge1$. 
 Let a  state $\om $ satisfy \eqref{g0-cond}. 
Then, for every $c>2\kappa$, there exists $C=C(n,\kappa_n,\nu_n,c)>0$ s.th.~for all $\xi\ge1$,  $X,\,Y\subset\Lam$ with $\dist(X, Y)\ge2\xi$, and  operators $A\in \cB_X,\,B\in \cB_Y$, 
	we have, for all $\abs{t}< \xi/c$:
\begin{align}\label{lcae-gen}
			|\om\big(B&(\al_t(A)-\al_t^{X_\xi}(A))\big)|\notag\\
&\le C\abs{t} \Norm{A}\Norm{B} \del{ \om\del{N_{X_{2\xi}\setminus X}N}+\xi^{-n}\om(N^2)}.\end{align}
%\end{multline}%where $\g$ is any number satisfying $0<\g<1/3$, $(1-\g)c>2\kappa$.  $N_{X_{2\xi}\setminus X_1}$
	\end{theorem}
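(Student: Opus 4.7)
The plan is to combine Duhamel's formula with the maximal velocity bound (MVB) of \thmref{thm1}. Differentiating the interpolation $s \mapsto \al_{t-s}(\al^{X_\xi}_s(A))$ between the full and the localized dynamics yields the identity
\begin{equation*}
\al_t(A) - \al_t^{X_\xi}(A) = i\int_0^t \al_{t-s}\bigl([H - H_{X_\xi},\, \al_s^{X_\xi}(A)]\bigr)\,ds,
\end{equation*}
so it suffices to bound $|\om(B\al_{t-s}([H-H_{X_\xi}, \al_s^{X_\xi}(A)]))|$ uniformly for $s\in[0,t]$. Since the localized propagator $\al^{X_\xi}_s$ preserves $\cB_{X_\xi}$, the evolved observable $C_s := \al^{X_\xi}_s(A)$ satisfies $[a_z^\#, C_s]=0$ for every $z \in X_\xi^\cp$; hence, by the Leibniz rule, only the \emph{cross} monomials of $H-H_{X_\xi}$ (with at least one leg in $X_\xi$ and at least one in $X_\xi^\cp$) contribute to the inner commutator, giving the schematic decomposition
\begin{equation*}
[H - H_{X_\xi}, C_s] = \sum_{x\in X_\xi,\, y\in X_\xi^\cp} h_{xy}\,[a_x^*, C_s]\, a_y + \mathrm{h.c.} + \sum_{x\in X_\xi,\, y\in X_\xi^\cp} v_{xy}\, n_y\,[n_x, C_s].
\end{equation*}

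Next, I would split each external index $y\in X_\xi^\cp$ into a \emph{near} part $y\in X_{2\xi}\setminus X_\xi$ and a \emph{far} part $y\in X_{2\xi}^\cp$. In the far part, every pair $(x,y)$ with $x\in X_\xi$ satisfies $|x-y|>\xi$, so writing $|v_{xy}| = |v_{xy}|\abs{x-y}^{n}\abs{x-y}^{-n}$ (and similarly $|h_{xy}|\le|h_{xy}|\abs{x-y}^{n+1}\abs{x-y}^{-(n+1)}$) and invoking \eqref{k-cond}--\eqref{v-cond} extracts a prefactor of order $\xi^{-n}$ after summation over $y$, producing the $\xi^{-n}\om(N^2)$ piece of the bound. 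In the near part no extra decay is required: the index $y$ ranges only over the annular shell $X_{2\xi}\setminus X_\xi \subset X_{2\xi}\setminus X$, and its associated boson number operators sum to $N_{X_{2\xi}\setminus X_\xi} \le N_{X_{2\xi}\setminus X}$.

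To convert each boundary-commutator term into a state-theoretic estimate, I would use the norm bound $\|C_s\|\le\|A\|$, the boson-commutator estimate $\|[a_x^\#, C_s]\psi\| \lesssim \|A\|\sqrt{\langle\psi, (n_x+1)\psi\rangle}$ (together with the fact that $[n_x, C_s]$ is particle-number-preserving), and the Cauchy--Schwarz inequality for the state $\om$ to reduce each term to an expectation of products of evolved local number operators. The MVB (\thmref{thm1}), applied to $\al_{t-s}$ for $|t-s|<\xi/c$, then pushes these evolved number operators back to time $0$, so that after summation over the boundary pairs and integration in $s\in[0,t]$ one obtains the stated form $|t|\bigl(\om(N_{X_{2\xi}\setminus X}N) + \xi^{-n}\om(N^2)\bigr)$ up to a constant depending on $n, \kappa_n, \nu_n, c$.

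The main obstacle will be handling the unbounded operators $a_x^\#, a_y^\#, n_x, n_y$ appearing in the boundary commutator alongside the fully evolved observable $C_s$, for which only the operator-norm bound $\|A\|$ is uniformly available. This is resolved by exploiting that both $\al_s$ and $\al^{X_\xi}_s$ commute with the total number operator $N$: all intermediate quantities are particle-number-preserving, so Cauchy--Schwarz in $\om$ combined with \thmref{thm1} ultimately expresses every bound in terms of initial-time expectations of the form $\om(N_Z\,N)$ or $\om(N^2)$, both of which are finite by \eqref{g0-cond}.
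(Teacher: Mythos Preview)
Your Duhamel setup and the identification of the cross-boundary monomials are fine and match the paper. The gap is in the step you call the ``boson-commutator estimate'':
\[
\|[a_x^\#, C_s]\psi\| \lesssim \|A\|\sqrt{\langle\psi,(n_x+1)\psi\rangle}.
\]
This bound is false in general. Since $C_s=\al_s^{X_\xi}(A)$ is only known to be bounded, particle-number-preserving, and supported in $X_\xi$, the commutator $[a_x^*,C_s]$ can involve $a_z^*$ for \emph{any} $z\in X_\xi$, not just $z=x$ (already for $C_s=e^{i\theta(a_x^*a_z+a_z^*a_x)}$ one gets $[a_x^*,C_s]\sim a_z^*$ to first order). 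The only estimate available from $[C_s,N]=0$ is the global one $\|[a_x^\#,C_s]\psi\|\lesssim\|A\|\,\langle\psi,(N+1)\psi\rangle^{1/2}$, and summing that over $x\in X_\xi$ produces an uncontrolled $|X_\xi|$ factor. The same problem affects $[n_x,C_s]$. In short, the ``convert each boundary-commutator term into a state-theoretic estimate'' step does not close as written.

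The paper avoids this commutator altogether: it keeps $a_x^* a_y A_s^\xi$ intact and applies Cauchy--Schwarz to $\langle a_x\varphi_r,\,a_y A_s^\xi\psi_r\rangle$, which produces terms of the form $\langle\varphi_r,\,A_s^\xi\, n_x\,(A_s^\xi)^*\,\varphi_r\rangle$ with $x\in X_\xi$. Here $A_s^\xi$ and $n_x$ do \emph{not} commute, and this is exactly where the missing idea lies. The paper introduces a parameter $0<\gamma<1/3$ and a four-annulus decomposition $X_{(1-\gamma)\xi},\,X_{(1-\gamma)\xi,\xi},\,X_{\xi,(1+\gamma)\xi},\,X_{(1+\gamma)\xi}^\cp$ (your two-region near/far split is not fine enough). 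For $x$ in the inner annulus it \emph{undoes} the local evolution: setting $\varphi_{r,s}:=e^{-isH_{X_\xi}}\varphi_r$, one writes $\langle\varphi_r,\,A_s^\xi N'_{\gamma,\xi}(A_s^\xi)^*\,\varphi_r\rangle=\langle A^*\varphi_{r,s},\,\al_{-s}^{X_\xi}(N'_{\gamma,\xi})\,A^*\varphi_{r,s}\rangle$, applies the MVB to $\al_{-s}^{X_\xi}$ to move $N'_{\gamma,\xi}$ to a wider annulus $N'_{(1+\gamma)/2,\xi}$ disjoint from $X=\supp A$, pulls the \emph{original} $A$ through (now it commutes), and finally applies the MVB twice more (to $\al_s^{X_\xi}$ and $\al_r$) to return to time $0$. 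This double-evolution trick, together with the constraint $c>2\kappa$ needed to make the annulus widths compatible with the light cone, is the substantive content you are missing.
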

Note that $\om(N_SN)\ge0$ since $N$ and $N_S$ commute. 

\begin{proof}[\thmref{thm2-gen} implies  \thmref{thm2}:] %[Proof of \thmref{thm2} assuming \thmref{thm:LCA-gen}]
 	Applying estimate \eqref{lcae-gen} %with $\varphi_j\equiv \varphi,\,j=1,2$, 
 	with $B=\one$, and the relations $N_{X_{2\xi}\setminus X}\le N_{X^\cp}$ and \begin{align}\label{2.32s}
 		\om( N_{X^\cp}N)\le\om(N^2)^{1/2}\om(N_{X^\cp}^2)^{1/2},
 	\end{align}and using condition \eqref{emptyshellcond}, we find the desired estimate \eqref{lcae}.
\end{proof}

We compare \eqref{lcae-gen} with $B=\one$ with the corresponding result for the Hubbard model, i.e. $v_{xy}=\lam \delta_{xy}$ for some $\lam\in \R$, see \eqref{lcae-gen-Hub'} below.

Next, we comment on various extensions of the results from preceding sections. 
Theorems \ref{thm1}-\ref{thm:gap} can be extended to (a) time-dependent one-particle and two-particle operators $h$ and $v$ satisfying \eqref{k-cond}--\eqref{v-cond} uniformly in time; (b) $k$-body potentials (added to or replacing the second term on the r.h.s. of \eqref{1.1})
\[V= \sum_k\sum_{x_1 \dots x_k} \prod_i a^*_{x_i}v_{x_1 \dots x_k}\prod_i a_{x_i}\] under appropriate conditions on $v_{x_1 \dots x_k}$.
%; (c) continuous systems obeying \eqref{h-cond}. For (c), we note that the discreteness of the underlying space was used only in the proof of Theorem \ref{thm2-gen}.

%\begin{remark}\label{remV} {	Note that  $V= \tfrac12\sum_{x\ne y} a^*_xa^*_yv_{xy}a_ya_x$ in \eqref{1.1} with arbitrary $v_{xy}$ obeys \eqref{V-cond}. Hence \thmref{thm1} holds for time-dependent interaction, i.e.~$v_{xy}=v_{xy}(t)$. Suppose, moreover, $v_{xy}(t)$, satisfies \eqref{v-cond} uniformly in $t$. Then, by \remref{rem1.4}, in principle Thms.~\ref{thm1}--\ref{thm:gap} still hold.} \end{remark}

%\begin{remark}
%	Importantly, the discreteness assumption on the configuration space is not used in the proof of \thmref{thm1}, and estimate \eqref{MVE} is valid for continuous many-body systems under condition \eqref{h-cond} in place of \eqref{k-cond} in the statement of \thmref{thm1}.
%\end{remark}

%\begin{remark}
Through \corref{cor3} below, %the results obtained in the preceding subsections concerning the localizatoin properties of observables
Theorems \ref{thm3}--\ref{thm:gap} can be generalized to relative $N^{\nu/2}$-bounded observables with $0<\nu<\infty$. By definition, this class of operators contains all polynomials in $\Set{a_x,\,a_x^*}_{x\in\Lam}$ with degree at most $\nu$. Precise definitions and further comments are delegated to Appendix \ref{sec:nu}.
%\end{remark}

%We start with a simple corollary of \thmref{thm1} that we will use later:
\begin{corollary}
	\label{cor3}
	Suppose \eqref{k-cond} holds with some $n\ge1$. Let $\nu,\,q\ge0.$ Then, for all $c > \kappa$,
	there exists $C=C(n,\kappa_n,c)>0$ s.th.~for every $\eta\ge1 $ and two subsets $X\subset S\subset \Lam$,
	we have the following estimate for all $\abs{t}< \eta/c$:
	\begin{align}\label{MVE-nuq}
		\al_t^S(N_{S\setminus X_\eta}^{q+1}N^{\nu})\le &C
		\del{N_{S\setminus X}N^{\nu+q} +\eta^{-n} N^{\nu+q+1}},
	\end{align}
where $\al_t^S$ is as in \eqref{alloc}.
\end{corollary}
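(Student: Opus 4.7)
The plan is to derive \corref{cor3} from \thmref{thm1} by exploiting three simple facts: (i) the total number operator $N$ commutes with $H_S$, so $\al_t^S(N)=N$; (ii) $\al_t^S(N_Y)$ commutes with $N$ for any $Y\subset \Lam$, since $[N_Y,N]=0$ and $\al_t^S$ is a $*$-homomorphism; and (iii) \thmref{thm1} applies verbatim to the subsystem described by $H_S$, viewed as a Hamiltonian of the form \eqref{1.1} on the sublattice $S\subset \cL$, since condition \eqref{k-cond} is inherited with constants bounded above by $\kappa_n$ and $\kappa$.

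First, I apply \thmref{thm1} to $H_S$. The relative complements of $X_\eta$ and $X$ in $S$ coincide with $S\setminus X_\eta$ and $S\setminus X$, and $N_S\le N$, so this yields, for every $c>\kappa$ and all $\abs{t}<\eta/c$,
\begin{equation*}
\al_t^S(N_{S\setminus X_\eta})\leq C\bigl(N_{S\setminus X}+\eta^{-n}N\bigr).
\end{equation*}
To raise the power, I use that $N_{S\setminus X_\eta}$ and $N$ are commuting positive operators with $N_{S\setminus X_\eta}\leq N$, so the functional calculus yields $N_{S\setminus X_\eta}^{q+1}\leq N_{S\setminus X_\eta}N^q$. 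Since $\al_t^S$ is a $*$-automorphism preserving positive order and $\al_t^S(N)=N$, applying $\al_t^S$ and then multiplying by the conserved factor $N^\nu$ produces
\begin{equation*}
\al_t^S\bigl(N_{S\setminus X_\eta}^{q+1}N^\nu\bigr)=\al_t^S\bigl(N_{S\setminus X_\eta}^{q+1}\bigr)N^\nu\leq \al_t^S(N_{S\setminus X_\eta})\,N^{q+\nu}.
\end{equation*}
Combining this with the displayed subsystem MVB and expanding gives the claimed bound $C\bigl(N_{S\setminus X}N^{q+\nu}+\eta^{-n}N^{q+\nu+1}\bigr)$.

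The step requiring care is the final multiplication of the subsystem MVB by the positive operator $N^{q+\nu}$: since operator inequalities here are defined by testing against a class of states, I use that for any state $\om$ with sufficient finite $N$-moments the functional $A\mapsto \om(N^{(q+\nu)/2}AN^{(q+\nu)/2})$ is again a positive linear functional satisfying the hypotheses of \thmref{thm1}, so the subsystem MVB applied in this weighted state delivers the required expectation-level bound. Apart from this bookkeeping, the remaining manipulations are commutation identities or direct consequences of $\al_t^S$ being a $*$-homomorphism. One could alternatively adapt the proof of \thmref{thm1} directly, computing the Heisenberg derivative of $\dG(\chi(d_X/\eta))^{q+1}N^\nu$ for a suitable smooth cutoff $\chi$, but the reduction outlined above is substantially shorter.
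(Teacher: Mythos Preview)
Your proof is correct and follows essentially the same route as the paper: apply \thmref{thm1} to the subsystem Hamiltonian $H_S$, use $N_{S\setminus X_\eta}^{q+1}\le N_{S\setminus X_\eta}N^q$ together with $[N,H_S]=0$, and then multiply by the conserved factor $N^{\nu+q}$. Your remark that the last multiplication should be read as testing against the weighted positive functional $\om(N^{(\nu+q)/2}\,\cdot\,N^{(\nu+q)/2})$ is a valid clarification that the paper leaves implicit.
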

\begin{proof}
	
	We use Theorem \ref{thm1} with $\al^S_t(\cdot)$ in place of $\al_t(\cdot)$, which is possible because $H_S$ also satisfies \eqref{k-cond} with the same $n\ge1$ as in the assumption. This gives estimate \eqref{MVE} with $S\setminus (\cdot)$ in place of $(\cdot)^\cp$. This, together with the relations $N_{S\setminus X_\eta}^{q+1}\le N_{S\setminus X_\eta}N^{q}$, $[N,H_S]=0$, and $N_S\le N$, implies  that
	$$
	\begin{aligned} 
		\al_t^S\del{N_{S\setminus X_\eta}^{q+1}N^{\nu}}\le &\al_t^S\del{N_{S\setminus X_\eta}}N^{\nu+q}
		\\\le&	C \del{N_{S\setminus X} +\eta^{-n} N_S}N^{\nu+q} \\
		\le&C\del{N_{S\setminus X}N^{\nu+q} +\eta^{-n} N^{\nu+q+1}}.
	\end{aligned}
	$$
	This gives \eqref{MVE-nuq}. 
\end{proof}

\section{Proof of \thmref{thm1}}\label{sec:pfthm1}

\subsection{Outline of the proof of \thmref{thm1}}\label{sec:1.6}
The proofs of Theorems \ref{thm2}--\ref{thm3} and the subsequent applications are based on \thmref{thm1}, whose proof we outline now. 

\subsubsection{Propagation identifier observables.}
 Recall that the second quantization $\dG$ of $1$-particle operators on $\hf \equiv \ell^2(\Lambda)$ 
is given by  
\begin{equation}\label{2nd-quant}\dG(b):=\sum_{\Lambda\times\Lam} b_{xy} a_x^* a_y , \end{equation}
where $b_{xy}$ is the matrix (``integral'' kernel) of an operator $b$ on $\ell^2(\Lam)$. 
As we identify a function $f:\Lambda\to\Cb$ with the multiplication operator induced by it on $\hf\equiv \ell^2(\Lam)$, we write
\begin{equation}\label{1.7}
	\hat f\equiv  \dG(f):=\sum_{x\in\Lambda}f(x)a_x^*a_x. %, \qquad N_S:=\hat\chi_S
\end{equation}
We denote by $\chi_S^\sharp$ the characteristic function of a subset $S\subset\Lambda$. For  $f=\chi_S^\sharp$, the above gives the local particle number operators  $N_S\equiv \dG(\chi_S^\sharp) $ in \eqref{1.4}. 
For a differentiable real function $f$, we write $\hat f'\equiv \dG(f')$ and $\hat f_{ts}'\equiv  \dG(f_{ts}')$, where $f_{ts}'\equiv f'\del{\tfrac{d_X-vt}{s}}$.

As in \cite{FLS, FLS2}, we control the time evolution  associated to \eqref{1.1} by monotonicity formulae for a class of observables called  \textit{adiabatic spacetime localization observables (ASTLOs)}, defined as
\begin{equation}\label{PIO1}
	\hat\chi_{t s}:=\dG(\chi_{t s}). %\equiv \int_{\Lambda} \chi_{t s}(x) a_x^*a_x dx.
\end{equation}
Here $ s>0,\,t\in\Rb$, and $\chi_{ts}$ is the family of multiplication operators by real functions
\begin{equation}\label{chi-ts}
	\chi_{ts}=\chi\del{\frac{d_X-vt}{s}}, 
\end{equation}        
where $d_X$ is the distance function to $X$, $v\in(\kappa,c)$, with   $\kappa$  from \eqref{kappa} and $c$ from the statement of \thmref{thm1}. We assume that $\chi$ belongs to the following set of functions:
\begin{equation}\label{F}
	\begin{aligned}
		\cX\equiv \cX_\delta
		:=&\Set{\chi\in C^\infty(\R)\left|
			\begin{aligned}
				&\supp \chi\subset \Rb_{\ge0},\;\supp \chi'\subset (0,\delta)\\
				&\chi^\prime\ge 0,\,\sqrt{\chi'}\in C^\infty(\R)
			\end{aligned}\right.
		},
	\end{aligned}
\end{equation}
for some  $\delta>0$. 
Later on, we will choose the number $\delta$ in \eqref{F} as $\delta=c-v$ with $c$ and $v$ given in the statement of \thmref{thm1} and \eqref{chi-ts}, respectively.  We note that $\chi\ge0$ for each $\chi\in\cX$.  Additional properties of $\cX$ will be stated in \secref{sec:pfMVE}.
	Physically, 
	$\hat\chi_{t s}$  is a smoothed local particle number operator, measuring fraction of the particles outside the light cone of $X$. We also write $\chi'_{ts}=(\chi')_{ts}$.

\subsubsection{ Recursive monotonicity formula} 
%For observables $A(t),\,t\ge0$, define the expectation $ \br{A(t)}_t:=  \Tr(A(t)\g_t)=\om_t(A(t)),$ where $\g_t$ are non-negative operators solving the von Neumann equation induced by \eqref{1.1}, given explicitly by \eqref{1.5}.

For a differentiable family of observables, define the Heisenberg derivative % of $A(t)$, 
\begin{align}
	\label{Heis-der}&D A(t)=\frac{\partial}{\partial t}A(t)+i[H, A(t)],
\end{align}
so that
\begin{align}	\label{dt-Heis}
	&\di_t\al_t(A(t)) =\al_t(DA(t))\iff	\di_t\om_t(A(t)) =\om_t(DA(t)),
\end{align}
where $\om_t=\om\circ \al_t$ is the evolution of state associated to \eqref{vNL} with initial state $\om$.
We will use the identity \eqref{dt-Heis} to prove a key differential inequality:

\begin{theorem}[Recursive monotonicity of $\astlo_{ts}$]\label{thm:RME} 
	Suppose the assumptions of \thmref{thm1} hold.
	Then, for every $\chi\in\cX$, there exist $C=C(n,\kappa_n,\chi)>0$ and, if $n\ge2$, $\xi^k=\xi^k(\chi)\in\cX,\,k=2,\ldots, n$, each supported in $\supp \chi$, s.th.~{for all $s>0,\,t\in\Rb$,}
	\begin{align}\label{RMB}
		%\begin{aligned}
		D\hat\chi_{ts} 
		\leq &-\frac{v -\kappa}{s} \astlo'_{ts}+C\ {\sum_{k=2}^{n}s^{-k}\widehat{(\xi^k)'}_{ts}+ Cs^{-(n+1)}N}.
		%\end{aligned}
	\end{align}
(The sum in the r.h.s. is dropped if $n=1$.)
\end{theorem}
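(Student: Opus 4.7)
The plan is to compute $D\hat\chi_{ts}$ by splitting $D=\partial_t + i[H,\cdot\,]$. Direct differentiation yields $\partial_t\hat\chi_{ts}=-(v/s)\hat\chi'_{ts}$. Since the pair-interaction term in \eqref{1.1} is diagonal in the position occupation basis (it is a polynomial in the $n_x:=a_x^*a_x$), it commutes with every second quantization of a multiplication operator, in particular with $\hat\chi_{ts}=\dG(\chi_{ts})$. Hence $i[H,\hat\chi_{ts}]=i\,\dG([h,\chi_{ts}])$, and the entire estimate reduces to a one-particle bound on $i[h,\chi_{ts}]$ which we then lift to $\cF$ via the order-preserving map $b\mapsto \dG(b)$ on self-adjoint $b$.

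At the one-particle level, I will start from the kernel identity $[h,\chi_{ts}]_{xy}=h_{xy}(\chi_{ts}(y)-\chi_{ts}(x))$ and Taylor expand $\chi_{ts}(y)-\chi_{ts}(x)=f(d_X(y))-f(d_X(x))$, $f(r):=\chi((r-vt)/s)$, to order $n$, symmetrically around $x$ and $y$ so that odd-$k$ terms generate anticommutators and even-$k$ terms carry an extra derivative factor. Because $h_{xy}(d_X(x)-d_X(y))^k$ equals the kernel of $\ad{k}{d_X}{h}$, the result can be written
\begin{equation*}
i[h,\chi_{ts}]=\frac{1}{2s}\{v_h,\chi'_{ts}\}+\sum_{k=2}^{n}\frac{1}{s^k}T_k + R_{n+1},
\end{equation*}
where $v_h:=i[h,d_X]$, each $T_k$ is a self-adjoint one-particle operator built from $\chi^{(k)}_{ts}$ and $i^k\ad{k}{d_X}{h}$, and the Taylor remainder satisfies $|(R_{n+1})_{xy}|\le Cs^{-(n+1)}|h_{xy}||x-y|^{n+1}$.

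For the leading $k=1$ anticommutator I will use the identity $\{v_h,\chi'_{ts}\}=2\sqrt{\chi'_{ts}}\,v_h\,\sqrt{\chi'_{ts}}+[[v_h,\sqrt{\chi'_{ts}}],\sqrt{\chi'_{ts}}]$, which is licit because the definition \eqref{F} of $\cX$ bakes in $\sqrt{\chi'}\in C^\infty$. Since $\|v_h\|\le\kappa$ by \remref{remk}, the symmetric sandwich is bounded in the quadratic-form sense by $\kappa\,\chi'_{ts}$; the double commutator, being an $s^{-2}$ correction built from derivatives of $\chi'$, will be absorbed into the $k=2$ correction. Analogous sandwichings for $k\ge 2$, together with the bounds $\|\ad{k}{d_X}{h}\|\le\kappa'_{k-1}\le C$ furnished by \eqref{h-cond} (which follows from \eqref{k-cond}), give $\pm T_k\le C|\chi^{(k)}_{ts}|$ modulo further commutator terms one order higher in $1/s$. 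Finally, \eqref{k-cond} yields $\sup_x\sum_y|(R_{n+1})_{xy}|\le C\kappa_n s^{-(n+1)}$, hence $\|R_{n+1}\|\le C\kappa_n s^{-(n+1)}$ on $\hf$, so $\dG(R_{n+1})\le C\kappa_n s^{-(n+1)}N$.

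To finish, for each $k=2,\ldots,n$ I will construct $\xi^k\in\cX$ with $\supp\xi^k\subset\supp\chi$ and a smooth non-negative derivative $(\xi^k)'$ dominating $|\chi^{(k)}|$ pointwise, which is possible because $\chi^{(k)}\in C_c^\infty((0,\delta))$ and one may take $(\xi^k)'$ to be a smooth envelope of $|\chi^{(k)}|$ chosen so that $\sqrt{(\xi^k)'}\in C^\infty$. Since $\dG$ preserves operator order, $\pm\dG(\chi^{(k)}_{ts})\le\dG((\xi^k)'_{ts})=\widehat{(\xi^k)'}_{ts}$, and summing all contributions gives \eqref{RMB}. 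The main obstacle is \emph{sharpness of the constant} in front of $\hat\chi'_{ts}$: only the symmetric two-sided Taylor expansion together with the $\sqrt{\chi'_{ts}}$-sandwich yields the group-velocity bound $\kappa\,\chi'_{ts}$ (rather than $2\kappa$ or worse), which is precisely what permits the coefficient $v-\kappa$ on the right-hand side of \eqref{RMB} and, in turn, the strict negativity used in \thmref{thm1} for all $v\in(\kappa,c)$.
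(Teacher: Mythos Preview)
Your reduction to the one-particle problem and your treatment of the leading term are exactly the paper's: the relation $D\dG(f)=\dG(df)$ (the paper's \eqref{V-cond}--\eqref{D-d-rel}) kills the interaction, $\partial_t\chi_{ts}=-s^{-1}v\,\chi'_{ts}$, and the identity $\{v_h,\chi'_{ts}\}=2\sqrt{\chi'_{ts}}\,v_h\,\sqrt{\chi'_{ts}}+[\sqrt{\chi'_{ts}},[\sqrt{\chi'_{ts}},v_h]]$ together with $\|v_h\|\le\kappa$ gives the sharp coefficient $s^{-1}\kappa\,\chi'_{ts}$, just as in \lemref{lem5.1}. The only cosmetic difference is that you expand $i[h,\chi_{ts}]$ by a symmetric Taylor formula on the kernel, whereas the paper uses the functional-calculus commutator expansion of \cite{HunSig1} (Appendix~\ref{sec:B}); both produce the same hierarchy in $s^{-k}$.

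The gap is in the $k\ge2$ step. Your ``analogous sandwichings'' would replace $\sqrt{\chi'_{ts}}$ by $\sqrt{|\chi^{(k)}_{ts}|}$, but for $k\ge2$ the derivative $\chi^{(k)}$ generically changes sign, so $\sqrt{|\chi^{(k)}|}$ is only H\"older-$\tfrac12$ near its zeros and not Lipschitz. Consequently the double-commutator correction $[\sqrt{|\chi^{(k)}_{ts}|},[\sqrt{|\chi^{(k)}_{ts}|},B_k]]$ is \emph{not} automatically one full order higher in $1/s$, and the asserted bound $\pm T_k\le C|\chi^{(k)}_{ts}|+(\text{lower order})$ does not follow; indeed already on $\ell^2(\{1,2\})$ with $m=\mathrm{diag}(1,0)$ and $B=\sigma_x$ one has $\{m,B\}=\sigma_x$, which is not $\le C|m|$ for any $C$. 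The paper faces the identical obstruction and resolves it by an \emph{iterated} symmetrization (Proposition~\ref{thm4.1}): one writes $G\,\ad{k}{d_X}{ih}=g\,v\,\ad{k}{d_X}{ih}\,v+g\,v\,[v,\ad{k}{d_X}{ih}]$ with $v=\sqrt{|G|}$ piecewise smooth, re-expands the trailing commutator via \cite[Eq.~(B.14)]{HunSig1}, and repeats until only $O(s^{-(n+1)})$ remainders are left; the resulting sums of $(v^{(m)})^2$ are then majorized by a single smooth $(U^k)^2$ and finally by $(\xi^k)'$ via property~\ref{X1}. Your kernel-Taylor route can be completed the same way, but the proposal as written does not supply this mechanism.
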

% checekd 1.9: this, the 1-part case, and the appendices. 
\thmref{thm:RME} is deduced at the end of this section. 
Since the second  term on the r.h.s.~is of the same form as the leading, negative term {(recall $v>\kappa$ in \eqref{chi-ts})}, estimate \eqref{RMB} can be bootstrapped to obtain an integral inequality with $O(s^{-n})$ remainder, see \propref{prop:propag-est1} below. Hence, we call \eqref{rme} the \textit{recursive monotonicity estimate}.

For the next step, we observe that the second quantization \eqref{1.7} has the properties
\begin{align} \label{dG-lin}&\dG(v+cw)=\dG(v)+c\dG(w),\\
	\label{dG-adj}&\dG(v^*) = \dG(v)^*,\\
	\label{dG-pos}&\dG(v) \le \dG(w) \Longleftrightarrow v\le w,\\
	\label{dG-com1}&\ad{k}{\dG(v)}{\dG(w)}=\dG(\ad{k}{v}{w}),\quad k=1,2,\ldots,
	%		\\ 
	%	\label{V-cond}&[V, \dG(x_j)]=0,\ \forall \text{coordinate function } x_j,\,1\le j\le d
\end{align} 
for all $1$-particle operators $v$ and $w$ acting on $\hf$ and scalars $c$. 
These properties are either obvious or are obtained by direct computation using the canonical commutator relations, see e.g.~\cite[p.9]{FLS}.
%$h$ is a self-adjoint  operator on $\hf$ satisfying \eqref{h-cond} and 
Moreover, the second term on the r.h.s. of \eqref{1.1}, which we  denote by $V$,  satisfies %operator $V$ satisfies
\begin{align}
	\label{V-cond}[V,\dG(f)]=0\quad \forall f\in \ell^\infty(\Lam). %\to \Cb,
\end{align}

Relations \eqref{dG-lin}--\eqref{V-cond} allow us to reduce estimates on $D\hat\chi_{ts}$ to those on $d \chi_{ts}$, where $d b$ is the $1$-particle Heisenberg derivative,  defined as 
\begin{align}
	\label{d-Heis}
	&	d b(t):=\p_t b(t)+i[h, b(t)],
\end{align}
 for  a differentiable path of $1$-particle operator $b(t)$ on $\hf$.
Indeed, let $H_0:=\dG(h)$ and define  the free Heisenberg derivative as
\begin{align}
	\label{freeHeis-der}&D_0 A(t)=\frac{\partial}{\partial t}A(t)+i[H_0, A(t)].
\end{align}
Then, by \eqref{dG-com1}, we have $D_0\dG\big(b\big)=\dG\big(d b\big)$. This, together with property \eqref{V-cond}, gives
\begin{align} \label{D-d-rel}
	D\dG\big(f\big)=\dG\big(d f\big)\iff D\hat f=\widehat{d f},
\end{align}
for every multiplication operator (by a function) $f$. 

In \secref{sec:pfRME}, we prove the following:
\begin{proposition}\label{prop:RME-rme-red} 
	Suppose the assumptions of \thmref{thm1} hold.
Then,  for every $\chi\in\cX$,  there exist $C=C(n,\kappa_n,\chi)>0$ and, if $n\ge2$, $\xi^k=\xi^k(\chi)\in\cX,\,k=2,\ldots, n$, each supported in $\supp \chi$, s.th.~{for all $s>0,\,t\in\Rb$,}
	\begin{align}\label{rme}
		d\chi_{ts} 	\leq &-\frac{v -\kappa}{s} \chi'_{ts}+C\del{\sum_{k=2}^{n}s^{-k}(\xi^k)'_{ts}+ s^{-(n+1)}}.
	\end{align}
(The sum in the r.h.s. is dropped if $n=1$.)
\end{proposition}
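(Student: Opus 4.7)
The plan is to compute $d\chi_{ts}$ by expanding the commutator $i[h,\chi_{ts}]$ in powers of $s^{-1}$ and isolating the leading term. Since $\chi_{ts}=\chi(u)$ with $u(x):=(d_X(x)-vt)/s$, direct differentiation gives $\p_t\chi_{ts}=-(v/s)\chi'_{ts}$, so that
\[d\chi_{ts}=-\tfrac{v}{s}\chi'_{ts}+i[h,\chi_{ts}],\]
and the task reduces to expanding the commutator.

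Next, I would Taylor-expand the kernel difference $\chi_{ts}(y)-\chi_{ts}(x)$ to order $n$ with integral remainder. Using that $d_X$ is $1$-Lipschitz, this gives
\[(i[h,\chi_{ts}])_{xy}=\sum_{k=1}^n\frac{\chi^{(k)}(u(x))}{k!\,s^k}\bigl(d_X(y)-d_X(x)\bigr)^k ih_{xy}+R_n(x,y),\]
with $|R_n(x,y)|\le Cs^{-(n+1)}|x-y|^{n+1}|h_{xy}|$. Since $\ad{k}{d_X}{h}$ has kernel $(d_X(x)-d_X(y))^k h_{xy}$, the $k$-th summand is (up to sign) $(k!s^k)^{-1}\chi^{(k)}(u)\cdot i\ad{k}{d_X}{h}$, and by \remref{remk}/\lemref{lemA.1}, condition \eqref{k-cond} yields $\|\ad{k}{d_X}{h}\|\le\kappa'_{k-1}<\infty$ uniformly in $X$ for $1\le k\le n$. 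Schur's test applied to the remainder kernel, together with \eqref{k-cond}, bounds the remainder operator's norm by $C\kappa_n s^{-(n+1)}$, producing the constant term on the right of \eqref{rme}.

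The $k=1$ term is the crucial one. Its self-adjoint part equals $\tfrac{1}{2s}\{\chi'(u),\,i[h,d_X]\}$, and a direct computation gives the sandwich identity
\[\tfrac{1}{2}\{\chi'(u),\,i[h,d_X]\}=\sqrt{\chi'(u)}\,(i[h,d_X])\,\sqrt{\chi'(u)}+\tfrac{1}{2}\bigl[\sqrt{\chi'(u)},\,[\sqrt{\chi'(u)},\,i[h,d_X]]\bigr],\]
where the square root is well-defined since $\sqrt{\chi'}\in C^\infty$ by the definition of $\cX$. Because $i[h,d_X]$ is self-adjoint with $\|i[h,d_X]\|\le\kappa$, the sandwich is bounded above as a self-adjoint operator by $\kappa\chi'(u)$. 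The double-commutator residue is of order $s^{-2}$ (each commutator with $\sqrt{\chi'(u)}$ contributes a factor $|x-y|/s$ in the kernel), and so feeds into the higher-order sum. Combining this with the $-(v/s)\chi'_{ts}$ from the time derivative yields the advertised leading term $-\tfrac{v-\kappa}{s}\chi'_{ts}$. For each $k=2,\dots,n$, I would majorize $|\chi^{(k)}(\tau)|\le(\xi^k)'(\tau)$ pointwise, where $\xi^k(\tau):=\int_{-\infty}^\tau g_k(\sigma)\,d\sigma$ for a smooth non-negative majorant $g_k\ge|\chi^{(k)}|$ with $\supp g_k\subset\supp\chi'$ and $\sqrt{g_k}\in C^\infty$; this places $\xi^k\in\cX$ with $\supp\xi^k\subset\supp\chi$ and gives the middle sum in \eqref{rme}.

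The main obstacle will be bookkeeping the symmetrization corrections. Because the raw summands $\chi^{(k)}(u)\ad{k}{d_X}{h}$ are not self-adjoint while $d\chi_{ts}$ is, each rearrangement into sandwich form produces commutator residues that must be reabsorbed at higher order in $s^{-1}$ without exceeding the $s^{-(n+1)}$ remainder budget. The smoothness $\sqrt{\chi'}\in C^\infty$ built into $\cX$ is precisely what makes these successive rearrangements available, and the existence of $\xi^k\in\cX$ dominating both $|\chi^{(k)}|$ and all symmetrization residues at level $k$ is what closes the induction, yielding the ``recursive'' structure of \eqref{rme}.
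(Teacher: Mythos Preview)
Your proposal is correct and follows essentially the same route as the paper. The paper packages the commutator expansion and iterated symmetrization into a separate result (the ``symmetrized commutator expansion,'' \propref{thm4.1}, proved via the Helffer--Sj\"ostrand/commutator expansion of \cite{HunSig1}), while you carry out the equivalent computation directly at the kernel level using Taylor's formula; the subsequent passage to the operator inequality via $\sqrt{\chi'_{ts}}\,(i[h,d_X])\,\sqrt{\chi'_{ts}}\le\kappa\chi'_{ts}$ and property~\ref{X1} is identical.

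One small correction: your pointwise majorization $|\chi^{(k)}|\le(\xi^k)'$ cannot be applied to the raw term $\chi^{(k)}(u)\,\ad{k}{d_X}{ih}$, since that product is not a multiplication operator and an operator inequality requires the sandwich form first. The correct order---which your final paragraph in fact describes---is to symmetrize each level-$k$ term into $v^{(k)}_{ts}B_kv^{(k)}_{ts}$ (pushing the residues to levels $\ge k+1$ and into the $s^{-(n+1)}$ remainder), bound its self-adjoint part by $\|B_k\|\,(v^{(k)}_{ts})^2$, and only then invoke property~\ref{X1} to produce $\xi^k\in\cX$ with $(v^{(k)})^2\le(\xi^k)'$. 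This is exactly what the paper does in \lemref{lem5.1}, equations \eqref{4.14}--\eqref{4.25}.
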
 
This proposition, together with relation \eqref{D-d-rel}, implies \thmref{thm:RME}.

\subsection{Proof of Theorem \ref{thm1} assuming \propref{prop:RME-rme-red}}
\label{sec:pfMVE}
%\DETAILS{The proof of \thmref{thm1} differ from the corresponding result in \cite{FLS2} in that it is not dependent on the geometry of $X$. Otherwise there are no essential technical difference
%}

%In this section and the next one, we prove \thmref{thm1}. 

Recall that $\chi^\sharp_S,\,S\subset \Lam$ denotes the characteristic function of $S$.
The main result of this section is the following:

\begin{theorem}
	\label{thm:msb-cond}

{		Let the assumptions of \thmref{thm1} hold.
Suppose \propref{prop:RME-rme-red} holds and, for all $\chi\in \cX$ with $\norm{\chi}_{L^\infty}=1$, $s=\eta/c$, and $\abs{t}< s$,} the following holds: 
\begin{align}\label{chi-0s-est} %{local-est4}
	&\chi_{0s}  \le  \chi^\#_{X^c},\\
	\label{chi-ts-est}&\chi^\#_{X_\eta^\cp} \le \chi_{ts}.
\end{align}
{Then the conclusion of Theorem \ref{thm1} holds.} \end{theorem}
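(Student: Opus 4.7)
The plan is to use the hypothesized pointwise bounds \eqref{chi-0s-est}--\eqref{chi-ts-est} together with the order-preservation property \eqref{dG-pos} of second quantization to sandwich the number observable $N_{X_\eta^\cp}$ between ASTLOs, and then to propagate the resulting bound in time using the recursive monotonicity estimate \eqref{RMB}. Concretely, fix any $\chi\in\cX$ with $\norm{\chi}_{L^\infty}=1$ and set $s=\eta/c$. Applying $\dG$ to both sides of \eqref{chi-ts-est} and \eqref{chi-0s-est} yields the operator inequalities
\[
N_{X_\eta^\cp}=\dG(\chi^\sharp_{X_\eta^\cp})\le \dG(\chi_{ts})=\hat\chi_{ts}, \qquad \hat\chi_{0s}\le N_{X^\cp}.
\]
Testing the first against the evolved state $\om_t=\om\circ\al_t$ reduces the desired bound \eqref{MVE} to showing
\[
\om_t(\hat\chi_{ts})\le C\bigl(\om(N_{X^\cp})+\eta^{-n}\om(N)\bigr),\qquad |t|<s,
\]
for any state $\om$ satisfying \eqref{g0-cond}.

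Next, I would differentiate in time. By the Heisenberg-derivative identity \eqref{dt-Heis} and the RME \eqref{RMB} from \thmref{thm:RME} (which is obtained from \propref{prop:RME-rme-red} through the relation \eqref{D-d-rel} between $D$ and $d$),
\[
\partial_t\om_t(\hat\chi_{ts})\le -\tfrac{v-\kappa}{s}\om_t(\hat\chi'_{ts})+C\sum_{k=2}^{n}s^{-k}\om_t\bigl(\widehat{(\xi^k)'}_{ts}\bigr)+Cs^{-(n+1)}\om(N),
\]
where I have used that $[H,N]=0$ so $\om_t(N)=\om(N)$. Dropping the non-positive leading term and integrating over $[0,t]$, and using $\om(\hat\chi_{0s})\le\om(N_{X^\cp})$ from the sandwich above, I get
\[
\om_t(\hat\chi_{ts})\le \om(N_{X^\cp})+C\sum_{k=2}^{n}s^{-k}\int_0^{t}\om_{t'}\bigl(\widehat{(\xi^k)'}_{t's}\bigr)\,dt'+C|t|s^{-(n+1)}\om(N).
\]

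The remaining work is to control the middle ASTLO integrals by recursively reapplying the RME. Each $\xi^k\in\cX$ is supported inside $\supp\chi$, so its derivative $(\xi^k)'$ is a non-negative bump function with the same support, which can be dominated pointwise by a suitable $\tilde\chi^{(k)}\in\cX$; applying \propref{prop:RME-rme-red} to $\tilde\chi^{(k)}$ (and then $\dG$) and integrating again produces new positive ASTLO terms pre-multiplied by a further factor of $s^{-1}$, together with an additional $s^{-(n+1)}N$ remainder. Because $|t|/s<1$ by hypothesis, each nested time integration contributes only a bounded factor, so after at most $n-1$ iterations all positive ASTLO contributions are absorbed into remainders of order $s^{-n}\om(N)=c^{n}\eta^{-n}\om(N)$, while the boundary contributions accumulated at the start of each integration are again controlled by $\om(N_{X^\cp})$ via \eqref{chi-0s-est}. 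Combining with $N_{X_\eta^\cp}\le\hat\chi_{ts}$ yields \eqref{MVE}.

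The main technical obstacle I anticipate is the recursion bookkeeping: verifying at every level that the auxiliary functions constructed from the $\xi^k$'s still lie in $\cX$, retain support inside $\supp\chi$ so that \eqref{chi-0s-est} continues to apply and control each boundary term by $N_{X^\cp}$, and that the finitely many constants generated depend only on $n$, $\kappa_n$, and $c$, and not on the state $\om$, on $\eta$, or on the depth of the recursion (which is bounded by $n$).
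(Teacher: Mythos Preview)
Your overall strategy---sandwich $N_{X_\eta^\cp}$ between ASTLOs via \eqref{dG-pos} and \eqref{chi-0s-est}--\eqref{chi-ts-est}, then propagate with the RME and bootstrap---matches the paper's. The one organizational difference is that the paper runs the entire recursion at the $1$-particle level (packaged as \propref{prop:propag-est1}), obtains the operator bound $\chi_s(t)\le\chi_s(0)+Cs^{-1}\xi_s(0)+Cs^{-n}$ on $\hf$, and only then lifts to Fock space via the identity $\al_t(\hat f)=\dG(\beta_t(f))$ (\lemref{lemA.2}); you test against $\om_t$ throughout. Both routes are valid.

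Your description of the recursion step, however, contains a slip. You propose to dominate the bump $(\xi^k)'$ by some $\tilde\chi^{(k)}\in\cX$ and apply the RME to $\tilde\chi^{(k)}$, but that RME controls $\p_t\om_t(\widehat{\tilde\chi^{(k)}}_{ts})$, not the time integral $\int_0^t\om_{t'}(\widehat{(\xi^k)'}_{t's})\,dt'$ you actually need. The correct move (this is exactly the content of \propref{prop:propag-est1}) is to apply the RME to $\xi^k$ \emph{itself}: in \eqref{RMB} with $\xi^k$ in place of $\chi$, the quantity $\widehat{(\xi^k)'}_{ts}$ appears as the leading \emph{negative} term, so integrating and discarding the nonnegative endpoint $\om_t(\widehat{\xi^k}_{ts})\ge0$ yields directly
\[
\int_0^t\om_{t'}\bigl(\widehat{(\xi^k)'}_{t's}\bigr)\,dt'\le \tfrac{s}{v-\kappa}\Bigl(\om(\widehat{\xi^k}_{0s})+C\sum_{j=2}^n s^{-j}\int_0^t\om_{t'}\bigl(\widehat{(\eta^j)'}_{t's}\bigr)\,dt'+Cts^{-(n+1)}\om(N)\Bigr),
\]
with $\eta^j\in\cX$ supported in $\supp\xi^k\subset\supp\chi$. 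The factor $s$ in front, combined with the prefactors $s^{-j}$ ($j\ge2$) on the new integrals, is what produces the $s^{-1}$ gain per iteration; no auxiliary $\tilde\chi^{(k)}$ is needed. With this correction your bookkeeping concerns (support inclusion so that \eqref{chi-0s-est} controls each boundary term by $N_{X^\cp}$, constants depending only on $n,\kappa_n,c$) resolve exactly as you anticipate.
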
 
The proof of \thmref{thm:msb-cond} is found at the end of this section.
%To complete the proof of \thmref{thm:msb-cond}, we will use
It uses the following properties of the set $\cX$ from \eqref{F}:
\begin{enumerate}[label=(X\arabic*)]
	\item \label{X1}If $w\in  C^\infty$ and  $\supp w\subset (0,\delta)$, then the antiderivative $\int^x w^2\in \cX$. 
	\item \label{X2}If $\xi_1,\ldots,\xi_N\in\cX$, then there exists  $\xi\in \cX$ s.th.~$\xi_1+\ldots +\xi_N\le \xi$. 
\end{enumerate}

%Here, in adddition to \propref{prop:RME-rme-red}, we assume also ``geometric inequalities'' \eqref{chi-0s-est}--\eqref{chi-ts-est}, which are verified in \secref{sec:pfmsb-cond}.
{For the $1$-particle Hamiltonian $h$ in \eqref{1.1} and operators $b$ acting on $\hf$, let $\beta_t$ be the $1$-particle evolution
\begin{equation}\label{beta-def}
	\beta_t(b):=e^{ith}be^{-ith},
\end{equation} c.f.~\eqref{1.5},} and note that \begin{align}\label{3.7'}
\di_t\beta_t(b(t))=\beta_t(db(t)),
\end{align} c.f.~\eqref{dt-Heis}.  We denote
\begin{align}\label{2.2}
	\chi_{s}(t):= \beta_t(\chi_{ts})\ \qquad \text{ and }\ \quad  \chi_{s}'(t):= \beta_t(\chi_{ts}').\end{align}
		We now bootstrap \eqref{rme} to obtain the following integral estimate:
\begin{proposition}\label{prop:propag-est1} 
{			Let the assumptions of \thmref{thm1} hold.
Suppose \propref{prop:RME-rme-red} holds}.
	Then, for every $\chi\in \cX$, % and any integer $n\le p -1$, 
 there exist $C=C(n,\kappa_n,\chi,v-\kappa)>0$ (with $v$ and $\kappa$ from \eqref{chi-ts} and \eqref{kappa}, resp.)  and, if $n\ge2$, $\xi^k=\xi^k(\chi)\in \cX$, $2\le k\le n$, each supported in $\supp\chi$, s.th.~{for all $s>0,\,t\ge0$,}
	\begin{align}
		&\int_0^t \chi_{s}'(r) dr  \le C \Big(s\chi_{s}(0) 
		+ \sum_{k=2}^n s^{-k+2} \ \xi^k_{s}(0) +   ts^{-n}\Big), 
		%&\int_0^t \big\lan \chi'_{rs} \big\ran_r dr  \le C \Big(s\big \lan  \chi_{0s} \big\ran_0 + \sum_{k=2}^n s^{-k+2} \big\lan \xi^k_{0s} \big\ran_0 +   s^{-n}\Big), 
		\label{propag-est31} 
	\end{align}
	where the sum should be dropped if $n=1$.
\end{proposition}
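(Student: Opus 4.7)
The strategy is a Grönwall-type bootstrap of the one-particle inequality \eqref{rme}: first convert it into an integral inequality, then iterate $n$ times and collect terms by their $s$-power. For the first step, I would apply the unitary conjugation $\beta_r$ from \eqref{beta-def} to both sides of \eqref{rme}. Since $\beta_r$ preserves operator order on $\hf$, and \eqref{3.7'} gives $\beta_r(d\chi_{rs}) = \partial_r\chi_s(r)$ (and similarly for the $\xi^k$ terms via \eqref{2.2}), this produces the operator-valued ODE inequality
\begin{equation*}
\partial_r \chi_s(r) \leq -\tfrac{v-\kappa}{s}\chi_s'(r) + C\sum_{k=2}^n s^{-k}(\xi^k)_s'(r) + Cs^{-(n+1)}
\end{equation*}
on $\hf$. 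Integrating from $0$ to $t$, dropping the non-negative term $\chi_s(t)$ on the left (valid since $\chi \geq 0$ for $\chi \in \cX$ and $\beta_r$ preserves positivity), and rearranging yields the first-level bound
\begin{equation*}
I(\chi) \leq C_1\!\left(s\chi_s(0) + \sum_{k=2}^n s^{-k+1}I(\xi^k) + ts^{-n}\right), \qquad I(\eta) := \int_0^t \eta_s'(r)\,dr.
\end{equation*}

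Next, I would bootstrap. Since each $\xi^k \in \cX$, \propref{prop:RME-rme-red} applies verbatim to $\xi^k$, producing ``grandchildren'' $\xi^{k,j}\in\cX$ and the analogous bound for $I(\xi^k)$. Substituting recursively $n$ times, each iteration level $m \geq 2$ contributes ``good'' terms of the form $s^{-K+2}\xi^{(k_1,\ldots,k_{m-1})}_s(0)$ with $K = \sum_{i=1}^{m-1}(k_i - 1) + 1 \geq m$, while the leftover $I$-terms pick up an additional factor $s^{-k_m+1}\leq s^{-1}$. After $n$ iterations, the remaining $I$-terms have coefficients $\lesssim s^{-n}$, and I would dominate them using the trivial operator inequality $(\xi^{(\ldots)})_s'(r) \leq \|\xi'\|_\infty \mathbf{1}$ (valid since $\beta_r$ is norm-preserving), giving a total contribution $\leq C t s^{-n}$. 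The multiplicative constants accumulated across iterations remain bounded for $s\geq 1$ because $\sum_{k=2}^n s^{-k+1}$ is a convergent geometric-type sum; the regime $s\leq 1$ is trivial from the a priori estimate $I(\chi) \leq t\|\chi'\|_\infty \leq C t s^{-n}$, which is itself dominated by the $ts^{-n}$ term when $s\leq 1$.

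Finally, I would collect the good terms by the value of $K$. For each $k \in \{2,\ldots,n\}$, the closure property (X2) of $\cX$ lets me dominate the finite sum of all $\xi^{(k_1,\ldots,k_{m-1})}_s(0)$ with $K = k$ by a single $\xi^k_s(0)$, $\xi^k\in \cX$, and each such child is supported in $\supp\chi$ by \propref{prop:RME-rme-red}. Good terms arising at deeper iteration levels with $K > n$ satisfy $s^{-K+2}\leq s^{-n+2}$ when $s\geq 1$, so they fold into the $k=n$ slot by enlarging $\xi^n$ via (X2). The principal technical obstacle is precisely this combinatorial bookkeeping: the cascade multiplies the number of children at each level, and it is the stability of $\cX$ under finite sums, property (X2), that keeps the output of the iteration within the declared form indexed by $k\in\{2,\ldots,n\}$ and prevents the higher-$K$ terms from spoiling the bound.
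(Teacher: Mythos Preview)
Your approach is essentially the same as the paper's: apply $\beta_r$ to \eqref{rme}, integrate, drop the nonnegative $\chi_s(t)$, and bootstrap by reapplying the resulting integral inequality to each $I(\xi^k)$, using (X2) to merge terms of equal $s$-power. The paper records this as \eqref{propag-est2}--\eqref{propag-est33} and then simply says ``bootstrapping this procedure, we arrive at \eqref{propag-est31}''; you are more explicit about two points the paper glosses over---terminating the cascade by the trivial bound $(\xi)'_s(r)\le\|\xi'\|_\infty\mathbf{1}$ once the $I$-coefficients reach $s^{-n}$, and disposing of $s\le 1$ separately via $I(\chi)\le t\|\chi'\|_\infty\le Cts^{-n}$---but these are refinements of the same argument, not a different route.
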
 

\begin{remark} \propref{prop:propag-est1} can be reformulated in terms of expectation. 
Indeed, instead of the evolution $\chi_{s}(t)$, we could have used the expectation:
	\begin{equation}\label{br}
		 \om_t\del{\chi_{ts}} \equiv \Tr(\chi_{ts}\rho_t)
	\end{equation}
	of $\chi_{ts}$ in the state $\rho_t$ solving \eqref{1.5} and instead of \eqref{dt-Heis}, used the relation 
	\begin{align}
		{d\over{dt}}\om_t\del{\chi_{ts}} =&\om_t\del{D\chi_{st}}.\label{2.1'}
	\end{align}
	These two formulations are related through the identity
	\begin{align}\label{2form-rel}			\om_t\del{\chi_{ts}} =\om\del{\chi_{s}(t)}.
	\end{align}	
\end{remark}
\begin{proof}[Proof of \propref{prop:propag-est1}]
%	Within this proof, 	all constants $C>0$ depend only on $\chi$ and $n$.
%For simplicity, below we take $t\ge0$. The case for $t\le0$ follows from the time reversal symmetry. 

	For each fixed $s$, {integrating formula \eqref{3.7'} 
	with $b(t)\equiv \chi_{ts}$ in $t$ }
%	and using $\di_r \chi_s(r)=\beta_r(d\chi_{sr})$ 
	gives
	\begin{align} \label{eq-basic}  
		\chi_s(t)-\int_0^t \beta_r(d\chi_{sr})\,dr= \chi_s(0).
	\end{align}
%	Since $\di_r \chi_s(r)=\beta_r(d\chi_{sr})$ (c.f.~\eqref{dt-Heis}), w
	We  apply  inequality \eqref{rme} to the second term on the l.h.s. of \eqref{eq-basic} to obtain %vector  $\psi_t=e^{-iHt}\psi_0$, with $\psi_0\in\mathcal{D}(H)\cap\mathcal{D}(N)$, give  
	\begin{align} \label{propag-est2} 
		&\chi_s(t)+(v-\kappa)s^{-1}\int_0^t\chi'_s(r)\, dr\notag\\
		\le&  \chi_s(0)+C\del{ \sum_{k=2}^ns^{-k}\int_0^t (\xi^k)'_s(r)\,dr + t s^{-(n+1)}},
	\end{align}
	where $C=C(n,\kappa_n,\chi)>0$ and  the second term in the r.h.s. is dropped for $n=1$.  Since $\chi_s(t)\ge0$  due to the positive-preserving property of $\beta_t$ (c.f.~\eqref{beta-def}), $\kappa < v$ and $s>0$, inequality \eqref{propag-est2} implies, after dropping $\chi_s(t)$ and multiplying both sides by $s(v-\kappa)^{-1}\ge0$, that
	\begin{align}
		\int_0^t \chi'_s(r)\, dr 
		\le C 
		\del{s\chi_s(0) +   \sum_{k=2}^ns^{-k+1}\int_0^t (\xi^k)'_s(r)\,dr+ t s^{-n}}, \label{propag-est3} 
	\end{align}
	where the second term in the r.h.s.~is dropped for $n=1$. Note that from this point onward, the constant $C>0$ depends also on $v-\kappa$.
	
%	\DETAILS{This is the only place $c$ comes into $C$ in \eqref{MVE} through $v-\kappa$, roughly $c-\kappa$. This difference is bounded from above if $c>2\kappa$, as in \thmref{thm2}, \thmref{thm3}}

	If $n=1$, then \eqref{propag-est3} gives \eqref{propag-est31}. If $n\ge2$, applying \eqref{propag-est3} to the term $\int_0^t (\xi^k)'_s(r)\,dr$ for $k=2$, 
%	 and using property \ref{X2}, 
	 we obtain
	\begin{align}
		\int_0^t\chi'_s(r)\, dr \, 
		\le C \Big( \, &%(v-k)^{-1}
		s \chi_s(0)+ \xi^2_s(0) + \sum_{k=3}^n s^{-k+1} \int_0^t (\eta^k)'_s(r)\,dr + \,{ t s^{-n}}\Big), \label{propag-est33} 
	\end{align}
	where the third term in the r.h.s.~is dropped for $n=2$, and $\eta^k=\eta^k(\xi^2,\xi^k)\in\cX, k=3,\ldots, n$. Bootstrapping this procedure, we arrive at \eqref{propag-est31}.
\end{proof}

%\subsection{Concluding the proof of Theorem \ref{thm1}}
\begin{proof}[Proof of \thmref{thm:msb-cond}]
To fix ideas, we take $t\ge0$ within this proof.  The case $t\le0$ follows from time reflection.

Fix $\chi\in\cX$ \,{with $\norm{\chi}_{L^\infty}=1$} and consider \eqref{propag-est2}.        Retaining the first term in the l.h.s.~of \eqref{propag-est2} and dropping the second one, which is non-negative since $\chi'\ge0$ and $v>\kappa$ (see \eqref{chi-ts}), we obtain
	\begin{align} \label{propag-est2bis}
        \chi_{s}(t)  \le \,\chi_{s}(0)+C\del{ \sum_{k=2}^ns^{-k}\int_0^t (\xi^k)'_{s}(r) dr + t s^{-(n+1)}},                %\big\lan \chi_{ts} \big \ran_t  \le \,\big \lan \chi_{0s}\big \ran_0+C\del{ \sum_{k=2}^ns^{-k+1}\int_0^t \big\lan(\xi^k)'_{rs}\big \ran_r dr + t s^{-(n+1)}}.
\end{align}
Applying \eqref{propag-est31} %and again \eqref{chi-0s-est}
to the second term on the r.h.s.~and using property \ref{X2},
 we deduce that 				
\begin{align} \label{propag-est4} 
	\chi_{s}(t)  \le	\chi_{s}(0)+Cs^{-1}\xi_s(0)+Cs^{-n}, \end{align}
for some fixed $\xi\in\cX,\,C>0$ and all  $s> t$.

%Fix $Y\subset \Rb^d$ with $\dist(X,Y)=:\eta>0$. 
 Let $s=\eta/c$ and $\eta>  ct$. We first consider the r.h.s. of \eqref{propag-est4}.
Using $\chi_{s}(0)\equiv \chi_{0 s}$ in \eqref{chi-0s-est} and 
noting $\supp\xi^k\subset \supp\chi$ for each $k$, we find that
\begin{align}%{local-est4}
	%&\big \lan \chi_{0s} \big\ran_0 \le \br{\chi^\sharp_{X_{\delta s}^\cp}}_0,\\\label{hatchi-ts-est}&\big\lan \chi^\sharp_Y \big\ran_t \le \big\lan \chi_{ts}\big \ran_t,
	&\chi_{s}(0) +Cs^{-1}\xi_s(0)\le (1+Cs^{-1})\chi^\sharp_{X^\cp}.\label{2.13'}
\end{align}
By \eqref{2.13'} and property \eqref{dG-pos}, we have 
\begin{equation}
	\label{hatchi-0s-est} %{local-est4}
	\widehat{\chi_s(0)}+ Cs^{-1}         \widehat{\xi_s(0)}\le(1+Cs^{-1}) N_{X^c}.
\end{equation}

 Next, consider the l.h.s. of \eqref{propag-est4}.
Applying $\beta_t$ to %\eqref{chi-0s-est} and 
\eqref{chi-ts-est}, we find that 
\begin{align}
& \beta_t(\chi^\sharp_{X_\eta^\cp})  \le \chi_{s}(t), 	\label{2.14'}			 \end{align} 
We show in Appendix \ref{sec:A}, \lemref{lemA.2}, that for every function $f$ on $\Lam$ and  $\hat f\equiv \dG(f)$, 
%\DETAILS{This is checked with more details}
\begin{equation}
	\label{2.30}\al_t(\hat f)=\dG (\beta_t(f)).
\end{equation}
This, together with \eqref{2.14'}, yields
\begin{equation}
	\label{hatchi-ts-est}\al_t(N_{X_\eta^\cp}) \le  \widehat{\chi_s(t)}.
\end{equation}

Finally, 
combining estimates \eqref{propag-est4}, \eqref{hatchi-0s-est}, \eqref{hatchi-ts-est} and recalling  the assumption $\eta\ge1$, we conclude that for all $t<s=\eta/c$,
\begin{align*}
	\al_t(N_{X_\eta^\cp})&\le %\big(1+C_{f,c,n,\delta_0}\eta^{-1}\big) 
	   C(N_{X^\cp} +\eta^{-n} N),
\end{align*}
%where, recall, $M_\eta=\eta^{n} N_{X^c} +N $. 
which is \eqref{MVE}. This completes the proof of \thmref{thm:msb-cond}.
\end{proof}

{\thmref{thm1} follows from  
\propref{prop:RME-rme-red}, \thmref{thm:msb-cond}, and the following lemma, proved in \secref{sec:pfmsb-cond}:  
\begin{lemma}\label{lem:chi-ts-prop}
	Let $v\in(c,\kappa)$, $s=\eta/c$, and $\delta=c-v$ in definitions \eqref{chi-ts}--\eqref{F}. Then  \eqref{chi-0s-est}--\eqref{chi-ts-est} hold for the family \eqref{chi-ts} with $\norm{\chi}_{L^\infty}=1$.
\end{lemma}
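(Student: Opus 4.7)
The lemma is a purely pointwise comparison of scalar functions on $\Lambda$, with no dynamical content; the only inputs are the defining properties of $\cX_\delta$ in \eqref{F} and the numerology $s=\eta/c$, $\delta=c-v$. (I read $v\in(\kappa,c)$, so that $\delta>0$.) My plan is to first extract three pointwise facts about any $\chi\in\cX_\delta$ with $\Norm{\chi}_{L^\infty}=1$, then check \eqref{chi-0s-est}--\eqref{chi-ts-est} by comparing the argument of $\chi$ with the thresholds $0$ and $\delta$.

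\textbf{Step 1: profile properties.} From $\chi\in C^\infty$, $\chi\ge 0$, $\chi'\ge 0$, $\supp\chi\subset\R_{\ge 0}$, $\supp\chi'\subset(0,\delta)$, and $\Norm{\chi}_{L^\infty}=1$, I read off that $\chi$ is non-decreasing, $\chi\equiv 0$ on $(-\infty,0]$, and $\chi\equiv 1$ on $[\delta,\infty)$; in particular $0\le \chi(u)\le 1$ for all $u\in\R$ and, crucially,
\begin{equation}\label{chi-thresh}
\chi(u)=0\ \text{if } u\le 0,\qquad \chi(u)=1\ \text{if } u\ge \delta.
\end{equation}

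\textbf{Step 2: proof of \eqref{chi-0s-est}.} At $t=0$, $\chi_{0s}(x)=\chi(d_X(x)/s)$. If $x\in X$, then $d_X(x)=0$, so by \eqref{chi-thresh}, $\chi_{0s}(x)=0=\chi^\#_{X^c}(x)$. If $x\in X^c$, then $\chi^\#_{X^c}(x)=1$, while $\chi_{0s}(x)\le 1$ by Step 1. In either case $\chi_{0s}(x)\le \chi^\#_{X^c}(x)$.

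\textbf{Step 3: proof of \eqref{chi-ts-est}.} For $x\in X_\eta$, the l.h.s.~is $0$ and the inequality is trivial. For $x\in X_\eta^c$, $\chi^\#_{X_\eta^c}(x)=1$ and $d_X(x)>\eta$, so I only need
\begin{equation}\label{thresh-check}
\frac{d_X(x)-vt}{s}\ge \delta,\qquad\text{i.e.}\qquad d_X(x)\ge vt+\delta s,
\end{equation}
for then $\chi_{ts}(x)=1$ by \eqref{chi-thresh}. Using $s=\eta/c$ and $\delta=c-v$,
\[
vt+\delta s \;=\; vt+(c-v)\tfrac{\eta}{c} \;\le\; v\abs{t}+(c-v)\tfrac{\eta}{c}
\;<\; v\tfrac{\eta}{c}+(c-v)\tfrac{\eta}{c} \;=\; \eta,
\]
where in the strict inequality I used $\abs{t}<s=\eta/c$ together with $v>0$. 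Hence $d_X(x)>\eta>vt+\delta s$, which gives \eqref{thresh-check} and therefore \eqref{chi-ts-est}.

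\textbf{Difficulty.} There is essentially no obstacle: the lemma is a bookkeeping check that the scales $s$, $\delta$, and $\eta$ have been calibrated so that the smooth cutoff $\chi$ is already saturated at $0$ (resp.~at $1$) wherever the indicator $\chi^\#_{X^c}$ (resp.~$\chi^\#_{X_\eta^c}$) is. The only mildly subtle point is ensuring the strict inequality $vt+\delta s<\eta$ for all $\abs{t}<s$, which is why one takes $\delta=c-v$ with $v<c$ rather than $\delta=c$; this is exactly the choice stated in the lemma, so no further work is needed.
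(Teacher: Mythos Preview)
Your proof is correct and follows essentially the same approach as the paper's own proof in \secref{sec:pfmsb-cond}: both arguments reduce to the pointwise thresholds $\chi(u)=0$ for $u\le 0$ and $\chi(u)=1$ for $u\ge \delta$, then verify the numerology $v\abs{t}+\delta s<\eta$ exactly as you do in Step~3. Your write-up is in fact slightly more explicit than the paper's (you spell out the case split $x\in X$ vs.~$x\in X^c$ and handle the sign of $t$ via $vt\le v\abs{t}$), but there is no substantive difference in method.
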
}

This completes the proof of \thmref{thm1}, modulo the proofs of \propref{prop:RME-rme-red} and \lemref{lem:chi-ts-prop}, given in the next section. \qed

\section{Proofs of \propref{prop:RME-rme-red} and \lemref{lem:chi-ts-prop} }\label{sec:pfRME}
In this section, we prove the $1$-particle recursive monotonicity estimate, \propref{prop:RME-rme-red}, and the geometric  estimates \eqref{chi-0s-est}--\eqref{chi-ts-est}. 

	\subsection{Proof of \propref{prop:RME-rme-red}}\label{sec:3.1}
Recall the definition of the operators $\chi_{ts}$ in \eqref{chi-ts}.
To begin with, we prove the following lemma:
\begin{lemma}
		\label{lem5.1}
Suppose \eqref{k-cond} holds with some $n\ge1$.
			Then, for every $\chi\in\cX$,  there exist $\xi^k=\xi^k(\chi)\in\cX,\,k=2,\ldots, n$ (dropped if $n=1$), each supported in $\supp \chi$, and some $C=C(n,\kappa_n,\chi)>0$  s.th.~for all $t\in\Rb,\,s>0$,
		\begin{equation}
			\label{4.4}
			L\chi_{ts}\le s^{-1}\kappa\chi_{ts}'+ C\del{\sum_{k=2}^{n}s^{-k}(\xi^k)'_{ts}+ s^{-(n+1)}},
		\end{equation}
		where $L=i\sbr{h,\cdot}$  and $\kappa$ is as in \eqref{kappa}. 
		(The sum in the r.h.s. is dropped if $n=1$.)
	\end{lemma}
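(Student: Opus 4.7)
The plan is to Taylor expand the commutator kernel of $L\chi_{ts} = i[h,\chi_{ts}]$ to order $n+1$ in the differences $(d_X(y) - d_X(x))/s$, then bound each term as an operator on $\hf$ by a $\sqrt{\chi'}$-sandwich.

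Write $g(x) := (d_X(x) - vt)/s$, so $\chi_{ts}$ is multiplication by $\chi\circ g$ and the kernel of $L\chi_{ts}$ is $ih_{xy}(\chi_{ts}(y) - \chi_{ts}(x))$. Taylor's theorem with remainder gives
\[
\chi_{ts}(y) - \chi_{ts}(x) = \sum_{k=1}^{n} \frac{\chi^{(k)}(g(x))}{k!\, s^k}\bigl(d_X(y) - d_X(x)\bigr)^k + R_n(x,y),
\]
with $|R_n(x,y)| \le C(\chi)\, s^{-(n+1)}|d_X(y) - d_X(x)|^{n+1}$. Since $d_X$ is $1$-Lipschitz, $|d_X(y) - d_X(x)| \le |x-y|$, and therefore $L\chi_{ts} = \sum_{k=1}^n s^{-k} B_k + \widetilde R$, where $B_k$ has kernel $(i/k!)\,\chi^{(k)}(g(x))\, h_{xy}\,(d_X(y) - d_X(x))^k$. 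Schur's test and \eqref{k-cond} immediately give $\norm{\widetilde R}_{\cB(\hf)} \le C\kappa_n s^{-(n+1)}$, hence $\widetilde R \le C s^{-(n+1)}\one_\hf$.

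For the leading term $s^{-1}B_1$, I factor $\chi' = w^2$ with $w := \sqrt{\chi'} \in C^\infty_c(\R)$ (guaranteed by $\chi\in\cX$) and write
\[
s^{-1} B_1 = s^{-1}\, w_{ts}\, i[h, d_X]\, w_{ts} + s^{-1} E_1,
\]
where $w_{ts}$ denotes multiplication by $w\circ g$ and $E_1$ collects the asymmetric correction needed to replace the one-sided factor $\chi'(g(x))$ from the Taylor expansion by the symmetric sandwich $w_{ts}(\cdot)w_{ts}$. By \eqref{kappa} and the Lipschitz property of $d_X$, we have $\norm{i[h, d_X]} \le \kappa$, so the sandwich is dominated (in operator sense) by $\kappa w_{ts}^2 = \kappa\chi'_{ts}$, yielding the exact leading constant. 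For each $k = 2, \ldots, n$, the same trick is applied to $B_k$: using property \ref{X1} of $\cX$, pick $w_k \in C^\infty_c$ with $\supp w_k \subset \supp\chi$ and $w_k^2$ bounded below by $|\chi^{(k)}|/k!$ up to a $\chi$-dependent constant, and set $\xi^k := \int^x w_k^2 \in \cX$ so that $(\xi^k)' = w_k^2$. Combined with the operator bound $\norm{\ad{k}{d_X}{h}} \le \kappa_{k-1} \le C\kappa_n$ (a consequence of \eqref{k-cond}, cf.~\lemref{lemA.1} and \remref{remk}), this produces $s^{-k}B_k \le C s^{-k}(\xi^k)'_{ts}$, modulo higher-order corrections $E_k$ of the same type as $E_1$.

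The main obstacle is the bookkeeping of the symmetrization corrections $E_k$. Each $E_k$ arises because Taylor expansion at $x$ is asymmetric in $x,y$, and recasting the one-sided factor as the symmetric sandwich $w_{ts}(\cdot)w_{ts}$ introduces a correction whose kernel carries one extra factor of $|d_X(y) - d_X(x)|/s$. Bootstrapping the symmetrization at most $n$ times produces a finite chain of corrections, and one checks inductively that each stays supported inside $\supp\chi$ and is absorbed into either a later $(\xi^{k'})'_{ts}$ term with $k' > k$ or into the $s^{-(n+1)}$ remainder controlled in the previous paragraph. This inductive absorption, together with preservation of the exact constant $\kappa$ in the leading term, is the delicate part of the argument.
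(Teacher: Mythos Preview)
Your approach is essentially the same as the paper's, just presented more concretely. The paper packages the expansion and iterative symmetrization into an abstract ``symmetrized commutator expansion'' (\propref{thm4.1} in Appendix~\ref{sec:B}), which in turn cites the commutator-expansion machinery of \cite{HunSig1} (ultimately Helffer--Sj\"ostrand). You replace this by direct Taylor expansion of the kernel, which is legitimate here precisely because $\Phi=d_X$ is a multiplication operator, so $\chi(\Phi)$ acts by pointwise multiplication and Taylor's theorem applies literally. This buys you an elementary argument at the cost of generality; the paper's abstract lemma would still work if $\Phi$ were a genuine unbounded self-adjoint operator.

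One point of imprecision worth flagging: for $k\ge 2$ you write ``pick $w_k$ with $w_k^2$ bounded below by $|\chi^{(k)}|/k!$ \ldots\ this produces $s^{-k}B_k\le Cs^{-k}(\xi^k)'_{ts}$.'' As stated this is a non sequitur: $B_k=\tfrac{1}{k!}\chi^{(k)}_{ts}\cdot i\,\ad{k}{d_X}{h}$ is \emph{not} self-adjoint (the multiplication operator sits only on the left), so an operator inequality $B_k\le C(\xi^k)'_{ts}$ makes no sense until you first rewrite $B_k$ in sandwich form $v_k\,(\pm i\,\ad{k}{d_X}{h})\,v_k+E_k$ with $v_k=|\chi^{(k)}/k!|^{1/2}$, and then take the self-adjoint part of the whole expansion (as the paper does explicitly in \eqref{4.14}). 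Only after that can you invoke $\|\ad{k}{d_X}{h}\|\le\kappa_{k-1}$ and $v_k^2\le C\,w_k^2$ to get the operator bound. Your final paragraph on bootstrapping the $E_k$'s indicates you have this in mind; just make sure the order of operations (symmetrize into sandwich, then self-adjoint part, then pointwise domination) is explicit when you write it out.
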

%\begin{remark}
%{	Note here we use the weaker assumption \eqref{h-cond} rather than \eqref{k-cond}.}
%\end{remark}
	\begin{proof}
		 Throughout the proof we fix $t$ and write $\chi_s\equiv \chi_{ts}$. \DETAILS{All estimates below are independent of $t$.} Since  $\chi'\ge0$ for $\chi\in\cX$,  expansion \eqref{4.10} with $A=ih, \,\Phi=d_X$ yields (see \corref{corC2}):
\begin{align}
	\label{4.11}
	\begin{aligned}
		L\chi_{s}=&s^{-1}\sqrt{\chi'_{s}}(Ld_X)\sqrt{\chi'_s}
		\\
		&+ \sum_{k=2}^ns^{-k}\sum_{m=1}^{N_k}g^{(m)}(s)v^{(m)}_{s} B_kv^{(m)}_{s}+s^{-(n+1)}R(s),
	\end{aligned}
\end{align}
where the sum in the second line is dropped for $n=1$. For  $n\ge2$, $1\le k \le n$,  $1\le m\le N_k$, the functions $v^{(m)}$ are piece-wise smooth and satisfy
\begin{align}
	\label{4.13}
	&\supp v^{(m)}\subset \supp\chi',\quad \norm{v^{(m)}}_{L^\infty}\le C(\chi),
\end{align}
$g^{(m)}(s)$ are piece-wise constant and take values in $\pm1$, and  $B_k=\ad{k}{d_X}{ih}$. Furthermore, by condition \eqref{k-cond},  \lemref{lemA.1}, and the remainder estimate \eqref{C4}, the operators $B_k$ and $R(s)$ are bounded on $\hf$, satisfying
\begin{align}
	\label{4.12}
	& \norm{Ld_X}\le\kappa\equiv\kappa_0,\quad \norm{B_k}\le \kappa_{k-1},\quad \norm{R(s)}\le C(n,\chi)\kappa_n,
\end{align}
with $\kappa_p$'s given in \eqref{k-cond}.

Next, adding the adjoint to both sides of \eqref{4.11}, using the self-adjointness of the first term on both sides, and then dividing the result by two,  we find
		\begin{align}\label{4.14}	
				L\chi_{s}=s^{-1}\sqrt{\chi'_{s}}(Ld_X)\sqrt{\chi'_{s}}&+\frac12\sum_{k=1}^ns^{-k}\sum_{m=1}^{N_k}g^{(m)}(s)v^{(m)}_{s} \del{B_k+B_k^*}\notag v^{(m)}_{s}\\&+\frac12 s^{-(n+1)}\del{R(s)+R(s)^*}.
		\end{align}
We can now derive an operator inequality from expansions \eqref{4.14} and uniform estimates \eqref{4.12} as 
		\begin{equation}
			\label{4.25}
			L\chi_{s} \le \kappa\chi'_{s}+C\del{\sum_{k=2}^ns^{-k} (U^k_{s})^2 +s^{-(n+1)}},
		\end{equation}
		where the sum in the r.h.s.~of  \eqref{4.25} is dropped for $n=1$ and $	C=C(n,\kappa_n,\chi)>0.$
		  For $n\ge2$,  each $U^k\in C_c^\infty$ and  is supported in $\supp \chi'$.
		
		Lastly, 
		in view of  property \ref{X1},  we find that for each $2\le k\le n$, there exists $\xi^k\in\cX$ s.th.
		$(U^k_{s})^2\le(\xi^k)'_{s}.$
		Plugging this back to \eqref{4.25} and substituting back $\chi_s\equiv \chi_{ts}$ etc.~yields \eqref{4.4}. This completes the proof.
	\end{proof}

\begin{proof}[Proof of \propref{prop:RME-rme-red}]
		We compute
	\begin{align} \label{eq:deriv}
		{\partial\over{\partial t}} \chi_{ts}=-s^{-1}v \,  \chi'_{ts}.
	\end{align}
	By \eqref{4.4}, we find 
	\begin{align*}
		L\chi_{ts} &\le  \kappa s^{-1} \chi'_{ts} +  C\del{\sum_{k=2}^ns^{-k}(\xi^k)'_{ts}+s^{-(n+1)}},
	\end{align*}
	where $C=C(n,\kappa_n,\chi)$ and the second term in the r.h.s. is dropped for $n=1$. 
	This, together with \eqref{eq:deriv} and definition \eqref{Heis-der}, implies \eqref{rme}.
\end{proof}

\subsection{Proof of \lemref{lem:chi-ts-prop}}\label{sec:pfmsb-cond}

%We recall that \thmref{thm1} involves three parameters: time $t$, light cone slope $c>\kappa$, and the distance $\eta\ge1$. 
%Up to this point, we have introduced three parameters:  $s>0$, $v\in(c,\kappa)$ in \eqref{chi-ts}, %{5.2.1}, 
%as well as  $\delta>0$ in   \eqref{F}. In what follows, we specify the free parameters  in terms of $t,\,c,\,\kappa$. 
%
%Indeed, to complete the proof of Theorem \ref{thm1}, it remains to show:

% checked 1.9

%\DETAILS{It seems that $v$ can be eliminated. Let $\delta=c-\kappa$ in $\cF$ and $v=c$ in \eqref{chi-ts}}

First, by \eqref{F}, we have  $\supp \chi\subset (0,\infty)$, and therefore $\supp \chi\big(\frac{\cdot}{s}\big) \subset (0,\infty)$ for any $s>0$. This implies 
\begin{equation}\label{3122}
	\chi_{0s}\equiv\chi\del{\frac{d_X}{s}}\le\theta(d_X)\equiv \chi^\sharp_{X^\cp},
\end{equation}
	where 	$\theta:\Rb\to\Rb$ is the characteristic function of $\Rb_{>0}$ (see \figref{fig:f0}).
By these facts, we conclude 
\eqref{chi-0s-est}.

\begin{figure}[H]
	\centering
	\begin{tikzpicture}[scale=3]
		\draw [->] (-.5,0)--(2,0);
		\node [right] at (2,0) {$\mu$};
		\node [below] at (.3,0) {$0$};
		\draw [fill] (.3,0) circle [radius=0.02];
		
		%					 											\node [below] at (.75,0) {$\delta s$};
		%					 		\draw [fill] (.75,0) circle [radius=0.02];
		
%		\node [below] at (1.5,0) {$\delta $};
%		\draw [fill] (1.5,0) circle [radius=0.02];

		\draw [very thick] (-.5,0)--(.3,0);
		\draw [very thick] (.3,1)--(2,1);				
		\filldraw [fill=white] (.3,1) circle [radius=0.02];
		
		\draw [dashed, very thick] (-.5,0)--(.75,0) [out=20, in=-160] to (1.5,1)--(2,1);

		\draw [->] (1.55,.5)--(1.3,.5);
		\node [right] at (1.55,.5) {$\chi(\tfrac\mu s)$};
		
		\draw [->] (0,.5)--(.85,.95);
		\draw [->] (0,.5)--(-.05,.05);
		\node [left] at (0,.5) {$\theta(\mu)$};
		
	\end{tikzpicture}
	\caption{Schematic diagram illustrating \eqref{3122}}
	\label{fig:f0}
\end{figure}

Next, again by the definition of $ \cX $, we have $\chi(\frac{\mu -vt}{s})\equiv 1$ for all $\mu\ge v\abs{t}+(c-v)s$ by setting $\delta=c-v>0$. Now, we choose 
$s=\eta/c$. Then, for all $\abs{t}< \eta/c$ and $v<c$, we have $\chi(\frac{\mu-vt}{s})\equiv 1$ for  $\mu\ge \eta$. 
This implies the estimate \begin{equation}\label{3132}
	\chi((\mu-vt)/s)\ge \theta(\mu-\eta),
\end{equation}see \figref{fig:f}.
\begin{figure}[H]
	\centering
	\begin{tikzpicture}[scale=3]
		\draw [->] (-.5,0)--(2,0);
		\node [right] at (2,0) {$\mu$};
		
		\node [below] at (0,0) {$0$};
		\draw [fill] (0,0) circle [radius=0.02];
		
		\node [below] at (1.5,0) {$\eta$};
		\draw [fill] (1.5,0) circle [radius=0.02];

		\draw [very thick] (-.5,0)--(1.5,0);
		\draw [very thick] (1.5,1)--(2,1);
		\draw [dashed, very thick] (-.5,0)--(.1,0) [out=20, in=-160] to (.65,1)--(2,1);
		\filldraw [fill=white] (1.5,1) circle [radius=0.02];

		\draw [->] (-.1,.5)--(.3,.5);
		\node [left] at (-.1,.5) {$\chi(\tfrac{\mu-vt}{s})$};
		
		\draw [->] (2,.5)--(1.75,.95);
		\draw [->] (2,.5)--(1.25,.05);
		\node [right] at (2,.5) {$\theta(\mu-\eta)$};
		
	\end{tikzpicture}
	\caption{Schematic diagram illustrating \eqref{3132}.}
	\label{fig:f}
\end{figure}
Since $X_\eta^\cp= \Set{d_X(x)>\eta}$, we have
$
	\chi^\sharp_{X_\eta^\cp}=\theta(d_X-\eta).
$
This, together with \eqref{3132}, implies
\eqref{chi-ts-est}.
This completes the proof of \lemref{lem:chi-ts-prop}.
\qed

\section{Main ideas of the proof of \thmref{thm2}}\label{sec:4.1}
%In this section we prove \thmref{thm2}. 

\DETAILS{The proof of \thmref{thm2} is based on the following lemma, whose proof is sketched in the next subsection and completed in Appendix \ref{sec:4.11}:

\begin{lemma}\label{lemRB}
	Suppose \eqref{k-cond}--\eqref{v-cond} hold with some $n\ge1$. 
Let $\varphi_j\in\cD(N)\cap\cD(H),\,j=1,2$. 
	%Let $A$ be an operator satisfying the assumptions in \thmref{thm2}.
{	Then, for every $c>2\kappa$, 	there exists $C=C(n,\kappa_n,\nu_n,c)>0$ s.th.~}for all $\xi\ge1$,   $X\subset\Lam$, and operator $A\in\cB_X$ (see \eqref{BX}), we have the following estimate for all
$\abs{t}<\xi/c$:
	\begin{equation}\label{At-repr-rem}
		\begin{aligned}
			&\abs{\inn{\varphi_1}{(A_t-A_t^\xi)\varphi_2}}\\
			\le& C\abs{t}\Norm{A}
			\prod_{j=1,2}\del{\inn{\varphi_j}{N^{\frac{1}{2}}N_{X_{2\xi}\setminus X}N^{\frac{1}{2}}\varphi_j}+(\g\xi)^{-n} \inn{\varphi_j}{N^{2}\varphi_j}}^{1/2}
		\end{aligned}
	\end{equation}
	where $\g$ is any number satisfying $0<\g<1/3$, $(1-\g)c>2\kappa$.

\end{lemma}

\begin{proof}
	[Proof of \thmref{thm2} assuming \lemref{lemRB}]
	Applying estimate \eqref{At-repr-rem} with $\varphi_j\equiv \varphi,\,j=1,2$, 
	and using the empty-shell condition \eqref{emptyshellcond} together with the relations $N_{X_{2\xi}\setminus X}\le N_{X^\cp}$ and $\ondel{N^{\frac12}N_{X^\cp}N^{\frac12}}\le\ondel{N^2}^{1/2}\ondel{N_{X^\cp}^2}^{1/2}$, we conclude the desired estimate \eqref{lcae} for pure states. For  mixed state $\om$, we decompose $\om(\cdot )=\sum p_i\inn{\varphi^i}{(\cdot)\varphi^i}$ with  $p_i\ge0,\,\sum p_i<\infty$ and use linearity to reduce the problem to estimating $\abs{\inn{\varphi_1}{(A_t-A_t^\xi)\varphi_2}}$. 
\end{proof}

\subsection{Main ideas of the proof of \lemref{lemRB}. }}
Recall the notations  $X^\cp:=\Lam\setminus X$, $X_\xi\equiv \Set{x\in\Lam:d_X(x)\le \xi}$ for $\xi\ge0$ (see \figref{fig:Xxi}), and that an observable (i.e.~bounded operator) $A$ is said to be localized in $X\subset\Lambda$ if \eqref{A-loc} holds, 
%\begin{equation}\label{loc-def}[A, a_x^\#]=0 \quad (x\in X^\cp,\;a^\sharp_x=a_x,a_x^*).\end{equation}Recall also that we write
written as $\supp A\subset X$. % for an observable localized in $X$, 
Next, we use the notation
\begin{equation}\label{Axi}
	A_s^\xi\equiv \al_s^{X_\xi}(A)=e^{isH_{X_\xi}}Ae^{-isH_{X_\xi}},
\end{equation} 
%for the localized evolution of $A$ in $X_\xi$ (see \eqref{Axi'}), 
 where $H_Y,\,Y\subset \Lam$ is the  Hamiltonian defined by \eqref{1.1}  with $Y$ in place of $\Lam$. By definition \eqref{A-loc}, we see that  if $\supp A\subset X$, then $\supp A_s^\xi\subset X_\xi$ for all $s\in\Rb,\,\xi\ge0$.

%First, we explain the main ideas of the proof. 
Let $A_t=\al_t(A)$ be the full evolution \eqref{1.5}.
By the fundamental theorem of calculus, we have$$	A_t-A_t^\xi=\int_0^t \di_r\al_r(\al_{t-r}^{X_\xi}(A))\,dr.$$	 Using identity \eqref{dt-Heis} for $\al_r$ and $\al_{t-r}^{X_\xi}$ in the integrand above, as well as the fact that $\al_{t-r}^{X_\xi}([H_{X_\xi},A])=[H_{X_\xi},\al_{t-r}^{X_\xi}(A)]$, we find 
\begin{equation}\label{410}
	A_t-A_t^\xi=\int_0^t\al_r(i[R',A_{t-r}^\xi])\,dr,
\end{equation}
%checked 11.12, cf B11
where $R':=H-H_{X_\xi}$. Since $A_s^\xi$ is localized in $X_\xi$, only terms in $R'$ which connect $X_\xi$ and $X_\xi^\cp$ contribute to $[R',A_{t-r}^\xi]$ (see \figref{fig:splitting}). 

\begin{figure}[H]
	\centering
	\begin{tikzpicture}[scale=1.2]
		\draw  plot[scale=.5,smooth, tension=.7] coordinates {(-3,0.5) (-2.5,2.5) (-.5,3.5) (1.5,3) (3,3.5) (4,2.5) (4,0.5) (2,-2) (0,-1.5) (-2.5,-2) (-3,0.5)};
		
		% 		\draw  plot[shift={(-0.2,-0.25)}, scale=.8,smooth, tension=.7] coordinates {(-3,0.5) (-2.5,2.5) (-.5,3.5) (1.5,3) (3,3.5) (4,2.5) (4,0.5) (2,-2) (0,-1.5) (-2.5,-2) (-3,0.5)};
		
		\node [right] at (-.5,.5) {$X_\xi$};
		
		\draw [->] (4,.5)--(1.3,.5);
		\node [right] at (4,.5) {$\text{contributing to }H_{X_\xi}$};
		
		\draw [thick] (1,.7) -- (1.4,.3);
		\draw [fill] (1,.7) circle [radius=0.05];
		\draw [fill] (1.4,.3) circle [radius=0.05];
		
		\draw [->] (4,0)--(2.3,0);
		\node [right] at (4,0) {$\text{contributing to }H_{X_\xi^\cp}$};
		
		\draw [thick] (2.05,.-.2) -- (2.4,.3);
		\draw [fill] (2.05,.-.2) circle [radius=0.05];
		\draw [fill] (2.4,.3) circle [radius=0.05];
		
		\draw [->] (4,-0.5)--(1.2,-0.5);
		\node [right] at (4,-0.5) {$\text{connecting ${X_\xi}$ and ${X_\xi^\cp}$}$};
		
		\draw [thick] (1,-0.3) -- (1.3,-1);
		\draw [fill] (1,-0.3) circle [radius=0.05];
		\draw [fill] ((1.3,-1) circle [radius=0.05];

	\end{tikzpicture}
	\caption{Schematic diagram illustrating the splitting of $H$.}
	\label{fig:splitting}
\end{figure}
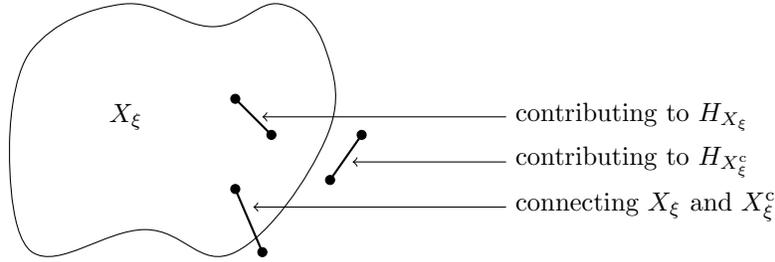

Let $s:=t-r$. Assuming first that $h$ and $v$ are finite-range, we see that the commutator $i[R',A_{s}^\xi]$ is localized near the boundary $\di X_\xi$. 
Considering for simplicity the \textit{Hubbard model}, i.e. $v_{xy}=\lam \delta_{xy}$ for some $\lam\in \R$,  and assuming $A$ and therefore $A_s^\xi$ are self-adjoint, $i[R',A_{s}^\xi]$ can be bounded, in essence, as \begin{align}\label{bdryEst}
 i[R',A_{s}^\xi]\le C\norm{A} N_{\di X_\xi}.
 \end{align}

Next, we take $X$ so that $X^\cp$ is `bounded', i.e.~independent of $\Lam$ (see Figure \ref{fig:Yxi} below) and set $Y:=X_\xi^\cp$, so that $X^\cp=Y_\xi$.  Then MVB \eqref{MVE} gives the `incoming' light cone estimate, for $r\le \xi/c,\ c>2\kappa$,  
\begin{equation}\label{MVEext}
\al_r (N_{Y}) \le C	(N_{Y_\xi} +\xi^{-n} N).\end{equation}
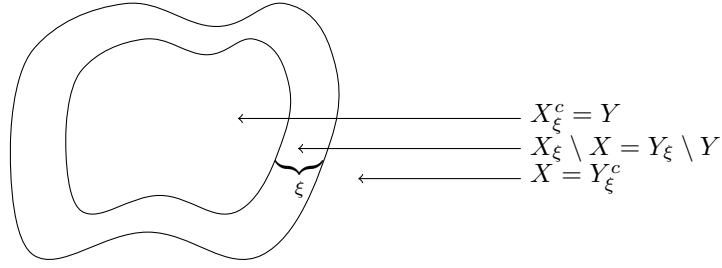
\begin{figure} [H]
	\centering
	\begin{tikzpicture}[scale=.8]
		\draw  plot[scale=.52,smooth, tension=.7] coordinates {(-3,0.5) (-2.5,2.5) (-.5,3.5) (1.5,3) (3,3.5) (4,2.5) (4,0.5) (2.5,-2) (0,-1.5) (-2.5,-2) (-3,0.5)};
		
		\draw  plot[shift={(-0.2,-0.25)}, scale=.76,smooth, tension=.7] coordinates {(-3,0.5) (-2.5,2.5) (-.5,3.5) (1.5,3) (3,3.5) (4,2.5) (4,0.5) (2.5,-2) (0,-1.5) (-2.5,-2) (-3,0.5)};
		
		\draw [->] (6,.5)--(1.3,.5);
		\node [right] at (6,.5) {$X_\xi^c=Y$};
		
		\draw [->] (6,0)--(2.3,0);
		\node [right] at (6,0) {$X_\xi\setminus X=Y_\xi\setminus Y$};	
	
		\draw [->] (6,-.5)--(3.3,-.5);
		\node [right] at (6,-.5) {$X=Y_\xi^c$};	
		
		\node [below] at (2.32,0) {$\underbrace{}_{\xi}$};
	\end{tikzpicture}
	\caption{Schematic diagram illustrating $Y, Y_\xi, Y_\xi^c$.}
	\label{fig:Yxi}
\end{figure}

%Furthermore, 
Now, by the MVB \eqref{MVEext}, for any $r\le \xi/c,\ c>2\kappa$, %and for the Hubbard model, for simplicity,    
  we have  \begin{equation}
		\label{MVE-pXxi} \al_r(N_{\di X_\xi})
		\le {C	(N_{(\di X_\xi)_\xi} +\xi^{-n} N)}.
	\end{equation} 
% essentiall \eqref{624}
 This, together with \eqref{410}, \eqref{bdryEst} and the observation $(\di X_\xi)_\xi=X_{2\xi\setminus X}$, yields
 \begin{align}\label{lcae-gen-Hub'}
			\abs{\om\big(	\al_t(A)-\al_t^{X_\xi}(A)\big)}\le  C\abs{t}\Norm{A} %\big(\big)
			 \del{\om( N_{X_{2\xi}\setminus X} )+\xi^{-n} \om(N)},
		\end{align} 
This, together with \eqref{NNX}, gives  \eqref{lcae} in the finite-range case.

{For $h$ and $v$ of infinite-range, we  refine the argument presented above. Let $X_{a,b}=X_b\setminus X_a$ for $b>a\ge0$. To estimate $i[R',A_s^\xi]$ in \eqref{410}, we %pick a parameter $0< \g\le 1$ and 
split the annulus $X_{0,2\xi}$ into four annuli, say
\begin{equation}\label{55s}
		X_{0,\frac34\xi},\quad X_{\frac34\xi,\xi}, \quad X_{\xi,\frac54\xi,},\quad X_{\frac54\xi,2\xi}.
\end{equation}
In the second and the third annuli, we use the MVB from \thmref{thm1} and in the first and the fourth ones, the decay properties of $h_{xy}$ and $v_{xy}$ as $\abs{x-y}\to\infty$. 
}See Appendix \ref{sec:4.11} for details.

 \section{Proof of \thmref{thm3}}\label{sec:6}
We introduce the remainder term for \eqref{lcae}:
	\begin{align}\label{Rem-def}
		\Rem_t(A):=\al_t(A)-\al_t^{X_\xi}(A).
	\end{align}
Since $A$ (and therefore  the  evolution $\al_t^{X_\xi}(A)$) and $B$ are respectively localized in $X_\xi$ and
	$Y\subset X_{2\xi}^\cp$,   $\al_t^{X_\xi}(A)$ and $B$ commute, yielding
	\begin{align}
		[\al_t(A), B] &=  [ \Rem_t(A), B]. \label{At-B-com}
	\end{align} %\DETAILS{Doesn't hold if $[.,.]\to\Set{.,.}$}
	Next, we use Theorem \ref{thm2-gen}, which implies that for $c>2\kappa$, there exists $C=C(n,\kappa_n,\nu_n,c)>0$ s.th.~for  $\abs{t}<\xi/c$, 
\begin{equation}\label{62'}
		\begin{aligned}
&\abs{\om \del{B\,\Rem_t(A)  }}\le   C\abs{t} \Norm{A}\Norm{B} \del{\om \del{N_{X_{2\xi}\setminus X}N}+\xi^{-n}\om(N^2)}.
	\end{aligned}
\end{equation}
Since $\abs{\om(\Rem_t(A)B)}=\abs{\om(B^*(\Rem_t(A))^*) }$ and $(\Rem_t(A))^*=\Rem_t(A^*)$ (see \eqref{Rem-def}), replacing $A,\,B$ in \eqref{62'} by $A^*,\,B^*$ yields the same estimate on $\abs{\om(\Rem_t(A)B)}$, namely  $\abs{\om(\Rem_t(A)\,B)}\le $r.h.s.~of \eqref{62'} for all $\abs{t}<\xi/c$.
Then the desired estimate \eqref{LRB} follows from the triangle inequality   $\abs{\om \del{[\Rem_t(A), B]}}\le		\abs{\om \del{\Rem_t (A)B }}+\abs{\om \del{B\,\Rem_t(A)  }}$, assumption \eqref{emptyshellcond} and equality \eqref{NNX}. \qed

\section{Proof of \thmref{thm5}}\label{sec:pfthm8.2}

To fix ideas, we let $t\ge0$.  We write $C_{AB}\equiv C\norm{A}\norm{B}\om(N_Z^2)$ with $C>0$ independent of $A,\,B, N, \Lam$. Then \eqref{221} becomes %By  \defnref{def:WC'} in Section \ref{sec:cor},  $\om_t$ is $\WC(Z^\cp, 3\l)$  if and only if the following inequality holds:
\begin{equation}\label{8.1}
	\abs{\om_t^c(A,B)}\le C_{AB} (d_{XY}^Z/(3\l))^{1-n}
	%	 \quad\text{ for all }\quad t\le d_{XY}^Z/(3 \kappa),
\end{equation}
%provided $c>4\kappa$, 
for any two bounded operators $A,\,B$ localized in $X,\,Y$ with $d_{XY}^Z\ge 3\l$. %respectively,
{(For $0<d_{XY}^Z< 3\l$, \eqref{8.1} holds trivially.)}

To prove \eqref{8.1}, we use  the equality $\om_t^c(A,B)=\om^c(A_t,B_t)$ and write $A_t=A_t^\xi+\Rem_t(A)$ with $\xi:=d_{XY}^Z/3$ (see \eqref{Rem-def}) and the same for $B_t$. This way  we arrive at 
\begin{align}
	\om_t^c(A,B)=&\om^c(A_t,B_t)=\om^c(A_t^\xi, B_t^\xi)\notag\\
	&\quad +\om(\Rem_t(A)B_t)+\om(A_t\Rem_t(B))\notag\\&\quad +\om(\Rem_t(A)\Rem_t(B))\notag\\
	&\quad + \om(\Rem_t(A))\om(B_t)+\om(\Rem_t(B))\om(A_t)\notag\\ &\quad +\om(\Rem_t(A))\om(\Rem_t(B)).\label{8.1s}
\end{align}
Since $A_t^\xi$ and $B_t^\xi$ are localized in two disjoint sets $X_\xi$ and $Y_\xi$ at the distance 
$d^Z_{X_\xi,Y_\xi}=d_{XY}^Z-2\xi=d_{XY}^Z/3\ge \l,$ then,  by the $\WC(Z^\cp, \l)$ assumption on $\om$, the leading term is bounded as
\[\abs{\om^c(A_t^\xi B_t^\xi)}\le C^1_{AB}(d_{X_\xi Y_\xi}^Z/\l)^{1-n}\le C^1_{AB}(d_{XY}^Z/3\l)^{1-n},\] 
with $C_1>0$ as in \eqref{221}.

{For the $6$ trailing terms in the r.h.s.~of \eqref{8.1s}, we use \eqref{lcae-gen} and similar estimates with the roles of $A$ and $B$ interchanged, together with \eqref{2.32s},  \eqref{emptyshellcond'}, and \eqref{NNX}.
Since $X_{2\xi}\equiv X_{2d_{XY}^Z/3}$, $Y_{2\xi}\equiv Y_{2d_{XY}^Z/3}\subset Z^\cp$, we have $N_{X_{2\xi}},$ $N_{Y_{2\xi}}\le N_{Z^\cp}$. Hence, due to \eqref{2.32s}, the leading term in the r.h.s. of \eqref{lcae-gen} drops out. Thus,  these $6$ terms can be bounded by $C^2_{AB}\xi^{1-n}=C^2_{AB}(d_{XY}^Z/3)^{1-n}$ uniformly  for all
%$c>4\kappa$ and 
$t<\xi/3\kappa$.}

% which implies $N_{1,\xi}\le N_{Z^\cp}$.
%Moreover, if $d_{XY}\ge d(1+\eps^{-1})=d({1-\tfrac{1}{1+\eps}})^{-1}$, then $d_{XY}-d\ge d_{XY}/(1+\eps)$ and therefore $\xi^{1-n}\le (2+2\eps)^{n-1} d_{XY}^{1-n}.$
%Notice that due to \eqref{emptyshellcond'}, the leading term in \eqref{lcae-gen} drops out, 

In conclusion, since $d_{XY}^Z\ge3\l$ and therefore $\xi\ge \l$, we find
\begin{equation}\label{81}
	\abs{\om_t^c(A,B)}\le C_{AB}(d_{XY}^Z/(3\l))^{1-n},
\end{equation}
for some $C=C(n,C_1,C_2)>0$ and all $t<\l/c$. %. Finally, since $2\xi\ge d/\eps$,
We conclude the claim from here. 
\qed

\section{Proof of \thmref{thm6}}\label{sec:pfthm6}

Let $\SD\equiv \SD(t,r)$ as defined in \eqref{CommEstPhys'}.	The fundamental theorem of calculus yields
\begin{align} \label{CommEstPhys''} \SD=\int_0^r\tr\big[A\al_{t}'(\tau_{s}( i[\rho, B]))\big]\,ds. %\notag\\&= r\tr\big( i[A,\al_{t}'(B)] \rho\big). 
\end{align}
Since $\tau_{s}( i[\rho, B])= i[\tau_{s}(\rho), \tau_{s}(B)]$ and $\tau_{s}(B)=B$, moving $\al_{t}'$ from the state to the observable $A$ in eq.~\eqref{CommEstPhys''} gives 
	\begin{align} \label{CommEstPhys}\SD&=\int_0^rds \tr\big[\al_{t}(A) i[\tau_{s}(\rho), B]\big]\notag\\
		&=\int_0^rds \om_s\big( i[B,\al_{t}(A)]\big),\end{align}
	where $\om_s:=\Tr((\cdot)\tau_s(\rho))$ is the evolution generated by $B$. %checked 2.7
	\eqref{CommEstPhys} implies the upper bound
	\begin{equation}\label{232}
		\abs{\SD}\le r \sup_{0\le s\le r}\abs{ \om_s\big( [B,\al_{t}(A)]\big)}. 
	\end{equation}
%	We want to apply estimate \eqref{lcae-gen} to the state $\om_s:=\Tr((\cdot)\tau_s(\rho))$ for each $s$ to get an estimate on the r.h.s. above.
%

{Let $\xi':=\xi/2\ge1,\,c':=c/2>2\kappa$, and $N_{\g,\xi}:= N_{ X_{(1-\g)\xi,(1+\g)\xi}}$ (c.f.~\eqref{N1def}). Note that $A$ and $B$ are localized  in $\cB_{X_{\xi'}}$ and $\cB_Y$, respectively, with $\dist (X_{\xi'},Y)\ge \xi'$.  Hence, by estimate \eqref{lcae-gen} (which, importantly,  is independent of state $\om$) and the relation $X_{\xi/2,3\xi/2}=(X_{\xi'})_{0,2\xi'}$, we have}
	\begin{align}
		\abs{\om_s\del{[ B, \al_t(A)] }}  
		\le C\abs{t} \Norm{A}\Norm{B} \del{ \om_s\del{N_{1/2,\xi}N}+\xi^{-n}\om_s\del{N^{2}}},\label{160}
	\end{align}
% fine, remainder is worse (in fact (\xi/2)^-n)
	for some  $C=C(n,\kappa_n,\nu_n,c)>0$ and all $\abs{t}<\xi'/c'$. To bound the r.h.s.~of \eqref{160}, we apply  \corref{cor3} to the evolution $\om_s$ generated by $B$ to find
\begin{equation}\label{161}
	\begin{aligned}
		\om_s\del{N_{X_{\xi/2}^\cp}N}
		\le& C \del{\om\del{N_{X^\cp}N}+\xi^{-n}\om(N^{2})},
	\end{aligned} 
\end{equation}
for some $C>0$ and all $0\le s<\xi'/c'$, and use that $\om_s(N^p)\equiv \om(N^p)$. 	Now, plugging  \eqref{160}--\eqref{161} back to \eqref{232},  using the assumption $s\le r<\xi/c=\xi'/c'$, taking $\om$ satisfying \eqref{emptyshellcond} and therefore  $\om(N_{X^\cp}N)=0$ by \eqref{2.32s}, and using relations \eqref{NNX}, we arrive at the desired estimate \eqref{7.2}.  This completes the proof. 
\qed

\section{Proof of \thmref{thm:qst}}\label{sec:pfthm:qst}

	We apply \thmref{thm2} so that $U_t^\xi\equiv\al_t^{X_\xi}(U)$  is localized in $X_\xi\subset Y^\cp$ (see \eqref{alloc}). Consequently,  conjugation by $U_t^\xi$ does not affect the partial trace $\mathrm{Tr}_{Y^\cp}$. This leads to 
	$$
	\begin{aligned}
		F\left( \mathrm{Tr}_{Y^\cp}(\rho_t), \mathrm{Tr}_{Y^\cp}(\rho_t^{U_t})\right)
		&=
		F\left(\mathrm{Tr}_{Y^\cp} \left( \rho_t^{U_t^\xi}\right), \mathrm{Tr}_{Y^\cp}\left( \rho_t^{U_t}\right)\right)\\
		&\geq
		F\left( \rho_t^{U_t^\xi}, \rho_t^{U_t}\right),
	\end{aligned}
	$$
	where the last line follows from  the data processing inequality for the fidelity, see \cite[Lem.~B.4]{FaRe}. 
	
	{Since $\om$ is a pure state,  $\om=\ondel{(\cdot)}$ for some $\varphi\in\cD(H)\cap \cD(N)$.  For rank-one projections $\rho_t=\abs{\varphi_t\rangle\langle\varphi_t}$ generated by the initial state $\abs{\varphi\rangle\langle\varphi}$,} we compute
	$$
	F\left( \rho_t^{U_t^\xi}, \rho_t^{U_t}\right)
	=\abs{\inn{U^\xi_t\varphi_t}{U_t\varphi_t}}.$$
	Since $U_t$ is unitary, so is $U_t^\xi$ (again, see \eqref{alloc}). 
	Writing $U_t=U_t^\xi+\Rem_t(U)$ and using $(U_t^\xi)^*U_t^\xi=1$, we arrive at 
	
	\begin{equation}\label{5.23'}
		\abs{\inn{U^\xi_t\varphi_t}{U_t\varphi_t}}
		\geq 1-  \abs{\otdel{(U_t^\xi)^*\Rem_t(U)}}.
	\end{equation}

	%	\DETAILS{Move \lemref{lemRB} up to \secref{sec:LRB} to avoid forward reference?}
%	

	{Let $\xi':=\xi/4\ge1,\,c':=c/2>2\kappa$, and $N_{\g,\xi}:= N_{ X_{(1-\g)\xi,(1+\g)\xi}}$ (c.f.~\eqref{N1def}). We view $U$ as an observable in $\cB_{X_{3\xi/4}}$, so that $B\in\cB_Y$ with $\dist (X_{3\xi/4},Y)\ge \xi'$.  Take the pair  $(\varphi,\psi)$ in estimate \eqref{RemFormEst} to be $(U^\xi_t\varphi, \varphi)$. Then, by estimate \eqref{lcae-gen} (which, importantly,  is independent of state $\om$) and the relation $X_{3\xi/4,5\xi/4}=(X_{3\xi/4})_{0,2\xi'}$, we have}
 for all
	$\abs{t}<\xi'/c'=\xi/c$ that
{	\begin{align}\label{5.23}
				\abs{\ondel{(U_t^\xi)^*\Rem_t(U)}} 	\le& C\abs{t}\tau_{1/4}(\varphi)^{1/2}\tau_{1/4}(U_t^\xi\varphi)^{1/2},
		\\\text{where }\tau_\al(\phi):=&\phidel{N_{\al,\xi}N}+(\g\xi)^{-n} \phidel{N^{2}}\label{taualDef},
	\end{align}}
	for some $C=C(n,\kappa_n,\nu_n,c)>0$ and small $\g$ with $(1-\g)c'>2\kappa$. 
	Note that $\Norm{U}$ is dropped for unitary $U$.

	Let $\tilde \varphi_t=e^{-it H_{X_\xi}} \varphi$. 
	By \corref{cor3}, the first term in the second factor on the r.h.s. of \eqref{5.23} can be bounded as follows:
	\begin{equation}\label{8.5'}
		\begin{aligned}
			&\Undel{N_{1/4,\xi}N}=\inn{\tilde \varphi_t} {   U^* \al_{-t}^{X_\xi}( N_{1/4,\xi} N)  U\tilde \varphi_t}\le C\tau_{1/2}(U\tilde \varphi_t).
		\end{aligned}
	\end{equation}			
	which holds for fixed $C=C(n,\kappa_n,c)$ and all $\abs{t}<\xi/(4c')$ by \corref{cor3}. 
	
	Since $U\in \cB_X$ and $\supp N_{1/2,\xi}\subset X^\cp$, we have $[U,N]=[U, N_{1/2,\xi}]=0$. By this  and the relation $ U^* U =1$, the leading term in the last line of \eqref{8.5'} can be bounded as\begin{equation}\label{8.5}
		\begin{aligned}		
			\inn{\tilde \varphi_t}{ U^* N_{1/2,\xi} NU\tilde \varphi_t }=& \inn{\tilde \varphi_t}{  N_{1/2,\xi} N\tilde \varphi_t }\le C\tau_1(\varphi),
		\end{aligned}
	\end{equation}	
	which holds for fixed $C=C(n,\kappa_n,c)$ and all $\abs{t}<\xi/(2c')$. 
	We also have for all $t$ that 
	\begin{align}\label{8.6}
		\Undel{N^{2}}=&\ondel{N^2}.
	\end{align}
	Plugging \eqref{8.5'}--\eqref{8.6} back to \eqref{5.23},  we conclude that, 	for some fixed $C=C(n,\kappa_n,c)$ and all $\abs{t}<\xi/(4c')$,  
	\begin{equation}\label{5.24}
		\begin{aligned}
			&\abs{\ondel{(U_t^\xi)^*\Rem_t(U)}}\le C\abs{t}\tau_1(\varphi).
		\end{aligned}
	\end{equation}
	Plugging \eqref{5.24} back to \eqref{5.23'} and using the choice $c'=c/4$
	and the localization conditions \eqref{emptyshellcond}--\eqref{NNX} on the initial state $\om$,   we get the desired lower bound \eqref{eq:qst} for all $\abs{t}<\xi/c$. This completes the proof. 
\qed

\section{Proof of \thmref{thm:gap}}\label{sec:pfthm:gap}

	We adapt the argument of \cite{NachS}. 
%	
%	\DETAILS{Hastings-Koma \cite{HastKoma} is a rather different setup. \cite{NachS} considers short-range case.}
%	
	We shift $H$ in \eqref{1.1} so that the new Hamiltonian, which we still denote by $H$, has the ground state energy $0$ and so $H\ge0$. 
	
	Since $H=\oplus_n H_n$ and each $H_n:=H\upharpoonright_{\Set{N=n}},\,n=0,1,\ldots,$ is bounded, $e^{izH}=\prod_n e^{izH_n}$ is an entire operator-valued function of $z$. Consequently, 	 $$f(z):=\Ondel{B\al_z(A)}\quad (z\in\Cb),$$ with $\al_z(A)=e^{izH}Ae^{-izH}$ (c.f.~\eqref{1.5}), is well-defined and entire. Now, we claim that, for all $A\in\cB_X$, $B\in\cB_Y$, and all small $b>0$,
	\begin{equation}\label{281'}
		\abs{f(ib)}\le C_{AB} (\g^{-1}\xi^{-2}+\xi^{1-n}\om(N_X^2)).
	\end{equation}
	Here and in the remainder of the proof,  $C_{AB}=C\norm{A}\norm{B}$ with $C>0$ depending only on $n,\kappa_n,\nu_n,$ and $b$.
	Then the desired estimate \eqref{281} follows from the relation $f(0)=\Ondel{BA}$ by taking $b\to0+$.

	Now we prove \eqref{281'}. Let $\Cb^\pm:=\Set{z\in\Cb:\pm\Im z>0}$.
	Since $\Om$ is an eigenvector with eigenvalue $0$, we have $f(z)=\Ondel{Be^{izH}A}.$
	This,  together with the spectral theorem and the gap assumption on $H$, implies
	\begin{equation}\label{282}
		f(z)=\int_\g^\infty e^{iz\l}\,d\Ondel{BP_\l A}, 
	\end{equation}
	where $P_\l$ is the projection-valued spectral measure corresponding to $H$. To estimate the integral on the r.h.s.~of \eqref{282}, we pass to Riemann sums to obtain, for all $z\in \Cb^+$,  
	\begin{equation}\label{2.83'}
		\abs{f(z)}\le e^{-\Im z \g}\lim 	\sum_i\abs{\Ondel{BP_{\Delta_i} A}},
	\end{equation}
	where the sum is taken over a partition of  $[\g,\infty)$ into subintervals $\Delta_i$'s. Next, using Cauchy-Schwartz inequality, we estimate
	\begin{equation}\label{2.83''}
		\begin{aligned}
			\sum\abs{\Ondel{BP_{\Delta_i} A}}\le& \sum \norm{P_{\Delta_i} B^*\Om}\norm{P_{\Delta_i} A\Om}
			\\\le&\del{\sum\norm{P_{\Delta_i} B^*\Om}^2}^{1/2}\del{\sum\norm{P_{\Delta_i} A\Om}^2}^{1/2}\\\le& \norm{B^*\Om}\norm{A\Om},
		\end{aligned}
	\end{equation}
with the norms on the r.h.s.~taken in the Fock space $\cF$.
	Since \eqref{2.83''} is uniform in all partitions, 
	%letting the size of $\Delta_i$'s shrink to $0$ in 
	combining \eqref{2.83'}--\eqref{2.83''} yields
	\begin{equation}\label{283}
		\begin{aligned}
			\abs{f(z)}\DETAILS{\le& e^{-\Im z \g}\int_\g^\infty \abs{d {\Ondel{AP_\l B}}}
				\\}\le& \norm{A}\norm{B}e^{-\Im z \g}.
		\end{aligned}
	\end{equation}

	Next, fix $T>0$ to be chosen later.  \DETAILS{Denote by $f(z)=\Ondel{Ae^{izH}B}$ an entire extension of $f(t)$, which exists by assumption. } Since $f(z)$ is entire,
	by the Cauchy integral formula, for every $0<b< T$, we have  
	\begin{equation}\label{284}
		f(ib)=\frac{1}{2\pi i}\del{\int_{\Ga_T^+}\frac{f(z)dz}{z-ib}+\int_{-T}^T \frac{f(t)\,dt}{t-ib}},
	\end{equation}
	where $\Ga_T^+\subset \Cb^+$ denotes the semicircle with radius $T$ in the upper half-plane $\Cb^+$. Moreover, for all sufficiently small $b$, we have $\abs{z-ib}>T/2$ for all $z\in\Ga_T^+$, whence
	\begin{equation}\label{286}
		\abs{\int_{\Ga_T^+}\frac{f(z)dz}{z-ib}}\le \frac{\norm{A}\norm{B}}{T}\int_0^\pi e^{-\g T\sin\theta}\,d\theta\le \frac{C_{AB}^1}{\g T^2},
	\end{equation}
	by estimate \eqref{283}.
	This bounds the first term in the r.h.s. of \eqref{284}.
	
	To bound the second term in the r.h.s. of \eqref{284}, we take some $0<\delta<T$ to be determined later, split the interval $I_T=I_\delta\cup (I_T\setminus I_\delta)$ (where $I_a:=[-a,a]$), and write, for every $t\in I_T$, \begin{equation}\label{285}
		f(t)=\Ondel{\al_t(A)B}+\Ondel{[B, \al_t(A)]}=:\mathrm{I}(t)+\mathrm{II}(t).
	\end{equation}Then we have
	\begin{equation}\label{287}
		\begin{aligned}
			\abs{\int_{-T}^T \frac{f(t)\,dt}{t-ib}}\le& \abs{\int_{-T}^T\frac{\mathrm{I}(t)\,dt}{t-ib}}+\abs{\int_{-\delta}^\delta\frac{\mathrm{II}(t)\,dt}{t-ib}}+\abs{\int_{I_T\setminus I_\delta}\frac{\mathrm{II}(t)\,dt}{t-ib}}
			\\=:& F+G_1+G_2.
		\end{aligned}
	\end{equation}
	To bound $F$, we note that by the Cauchy–Goursat theorem, $F=\abs{\int_{\Ga_T^-}\frac{\mathrm{I}(z)\,dz}{z-ib}}$, where and $\Ga_T^-\subset \Cb^-$ denotes the semicircle with radius $T$ in the lower half-plane $\Cb^-$.
	Therefore, by the same argument as \eqref{286}, we find that $F$ satisfies the estimate 
	\begin{equation}\label{289}
		F\le \frac{C_{AB}^1}{\g T^2 }.
	\end{equation}
	To bound $G_1$, we note that since $\mathrm{II}(t)$ is analytic and vanishes at $t=0$, we have $\abs{\mathrm{II}(t)}\le C_{AB}^2\abs{t}$ for all small $t$. This implies \begin{equation}\label{2810}
		G_1\le \int_{-\delta}^\delta \frac{\abs{\mathrm{II}(t)}\,dt}{\abs{t} }\le C_{AB}^2\delta.
	\end{equation}
	To bound $G_2$, we note that by the weak LRB \eqref{LRB}, $\mathrm{II}(t)$ satisfies the uniform estimate $\abs{\mathrm{II}(t)}\le C_{AB}^3\abs{t}\xi^{-n}\om(N_X^2)$ for all real $t$ with $\abs{t}<\xi/(3\kappa)$. Hence,
	\begin{equation}\label{2811}
		G_2\le \int_{I_T\setminus I_\delta} \frac{\abs{\mathrm{II}(t)}\,dt}{\abs{t} }\le C_{AB}^3 (T-\delta)\xi^{-n}\om(N_X^2),
	\end{equation}
	{provided $T<\xi/(3\kappa)$. } Combining \eqref{287}--\eqref{2811} yields an estimate on the second term in the r.h.s. of \eqref{284}:
	\begin{equation}\label{2812}
		\abs{\int_{-T}^T \frac{f(t)\,dt}{t-ib}}\le C_{AB}(\g^{-1} T^{-2}+\delta +(T-\delta)\xi^{-n}\om(N_X^2) ).
	\end{equation}
	Finally, choosing $\delta=\xi^{-n}/(10\kappa)<T=\xi/(6\kappa)$ {(recall $\xi\ge1$)}, and plugging \eqref{286}, \eqref{2812} back to \eqref{284}, we find
	$$\abs{f(ib)}\le C_{AB}(\g^{-1}\xi^{-2}+\xi^{-n}+\xi^{1-n}\om(N_X^2) ).$$
	We conclude claim \eqref{281'} from here. This  completes the proof
\qed

\section{Proof of \thmref{thm4}}\label{sec:pfthm4}
We follow the argument in Sects.~\ref{sec:pfthm1}--\ref{sec:pfRME}.
For $\chi\in\cX$ (see \eqref{F}) and two numbers $\abs{t}<s$, define the ASTLOs
\begin{equation}\label{astlo}
	\bar\chi_{ts}:=\hat \chi_{ts}/N,
\end{equation}	
where, recall, $\hat \chi_{ts}$ is given by \eqref{PIO1}. By relation \eqref{dG-com1}, we see that $\bar\chi_{ts}$ commutes with $\bar\xi_{t's'}$ for any two functions $\chi,\xi$. Define a set of smooth cutoff functions
\begin{equation}\label{6.3}
	\begin{aligned}
		\cG\equiv \cG_{\nu,\nu'}
		:=&\Set{f\in C^\infty(\R)\left|
			\begin{aligned}
				&\supp f\subset \Rb_{\ge0},\,\supp f'\subset (\nu,\nu')\\
				&\,f^\prime\ge 0,\,\sqrt{f'}\in C^\infty(\R)
			\end{aligned}\right.
		}.
	\end{aligned}
\end{equation}
For any  $f\in\cG,\,\chi\in \cX$, we consider the two parameter family of operators 
\begin{align}	
	f_{ts}:=f(\bar\chi_{ts}).
\end{align}
We now claim that this family satisfies a recursive monotonicity estimate similar to \eqref{RMB}: 
Namely, 
there exist constant
$C>0$ and function
$\xi_k\in\cX$ s.th. ~for all $t\in\Rb,\,s>0$,
\begin{align}\label{6.5}
	%\begin{aligned}
	D f_{ts}
	\leq & f'_{ts}\del{\frac{\kappa-v}{s}\overline{\chi'}_{ts}+\sum_{k=2}^{n}s^{-k}\overline{(\xi^k)'}_{ts}+Cs^{-(n+1)}},
	%\end{aligned}
\end{align}
where, recall, $D=\di_t+i[H,\cdot]$ is the Heisenberg derivative from \eqref{Heis-der}, $f'_{ts}\equiv f'(\bar\chi_{ts}),$ and $\overline{\chi'}_{ts}\equiv\hat\chi'_{ts}/N$. (The sum in the r.h.s. is dropped for $n=1$.)

Let $R_{ts}(z)=(z-\bar\chi_{ts})^{-1}$ for $\Im z\ne0$.  Since $f$ is smooth and has compactly supported derivatives, by the Helffer-Sj\"orstrand formula (see \cite[Lem.~B.2]{HunSig1}), \begin{align}\label{HSform}
	f_{ts}^{(p)}=\int R_{ts}^{p+1}(z)\,d\tilde f(z),\quad p=0,1,
\end{align}for some finite measure $d\tilde f(z)$ on $\Cb$ vanishing for $\Im z=0$. By \eqref{HSform}, together with the relations $Df_{ts}=\int D R_{ts}(z)\,d\tilde f(z)$ and $D R_{ts}=R_{ts}(D\bar\chi_{ts})R_{ts}$, we compute
\begin{align}
	\label{Hfcomm1}
	\begin{aligned}
		Df_{ts}
		%=&\int \di_t R_{ts}(z) +i[H,R_{ts}(z)] \,d\tilde f(z)\\
		=&\int R_{ts}(z)D \bar\chi_{ts} R_{ts}(z)\,d \tilde f(z).	
	\end{aligned} 
\end{align}
(One can consider \eqref{Hfcomm1} as an\textit{ integral chain rule} for the Heisenberg derivative.)
Since $[H,N]=0$, by definition \eqref{astlo}, we have $D\bar\chi_{ts}=D\hat\chi_{ts}N^{-1}$. Plugging this back to \eqref{Hfcomm1} and applying the recursive monotonicity estimate \eqref{RMB}, we find 
\begin{align}
	\label{Hfcomm2}
	Df_{ts}\le  \int R_{ts}(z)\del{-\frac{v -\kappa}{s} \overline{\chi'}_{ts}+\sum_{k=2}^{n}s^{-k}\overline{(\xi^k)'}_{ts}+ Cs^{-(n+1)}}R_{ts}(z)\,d \tilde f(z),
\end{align} 
for some $\xi_k\in\cX$ and $C=C(n,\kappa_n,\chi)>0$. (The sum in the r.h.s. is dropped for $n=1$.) Finally, using that $[R_{ts}(z),\bar{\eta}_{ts}]=0$ for $\eta=\chi',\,\xi_k'$,  estimate \eqref{Hfcomm2}, and representation formula \eqref{HSform}, we conclude claim \eqref{6.5}.

With the recursive monotonicity estimate \eqref{6.5}, we can proceed exactly as in the proof of \propref{prop:propag-est1} and the derivation of \eqref{propag-est4} to obtain
\begin{equation}\label{6.10}
	\al_t(f_{ts})\le C\del{f_{0s}+s^{-n}},
\end{equation}	
for some $C=C(n, \kappa_n,c,\nu'-\nu)>0$ and all $s>\abs{t}$. Following the same argument as in \secref{sec:pfmsb-cond}, for appropriately chosen $f\in\cG$, we obtain the estimates
\begin{equation}\label{6.11}
	f_{0s}\le  P_{\bar N_{X^\cp}\ge \nu}, \quad P_{\bar N_{X_\eta^\cp}\ge \nu'}\le f_{ts},
\end{equation}
c.f.~\eqref{chi-0s-est}--\eqref{chi-ts-est} as well as \cite[Eq.~(15)]{FLS}. By assumption \eqref{6.0} and estimates \eqref{6.10}--\eqref{6.11}, we conclude \eqref{6.3}.
\qed

\section*{Acknowledgment}

The first author is grateful to Jeremy Faupin, Marius Lemm, and Avy Soffer for enjoyable and fruitful collaborations. Both authors thank M.~Lemm for helpful remarks on the manuscript and the very useful observation \eqref{NNX}. 
The research of I.M.S. is supported in part by NSERC Grant NA7901.
J.Z.~is supported by National Key R \& D Program of China Grant 2022YFA100740, China Postdoctoral Science Foundation Grant 2024T170453, National Natural Science Foundation of China Grant 12401602, and the Shuimu Scholar program of Tsinghua University. His research was also supported in part by DNRF Grant CPH-GEOTOP-DNRF151, DAHES Fellowship Grant 2076-00006B, DFF Grant 7027-00110B, and the Carlsberg Foundation Grant CF21-0680, and NSERC Grant NA7901.  
Parts of this work were done while the second author was visiting MIT. 
Earlier version of parts of this work has appeared as a chapter in J.Z.'s PhD thesis at the University of Copenhagen.

\section*{Declarations}
\begin{itemize}
	\item Conflict of interest: The Authors have no conflicts of interest to declare that are relevant to the content of this article.
	\item Data availability: Data sharing is not applicable to this article as no datasets were generated or analysed during the current study.
\end{itemize}

		\appendix

\section{Fock spaces}\label{sec:Fock-sp}	
In this appendix we discuss some general properties of Fock spaces used in this paper, see \cite{DerGer, FrGrSchl}. %introduce the standard notion of partial trace over a domain in $X\subset\Lam$.	

Given a  ($1$-particle) Hilbert space $ \frak h$, one defines the Fock space (over $ \frak h$) as
\begin{equation}\label{fockdef}
	\cF\equiv 
\cF(\frak h):=\oplus_{n=0}^\infty \otimes^n \frak h,
\end{equation}
where $\otimes^n \frak h=\C$ for $n=0$, $=\frak h$ for $n=1$, and is the symmetric (or anti-symmetric) tensor product of  $\frak h$'s for $n>1$.

 %Assume we are given a map  $\Gamma: \frak h \ra \cF$ from  ($1$-particle) Hilbert spaces to  (Fock) Hilbert spaces s.t.  \begin{equation}\label{4.2'}\frak h' \subset \frak h \Longrightarrow \Gamma(\frak h') \subset  \Gamma(\frak h). \end{equation} Suppose, moreover, 
 For any two ($1$-particle) Hilbert spaces $\frak h_1$ and $\frak h_2$, there is a unitary  map $U\equiv U_{(\frak h_1, \frak h_2)}$ s.t. 
\begin{align}\label{Gam-factor}U: \Gamma(\frak h_1\oplus \frak h_2)\ra \Gamma(\frak h_1) \otimes \Gamma(\frak h_2).\end{align}
Let $p_i$ be the projection from $\hf_1\oplus \hf_2$ to $\hf_i$. Then the map $U$ is defined as follows
\begin{align}\label{Gam-factor'}U\big|_{\otimes^n(\hf_1\oplus \hf_2)}:=\sum_{k=0}^n \left( \begin{array}{c} n \\ k \end{array} \right)^{1/2}p_1^{\otimes (n-k)}\otimes p_2^{\otimes k},\end{align}
where $p_i^{\otimes m}$ denotes the $m$-fold tensor product $p_i\otimes \cdots\otimes p_i$.

Furthermore, the decoupling operator  $U\equiv U_{(\frak h_1, \frak h_2)}$ can also be constructed using  creation and annihilation operators by setting
\begin{equation}\label{eq:Urhodefn}
\begin{aligned}
	&U_{(\hf_1,\hf_2)}\Om :=\Om_1\otimes \Om_2 ,\\ &U_{(\frak h_1, \frak h_2)}  a^\sharp(f)=\big(a^\sharp(f_1)\otimes\one+\one\otimes a^\sharp(f_2) \big) U_{(\frak h_1, \frak h_2)},
\end{aligned}\end{equation}
for  $f= f_1\oplus f_2\in \frak h_1\oplus \frak h_2$, and using these formulae to define $U_{(\hf_1,\hf_2)}$ on an arbitrary vector in $\Ga(\hf_1\oplus\hf_2)$. Here, $\Omega_\sharp$ is the vacuum in $\cF_\sharp$ and  
$a^\sharp(f)=\sum_{x\in\Lam} a^\sharp_x f(x)$ with $a^\sharp$ standing for either  $a$ or $a^*$.

A natural example of the splitting $\frak h=\frak h_1\oplus \frak h_2$  is the splitting of the Hilbert space $\ell^2(\Lam)$ as
$$\ell^2(\Lam)=\ell^2(S) \oplus \ell^2(S^\cp)$$  
for any $S\subset\Lam$ and with $S^\cp=\Lam\setminus S$. 
Denoting the corresponding unitary map by $U_S$, we have  
\begin{align}\label{F-factor}U_S: \cF\ra \cF_S\otimes \cF_{S^\cp},\end{align}
%Furthermore, for $\frak h=\ell^2(S)$ for some $S\subset\Lam$, this construction yields the  Fock space 
where $\cF_S$ is the  Fock space over the $1$-particle Hilbert space $\ell^2(S)$, 
\begin{align*}
&\cF_S:=\G(\ell^2(S))\equiv \cF( \ell^2(S) ).\end{align*}
%This gives a map $S\ra \cF_S$.

Then, an observable $A$ on the Fock space $\cF$ is supported (or localized) in $S$ in the sense of \eqref{A-loc} if and only if it is of the form  \begin{equation}\label{loc-cond}
	 U_{S}AU_{S}^*=A_S\otimes \one_{S^\cp},
\end{equation}
where $A_S$ is the restriction of $A$ on $\cF_{S}$ and similar for $ \one_{S^\cp}$.

\DETAILS{We say that an observable $A$ on $\Ga(\hf_1\oplus \hf_2)$ is \textit{confined} to  (or \textit{supported} on) the spaces $\frak h_1$ if and only if  \begin{equation}\label{UA-rel}
	U_{(\frak h_1, \frak h_2)}AU_{(\frak h_1, \frak h_2)}^*=A_1\otimes \one,
\end{equation}
for some operator $A_1$ on $\Ga(\hf_1)$, 
and similarly for $\frak h_2$. We also denote $\Gamma(\frak h)\equiv \F(\frak h)$.}

\DETAILS{To apply the general framework above to prove \thmref{thm:localapprox}, for fixed  $X\subset \Lam$ and some $\xi>0$, we consider the splitting
$$\ell^2(\Lam)=\ell^2(X_\xi) \oplus \ell^2(X_\xi^\cp),$$  
 where, recall, $X_\xi\equiv \Set{d_X(x)\le \xi}$, and, for $\hf_1:=\ell^2(X_\xi)$ and $\hf_2:=\ell^2(X_\xi^\cp)$,  the decoupling operator
\begin{equation}\label{Uxi-def}
	U_\xi\equiv U_{(\frak h_1, \frak h_2)}:\cF\to \cF_{X_\xi}\otimes \cF_{X_\xi^\cp}\text{ as in \eqref{Gam-factor} and \eqref{eq:Urhodefn}}.
\end{equation}}

\section{Technical estimates}\label{sec:A}
Throughout this section, let $\Lam$ be a connected subset of a lattice $\Lc\subset \Rb^d,\,d\ge1$ with grid size $\ge1$. Denote by $\hf:=\ell^2(\Lam)$ and by $\cF$ the (fermionic or bosonic) Fock space over $\hf$.
\begin{lemma}
	\label{lemA.1}
	Let $n\ge1$. Suppose $A$ is an operator acting on  $\hf$ with operator kernel (matrix) $A_{xy}$ satisfying 
	\begin{equation}
		\label{A.1}
		M:=\del{\sup _{x\in\Lam} \sum_{y\in\Lam} \abs{A_{xy}}\abs{x-y}^{n+1}}\del{\sup _{y\in\Lam} \sum_{x\in\Lam} \abs{A_{xy}}\abs{x-y}^{n+1}}<\infty.
	\end{equation}
Then for every function $f$ on $\Lam$ s.th.~for some $L>0$,
\begin{equation}\label{phicond}
	\abs{f(x)-f(x)}\le L \abs{x-y}\quad (x,y\in\Lam),
\end{equation}
we have
\begin{equation}\label{CommEst}
	\norm{\ad{k}{f}{A}}\le L^k M\quad (1\le k\le n+1).
\end{equation}
\end{lemma}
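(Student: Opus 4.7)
The plan is to reduce everything to a direct computation of the operator kernel of the iterated commutator $\ad{k}{f}{A}$ and then apply a Schur-test bound. First, identifying $f$ with the multiplication operator $\psi(x)\mapsto f(x)\psi(x)$ on $\hf=\ell^2(\Lam)$, a one-line induction on $k$ (based on the identity $[f,B]_{xy}=(f(x)-f(y))B_{xy}$ for any operator $B$ with kernel $B_{xy}$) shows that
\begin{equation*}
\ad{k}{f}{A}_{xy}=(f(x)-f(y))^{k}\,A_{xy}\qquad(k\ge0).
\end{equation*}
In particular, $\ad{k}{f}{A}_{xx}=0$, so only off-diagonal entries $x\neq y$ contribute.

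Next, invoking the Lipschitz hypothesis \eqref{phicond}, we bound the kernel pointwise by
\begin{equation*}
\big|\ad{k}{f}{A}_{xy}\big|\le L^{k}\abs{x-y}^{k}\abs{A_{xy}}.
\end{equation*}
Because the lattice $\cL\supset\Lam$ has grid size $\ge 1$, every pair of distinct sites satisfies $\abs{x-y}\ge 1$; since the diagonal drops out, we may upgrade this, for $1\le k\le n+1$, to
\begin{equation*}
\big|\ad{k}{f}{A}_{xy}\big|\le L^{k}\abs{x-y}^{n+1}\abs{A_{xy}}\qquad\text{for all }x,y\in\Lam.
\end{equation*}

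Finally, I would apply the Schur test to the operator $\ad{k}{f}{A}$ viewed as having kernel above. Concretely, any operator $B$ on $\ell^2(\Lam)$ obeys
\begin{equation*}
\Norm{B}\le \Big(\sup_{x}\sum_{y}\abs{B_{xy}}\Big)^{1/2}\Big(\sup_{y}\sum_{x}\abs{B_{xy}}\Big)^{1/2}.
\end{equation*}
Plugging in the kernel bound above and recognizing the two suprema as the factors defining $M$ in \eqref{A.1} yields $\Norm{\ad{k}{f}{A}}\le L^{k}M^{1/2}$, which, together with the normalization used in \eqref{A.1} (so that $M$ plays the role of the Schur constant), gives \eqref{CommEst}.

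There is essentially no obstacle: the only subtlety is the passage from $\abs{x-y}^{k}$ to $\abs{x-y}^{n+1}$, for which the standing assumption that the grid size is at least $1$ (recall the convention fixed in the Notation subsection) is precisely what is needed. Everything else is an induction on $k$ plus a single application of the Schur test.
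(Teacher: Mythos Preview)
Your proof is correct and follows essentially the same route as the paper: compute the kernel of $\ad{k}{f}{A}$ by induction, apply the Lipschitz bound, use the grid-size assumption to upgrade $\abs{x-y}^{k}$ to $\abs{x-y}^{n+1}$, and finish with the Schur test. Note that both your argument and the paper's actually deliver the sharper bound $\norm{\ad{k}{f}{A}}\le L^{k}M^{1/2}$; the paper's inequality \eqref{B6} is written a bit loosely on this point, and your closing remark about ``normalization'' is just covering the same discrepancy.
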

\begin{proof}
	For every $f:\Lam\to \Cb$, a simply induction shows that for all $k$,
\begin{equation}\label{A.2}
		\del{\ad{k}{f}{A}}_{xy}=A_{xy}(f(y)-f(x))^k.
\end{equation}
%	see e.g.~\cite[Lem.~5]{FLS}. 
	This formula, together with the Schur test for matrices, implies
\begin{equation}\label{A.3}
		\norm{\ad{k}{f}{A}}^2\le \del{\sup _{x\in\Lam} \sum_{y\in\Lam} \abs{A_{xy}}\abs{f(x)-f(y)}^k} \del{\sup _{y\in\Lam} \sum_{x\in\Lam} \abs{A_{xy}}\abs{f(x)-f(y)}^k}.
\end{equation}
If $f$ satisfies \eqref{phicond}, then
\begin{align}\label{B6}
\sup _{x\in\Lam} \sum_{y\in\Lam} \abs{A_{xy}}\abs{f(x)-f(y)}^k
		\le& L^k \sup _{x\in\Lam} \sum_{y\in\Lam} \abs{A_{xy}}\abs{x-y}^{k}\le L^kM,
	\end{align}
where the last estimate follows from assumptions \eqref{A.1} and that the grid size of $\Lam$ is at least $1$. Similarly we can show that the second term in the r.h.s.~of \eqref{A.3} satisfies the same bound as \eqref{B6}. Plugging the results back to \eqref{A.3}
  completes the proof.
\end{proof}

\begin{corollary}\label{corA.2}
	Suppose $H$ in \eqref{1.1} satisfies \eqref{k-cond}. Then, for every $X\subset \Lam$ and the distance function $d_X(x)=\dist(\Set{x}, X)$, we have
	$$
	\norm{\ad{k}{d_X}{H}}\le M\quad (1\le k\le n+1).
	$$
\end{corollary}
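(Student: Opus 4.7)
The plan is to invoke \lemref{lemA.1} with the choices $A = h$, the one-particle Hamiltonian appearing in \eqref{1.1}, and $f = d_X$. (Here the commutator in the statement is interpreted at the one-particle level, consistently with \remref{remk}; the full-Fock-space object $\ad{k}{d\Gamma(d_X)}{H}$ would not be norm bounded.) Two hypotheses of the lemma must be checked.

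First, I would verify that the distance function $d_X(x) = \inf_{y \in X} |x - y|$ is $1$-Lipschitz on $\Lam$, hence satisfies \eqref{phicond} with $L = 1$. Indeed, for any $x, x' \in \Lam$ and any $y \in X$, the triangle inequality gives
\[
d_X(x) \le |x - y| \le |x - x'| + |x' - y|;
\]
taking the infimum over $y \in X$ on the right yields $d_X(x) - d_X(x') \le |x - x'|$, and by symmetry $|d_X(x) - d_X(x')| \le |x - x'|$.

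Second, I would check hypothesis \eqref{A.1} for $A = h$. This follows from \eqref{k-cond} combined with the standing hermiticity assumption $h_{xy} = \overline{h_{yx}}$. Indeed, hermiticity gives $|h_{xy}| = |h_{yx}|$, so the summand $|h_{xy}||x-y|^{n+1}$ is invariant under $x \leftrightarrow y$; consequently the two suprema in \eqref{A.1} coincide and both equal $\kappa_n$, which is finite by hypothesis \eqref{k-cond}, so $M = \kappa_n^2 < \infty$.

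With these two verifications in hand, \lemref{lemA.1} applied to $(A,f) = (h, d_X)$ yields $\norm{\ad{k}{d_X}{h}} \le L^k M = M$ uniformly for $1 \le k \le n+1$, which is the claim. I anticipate no substantive obstacle: the corollary is simply a specialization of \lemref{lemA.1} to the pair of natural inputs that are used throughout the paper, recording in proved form the bound stated informally in \remref{remk}.
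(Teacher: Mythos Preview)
Your proposal is correct and follows essentially the same approach as the paper: the paper's proof is the single line ``Every $d_X$ is uniformly Lipschitz and satisfies \eqref{phicond} with $L=1$,'' and you have simply filled in the triangle-inequality verification of Lipschitz continuity and the hermiticity check for \eqref{A.1}. Your parenthetical observation that the statement should be read with $h$ rather than the full $H$ is also correct (indeed, \corref{corC2}, the only place the corollary is invoked, applies it with $A=ih$).
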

\begin{proof}
	 Every $d_X$ is uniformly Lipschitz and  satisfies \eqref{phicond} with $L=1$. 
\end{proof}

\begin{lemma}\label{lemA.2}
	Let $\al_t$ (resp. $\beta_t$) be the many-body (resp. $1$-body) evolutions generated by $H=\dG(h)+V$ (resp. $h$). Suppose $V$ satisfies \eqref{V-cond}. Then for every function $f$ on $\Lam$ and its second quantization $\hat f$ as in \eqref{1.7}, 
	\begin{equation}
		\label{A.6}\al_t(\hat f)=\dG (\beta_t(f)).
	\end{equation}
\end{lemma}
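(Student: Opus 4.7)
My plan is to identify both sides of \eqref{A.6} as the unique solution of the same first-order operator-valued ODE in $t$, with matching initial data, and then appeal to uniqueness. Since $H$ preserves particle number, I can (and will) argue sector by sector on the $n$-particle space, where $H_n$ and $h$ are bounded and all Taylor series in $t$ converge absolutely. At $t = 0$, both sides reduce to $\hat f = \dG(f)$.

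For the left-hand side, $\partial_t \alpha_t(\hat f) = \alpha_t\bigl(i[H, \hat f]\bigr)$. Splitting $H = \dG(h) + V$ and expanding the commutator by bilinearity,
\[
[H, \hat f] = [\dG(h), \dG(f)] + [V, \dG(f)] = \dG([h, f]),
\]
where the first term is evaluated via the Lie-algebra property \eqref{dG-com1} and the second vanishes by the hypothesis \eqref{V-cond}. Hence $\partial_t \alpha_t(\hat f) = i\,\alpha_t\bigl(\dG([h, f])\bigr)$. On the right-hand side, using $\partial_t \beta_t g = i\,\beta_t([h,g])$ and linearity of $\dG$,
\[
\partial_t \dG(\beta_t f) = i\,\dG\bigl(\beta_t([h, f])\bigr).
\]

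Matching these derivatives amounts to proving the same identity with $[h, f]$ in place of $f$. Because $[h, f]$ is a bounded $1$-body operator but no longer a multiplication operator, the natural route is to prove by induction on $k$ the collection of commutator identities
\[
\ad{k}{H}{\hat f} = \dG\bigl(\ad{k}{h}{f}\bigr), \qquad k = 0, 1, 2, \ldots,
\]
and then recover \eqref{A.6} by summing the Taylor series $\alpha_t(\hat f) = \sum_k \tfrac{(it)^k}{k!}\ad{k}{H}{\hat f}$ and $\dG(\beta_t f) = \dG\bigl(\sum_k \tfrac{(it)^k}{k!}\ad{k}{h}{f}\bigr)$ on each $n$-particle sector. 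The inductive step, using the induction hypothesis followed by \eqref{dG-com1}, reads
\[
\ad{k+1}{H}{\hat f} = [H, \dG(\ad{k}{h}{f})] = \dG\bigl(\ad{k+1}{h}{f}\bigr) + [V, \dG(\ad{k}{h}{f})].
\]

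The main technical point --- and the step I expect to require the most care --- is showing that the remainder $[V, \dG(\ad{k}{h}{f})]$ vanishes at each stage. For $k = 0$ this is precisely \eqref{V-cond}. For $k \ge 1$, however, $\ad{k}{h}{f}$ is no longer a multiplication operator, so \eqref{V-cond} does not apply verbatim; the required cancellation must instead be extracted from the specific bilinear form $V = \tfrac12 \sum v_{xy} a_x^* a_y^* a_y a_x$ and the canonical commutation relations. Concretely, I would compute $[V, \dG(b)]$ for a general 1-body $b$ via the CCR and identify the piece surviving the commutator with the particle-number conserving structure of $V$, then check it vanishes against the $n$-particle sector restrictions needed here. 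This closure of the induction (or, equivalently, an interaction-picture reformulation $e^{itH} = e^{itH_0}U(t)$ in which one shows $[U(t), \hat f] = 0$) is where the bulk of the argument lies.
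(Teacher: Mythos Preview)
You have correctly isolated the crux: the inductive step needs $[V,\dG(\ad{k}{h}{f})]=0$ for $k\ge1$, and you rightly flag that \eqref{V-cond} no longer applies once $\ad{k}{h}{f}$ ceases to be diagonal. Unfortunately your proposed resolution---a direct CCR calculation---cannot succeed. For the density--density interaction $V=\tfrac12\sum v_{xy}a_x^*a_y^*a_ya_x$ and a general one-body operator $b$ one finds
\[
[V,\dG(b)]=\sum_{x,y,w}v_{xy}\,b_{xw}\,a_x^*a_y^*a_ya_w-\sum_{x,y,u}v_{xy}\,b_{ux}\,a_u^*a_y^*a_ya_x,
\]
which vanishes precisely when $b$ is a multiplication operator. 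In particular $[V,\dG([h,f])]\ne0$ whenever $[h,f]\ne0$, so already the second Taylor coefficients of the two sides disagree; equivalently, the interaction-picture propagator $U(t)=e^{-itH_0}e^{itH}$ does \emph{not} commute with $\hat f$. The identity \eqref{A.6} is thus false as an operator equation: on the two-particle sector it would force the interacting Heisenberg evolution of $f(x_1)+f(x_2)$ to coincide with the free one.

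The paper's own proof follows exactly the interaction-picture route you sketch---decomposing $\al_t=\al_t^{\intn}\circ\al_t^{\loc}$ and first establishing $\al_t^{\loc}(\hat f)=\dG(\beta_t(f))$---and then asserts $[V,\dG(\beta_{t-r}(f))]=0$ by appeal to \eqref{V-cond}. Since $\beta_{t-r}(f)$ is not a multiplication operator, that invocation is unjustified, and the paper's argument shares the gap you identified. For the downstream applications the lemma is in any case dispensable: the proof of \thmref{thm1} can be run entirely at the Fock-space level by integrating the recursive monotonicity estimate \eqref{RMB} via \eqref{dt-Heis}, and then using the pointwise function inequalities \eqref{chi-0s-est}--\eqref{chi-ts-est} together with \eqref{dG-pos} and the positivity of $\al_t$ to compare $\al_t(N_{X_\eta^\cp})$ directly with $\al_t(\hat\chi_{ts})$, without ever passing through the one-particle flow $\beta_t$.
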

\begin{proof}
	Without loss of generality, we take $t\ge0$ within this proof. Write $H_0:=\dG(h)$. We decompose the evolution $\al_t$ into a composition of two maps:
	\begin{align}
		\al_t =&\al_t^\intn\circ \al_t^\loc    \label{A.7.1},\\
		\al_t^\loc (A)=& e^{itH_0}A e^{-itH_0}, \label{A.7.2}\\
		\al_t^\intn (A)=& e^{itH}e^{-itH_0} A e^{itH_0}e^{-itH}.	\label{A.7.3}
	\end{align}

For every function $f:\Lam\to \Cb$ and $f_r:=\beta_r(f)$, we compute using \eqref{A.7.2} that 
\begin{align*}
	\frac1i\di_r \al_{t-r}^\loc(\dG(f_r))=&\al_{t-r}^\loc\del{-\sbr{H_0,\dG (f_r)}+\dG (\sbr{h,f_r})}.
\end{align*}
Applying \eqref{dG-com1} to the second term on the r.h.s., we see that $\di_r \al_{t-r}^\loc(\dG(f_r))=0$. Hence 
\begin{equation}
	\label{A.8}
	\dG(f_t)-\al_t^\loc(\hat f)=\int_0^t \di_r \al_{t-r}^\loc(\dG(f_r))\,dr=0,
\end{equation}
where, recall, $\hat f=\dG( f)$. 

Next, using \eqref{A.7.3}, we compute, for every observable $A$, 
\begin{equation}\label{B11}
	\frac1i \di_r \al_r^\intn(A)=\al_r^\intn\del{\sbr{\al_r^\loc (V),A}},
\end{equation}
% checked 11.12 c.f.~\eqref{410}
and therefore
\begin{equation}\label{A.7}
	\begin{aligned}
		\al_t(\hat f)-\al_t^\loc(\hat f)=&  \int_0^t \di_r\del{\al_r^\intn\circ \al_t^\loc(\hat f)}\,dr\\
=&i \int_0^t \al_r^\intn\del{\sbr{\al_r^\loc (V),\al_t^\loc(\hat f)}} \,dr\\
=&i \int_0^t \al_r \del{\sbr{V,\al_{t-r}^\loc(\hat f)}} \,dr\\
=&i \int_0^t \al_r \del{\sbr{V,\dG(f_{t-r})}}=0,
	\end{aligned}
\end{equation}
where in the last line we use \eqref{A.8} and property \eqref{V-cond}. Combining \eqref{A.8}--\eqref{A.7} gives  \eqref{A.6}. 
\end{proof}

		\section{Symmetrized commutator expansion}\label{sec:B}
In this appendix, we establish the following symmetrized commutator expansion (c.f.~\cite{FLS, Breteaux_2022}):

\begin{proposition}
	\label{thm4.1} 		Let $A\in\cB(\hf)$ and $\Phi$ be a self-adjoint operator on $\hf$
	s.th.~for some $n\ge1$,
	\begin{equation}\label{4.3}
		\ad{k}{\Phi}{A}\in\cB(\hf)\quad (1\le k\le n+1).
	\end{equation}
	Then,  for every  $\chi\in C^\infty(\Rb)$ s.th.~$\chi'$ has compact support and operators \begin{equation}\label{chitsphi}
		\chi_{ts}:=\chi(s^{-1}(\Phi-vt))
	\end{equation} with $s,\,t\in\Rb$ (c.f.~\eqref{chi-ts}), we have the expansion
	\begin{equation}
		\label{4.10}
		\begin{aligned}
			\sbr{A,\chi_{ts}}=&s^{-1}\sqrt{\abs{\chi'_{ts}}}\sgn(\chi'_{ts})[A,\Phi] \sqrt{\abs{\chi'_{ts}}}\\&+\sum_{k=2}^ns^{-k}\sum_{m=1}^{N_k}v^{(m)}_{ts}g^{(m)}_{ts} \ad{k}{\Phi}{A} v^{(m)}_{ts}+s^{(n+1)}R(t,s).
		\end{aligned}
	\end{equation}
The r.h.s.~of \eqref{4.10} is dropped for $n=1$. Moreover, if $n\ge2$, \begin{enumerate}
	\item $v^{(m)}$ are piece-wise smooth functions supported in $\supp(\chi')$;
	\item  $g^{(m)}$ are piece-wise constant functions taking values in $\pm1$ on $\supp(v^{(m)})$;
	\item  $v^{(m)}_{ts},\,g^{(m)}_{ts}$ are defined as \eqref{chitsphi}; \item $R(t,s)$ is bounded for all $s,\,t,$  and  satisfies
	\begin{align}\label{C4}
		\norm{R(t,s)}\le& C\norm{\ad{n+1}{\Phi}{A}},
	\end{align}
	for some $C=C(n,\chi)>0$;
	\item $1\le N_k\le C(n)$ for some $C(n)>1$ and all $k=2,\ldots, n$.	
\end{enumerate}\end{proposition}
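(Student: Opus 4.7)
The proof proceeds in three stages: first derive an asymmetric commutator expansion for $[A,\chi(\Psi)]$ with $\Psi:=s^{-1}(\Phi-vt)$, then symmetrize each term via a square-root decomposition of $\chi^{(k)}$, and finally iterate to absorb the non-symmetric residues into the order-$s^{-(n+1)}$ remainder. For Stage~1, apply the Helffer--Sj\"ostrand formula $\chi(\Psi)=\tfrac{1}{\pi}\int_\Cb\bar\partial\tilde\chi(z)(z-\Psi)^{-1}\,dL(z)$, together with the resolvent commutator $[A,(z-\Psi)^{-1}]=(z-\Psi)^{-1}[A,\Psi](z-\Psi)^{-1}$, and iterate the identity $B(z-\Psi)^{-1}=(z-\Psi)^{-1}B-(z-\Psi)^{-1}[B,\Psi](z-\Psi)^{-1}$ a total of $n$ times. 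Using $\tfrac{k!}{\pi}\int\bar\partial\tilde\chi(z)(z-\Psi)^{-(k+1)}\,dL(z)=\chi^{(k)}(\Psi)$, this yields
\begin{equation*}
[A,\chi(\Psi)]=\sum_{k=1}^{n}c_k\,\chi^{(k)}(\Psi)\,\ad{k}{\Psi}{A}+R_n^{(0)}(\Psi),
\end{equation*}
with numerical constants $c_k$ and $\|R_n^{(0)}\|\le C(n,\chi)\|\ad{n+1}{\Psi}{A}\|$. Since $\ad{k}{\Psi}{A}=s^{-k}\ad{k}{\Phi}{A}$ by the scaling of $\Psi$, combined with hypothesis \eqref{4.3}, this already accounts for the $s^{-(n+1)}R(t,s)$ remainder and the bound \eqref{C4}.

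For Stage~2 (symmetrization), partition $\R$ into the finitely many maximal intervals $I_j$ on which $\chi^{(k)}$ has constant sign, and on each $I_j$ write $\chi^{(k)}=u_j^2 g_j$ with $u_j=\sqrt{|\chi^{(k)}|}$ smooth on the interior of $I_j$ and $g_j=\pm 1$. Commuting one factor through $\ad{k}{\Psi}{A}$ gives
\begin{equation*}
\chi^{(k)}(\Psi)\,\ad{k}{\Psi}{A}=\sum_j u_j(\Psi) g_j(\Psi)\,\ad{k}{\Psi}{A}\,u_j(\Psi)+\sum_j u_j(\Psi) g_j(\Psi)\big[u_j(\Psi),\ad{k}{\Psi}{A}\big].
\end{equation*}
The first sum already matches the symmetric form in \eqref{4.10} (and, for $k=1$, produces the leading term $s^{-1}\sqrt{|\chi'_{ts}|}\,\sgn(\chi'_{ts})\,[A,\Phi]\,\sqrt{|\chi'_{ts}|}$). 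The correction commutators are expanded again via Stage~1 with $\chi$ replaced by $u_j$ and $A$ replaced by $\ad{k}{\Psi}{A}$; since $\ad{\ell}{\Psi}{(\ad{k}{\Psi}{A})}=s^{-(k+\ell)}\ad{k+\ell}{\Phi}{A}$, each correction contributes terms of order $s^{-(k+\ell)}$ with $\ell\ge 1$ and multi-commutator depth $k+\ell>k$.

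Iterating Stage~2 at most $n-k$ times on every term of Stage~1 pushes all still-asymmetric pieces to multi-commutator depth $\ge n+1$, whence they are absorbed into $s^{-(n+1)}R(t,s)$. The multipliers $v^{(m)}$ appearing in the resulting symmetric sandwiches $v^{(m)}_{ts}g^{(m)}_{ts}\,\ad{k}{\Phi}{A}\,v^{(m)}_{ts}$ are finite products of derivatives of the $u_j=\sqrt{|\chi^{(j)}|}$'s on their sign intervals, hence piecewise smooth with supports in $\mathrm{supp}(\chi')$; the $g^{(m)}$ are piecewise constant in $\pm 1$; and the number of terms $N_k$ depends only on $n$ and $\chi$. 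The main obstacle is the combinatorial bookkeeping: one must verify that this recursion terminates with $N_k\le C(n)$ and that the accumulated remainder still obeys $\|R(t,s)\|\le C(n,\chi)\|\ad{n+1}{\Phi}{A}\|$ rather than picking up lower-order multi-commutators. A secondary technical subtlety is the non-smoothness of $\sqrt{|\chi^{(k)}|}$ at zeros of $\chi^{(k)}$, handled by treating each sign interval separately via Helffer--Sj\"ostrand with an almost-analytic extension of the piecewise-smooth $u_j$.
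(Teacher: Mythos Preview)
Your proposal is correct and follows essentially the same approach as the paper: an asymmetric commutator expansion (which the paper cites from \cite{HunSig1} rather than rederiving via Helffer--Sj\"ostrand), followed by symmetrization via $\chi^{(k)}=\sgn(\chi^{(k)})\sqrt{|\chi^{(k)}|}\sqrt{|\chi^{(k)}|}$ and iteration of the commutator expansion on the residual $[\sqrt{|\chi^{(k)}|},\ad{k}{\Psi}{A}]$ terms. The only minor deviation is your partition into sign intervals $I_j$: the paper instead treats $v_k^{(0)}:=\sqrt{|\chi^{(k)}|}$ and $g_k^{(0)}:=\sgn(\chi^{(k)})$ as single global piecewise-smooth (resp.\ piecewise-constant) functions, so the number of terms $N_k$ is governed purely by the iteration combinatorics (compositions $k_1+\cdots+k_p=k$) and hence bounded by $C(n)$ independently of $\chi$, as required by item~(5) of the statement---your interval partition would make $N_k$ depend on the number of sign changes of $\chi^{(k)}$, but this is easily avoided.
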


	Since $\chi\in\cX$ and $h$ satisfies \eqref{k-cond}, by definition \eqref{F} and  \corref{corA.2}, the hypotheses of \propref{thm4.1} are satisfied with $\chi\in \cX$,  $A=ih$ (see \eqref{1.1}),  and $\Phi=d_X$ for any $X\subset \Lam$. This gives :
	\begin{corollary}\label{corC2}
		Let $\Phi=d_X$ in \eqref{chitsphi}. Then the operators $\chi_{ts}$  satisfies \eqref{4.10} with $A=ih,\,\Phi=d_X$. 
	\end{corollary}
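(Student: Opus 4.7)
The plan is to verify that the specific choice $A = ih$, $\Phi = d_X$ satisfies every hypothesis of Proposition~\ref{thm4.1}, after which the corollary is immediate from that proposition applied verbatim. Since \eqref{4.10} is stated at the operator level in terms of abstract $A,\Phi,\chi$, no independent computation is needed; the only task is to match the objects.

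First I would check the assumptions on $\chi$ and $\Phi$. Any $\chi \in \mathcal{X}$ lies in $C^\infty(\mathbb{R})$ with $\supp\chi' \subset (0,\delta)$ by the definition \eqref{F}, so $\chi'$ is compactly supported as required in Proposition~\ref{thm4.1}. The function $d_X(x) = \dist(\{x\},X)$ is real-valued, so the associated multiplication operator on $\hf = \ell^2(\Lambda)$ is self-adjoint on its natural domain, and $\chi_{ts} = \chi(s^{-1}(d_X - vt))$ is precisely of the form \eqref{chitsphi}.

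The main step is to verify the commutator condition \eqref{4.3}, that is, $\mathrm{ad}^{k}_{d_X}(ih) \in \mathcal{B}(\hf)$ for $1 \le k \le n+1$. This is exactly the content of Corollary~\ref{corA.2}: the decay hypothesis \eqref{k-cond} on $h_{xy}$ implies the quantity $M$ of \eqref{A.1} is finite for the kernel of $h$, and $d_X$ is $1$-Lipschitz, so it satisfies \eqref{phicond} with $L=1$. Lemma~\ref{lemA.1} then yields $\|\mathrm{ad}^{k}_{d_X}(h)\| \le \kappa_n$ for $1 \le k \le n+1$, and multiplication by $i$ leaves norms unchanged, so $\|\mathrm{ad}^{k}_{d_X}(ih)\| \le \kappa_n$ in the same range. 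This verifies \eqref{4.3}.

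With all hypotheses of Proposition~\ref{thm4.1} checked for $(A,\Phi) = (ih, d_X)$, the expansion \eqref{4.10} holds for $\chi_{ts}$ with these choices, including the stated properties of $v^{(m)}$, $g^{(m)}$, $N_k$, and the remainder bound \eqref{C4}. The only potential subtlety, and the one I would be careful about, is confirming that the Lipschitz constant $L = 1$ gives the clean bound $\|\mathrm{ad}^{k}_{d_X}(h)\| \le \kappa_n$ uniformly in $k \le n+1$ (rather than a $k$-dependent constant), but this is precisely what Lemma~\ref{lemA.1}'s estimate \eqref{CommEst} provides once one observes that $\sup_{y} \sum_x |h_{xy}||x-y|^{n+1} = \sup_x \sum_y |h_{xy}||x-y|^{n+1}$ by the hermiticity $h_{xy} = \overline{h_{yx}}$, so the symmetric product $M$ in \eqref{A.1} reduces to $\kappa_n^2$.
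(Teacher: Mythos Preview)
Your proposal is correct and follows essentially the same route as the paper: the paper's justification (given immediately before the corollary) is simply that $\chi\in\cX$ ensures $\chi'$ has compact support by \eqref{F}, and Corollary~\ref{corA.2} (itself a direct consequence of Lemma~\ref{lemA.1} with $L=1$) verifies the commutator hypothesis \eqref{4.3}, so Proposition~\ref{thm4.1} applies. Your additional remark on hermiticity reducing the symmetric product $M$ in \eqref{A.1} to $\kappa_n^2$ is a nice clarification the paper leaves implicit.
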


\begin{proof}[Proof of \propref{thm4.1}]
		Throughout the proof, we fix $t$ in \eqref{chi-ts} and the \textit{self-adjoint operator} $\Phi$ satisfying \eqref{4.3}.  Then we consider the one-parameter  family of (bounded) operators $\chi_s\equiv \chi_{ts}$, see \eqref{chitsphi}.
In the proof below, all estimates are independent of $\Phi,\,v,\,t$.

	1. Since $\chi\in C^\infty$ and $\chi'$ has compact support, the hypotheses of \cite[Lems.~B.1--2]{HunSig1} are satisfied. Hence, by the commutator expansion formula \cite[Eq.~(B.14)]{HunSig1}, we have
	\begin{align*}
		[A,  \chi_s]=&  \sum_{k=1}^ns^{- k}E^{(0)}(k,s)+s^{-(n+1)}R^{(0)}(s),\\
		E^{(0)}(k,s):=&{1\over{k!}} \chi^{(k)}_s\ad{k}{\Phi}{A}.
	\end{align*}
 with $R^{(0)}(s)$ satisfying the remainder estimate, 
	$$\norm{R^{(0)}(s)}\le C\norm{\ad{n+1}{\Phi}{A}},$$
{where $C=C(n,\,\chi)>0$ and the r.h.s. is finite by condition \eqref{4.3}.}
 We proceed to symmetrize $E^{(0)}(k,s),\,k=1,\ldots,n$  w.r.t. the functions  $G_k^{(0)}(s):= \chi^{(k)}_s$. For each $k$, let
	\begin{align*}
		&v_k^{(0)}(s)\equiv \del{v_k^{(0)}}_s :=\sqrt{\abs{G_k^{(0)}(s)}},\quad g_k^{(0)}(s)\equiv \sgn(G^{(0)}_k(s)).
	\end{align*}
	Then we have 
	\begin{equation}\label{4.6}
		\begin{aligned}
			G_k^{(0)}(s)\ad{k}{\Phi}{A} =& g_k^{(0)}(s)v_k^{(0)}(s)  \ad{k}{\Phi}{A} v_k^{(0)}(s)\\ &+ g_k^{(0)}(s)v_k^{(0)}(s) \sbr{v_k^{(0)}(s),\ad{k}{\Phi}{A}}.
		\end{aligned}
	\end{equation}
	By the assumption on $\chi$, each $v_k^{(0)}$ is a piece-wise smooth function supported $\supp\chi'$. Hence, we can again expand the commutator in the r.h.s. of \eqref{4.6} via \cite[Eq.~(B.14)]{HunSig1}. This way we obtain
	\begin{align}\label{4.7}
		[  v_k^{(0)}(s),\ad{k}{\Phi}{A}]= -\sum_{m=1}^{n-k} {s^{- m}\over{m!}}G^{(1)}_{m,k}(s)\ad{k+m}{\Phi}{A}+s^{-(n-k+1)}R^{(1)}_k(s),
	\end{align}
	where \begin{enumerate}
		\item[(a)] the sum is dropped if $k=n$;
		\item[(b)]  $\supp G^{(1)}_k(s)\subset \supp v_k^{(0)}(s)\subset \supp \chi'_s$,  with $\norm{G^{(1)}_k(s)}_{L^\infty}\le C$ for some $C=C(\chi)$ and all $k,s$;
		\item[(c)] and  $\norm{R^{(1)}_k(s)} \le C\norm{\ad{n+1}{\Phi}{A}}$ for some $C=C(n,\chi)>0$ and all $k,s$.  
	\end{enumerate}
	
	If $n=1$, then the first term in \eqref{4.7} is dropped. Hence, plugging \eqref{4.7} into \eqref{4.6}, we find
	$$[A,\chi_s]=g_1^{(0)}(s)v_1^{(0)}(s)\ad{1}{\Phi}{A}v_1^{(0)}(s)+s^{-2}(R^{(0)}(s)+R^{(1)}_1(s)).$$
	This establishes expansion  \eqref{4.10} for $n=1$.
	
	2. If $n\ge2$, then we iterate Step 1 as follows. First, plugging \eqref{4.7} into \eqref{4.6}, we find
	\begin{align*} 
		[A,  \chi_s]=&\sum_{k=1}^n s^{-k} \Bigl(S^{(1)}(k,s)+\sum_{m=1}^{n-k+1} s^{- m}E^{(1)}(k,m,s)\Bigr)\\
		&+s^{-(n+1)}\del{R^{(0)}(s)+R^{(1)}(s)},\notag\\
		S^{(1)}(k,s):=&\frac{1}{k!}g_k^{(0)}(s) v_k^{(0)}(s) \ad{k}{\Phi}{A} v_k^{(0)}(s),\\
		E^{(1)}(k,m,s):=&-{1\over{m!}}g_k^{(0)}(s)v_k^{(0)}(s)  G_{k,m}^{(1)}(s)\ad{k+m}{\Phi}{A},\label{4.3.11}\\
		R^{(1)}(s):=&\sum_{k=1}^n\frac{1}{k!}g_k^{(0)}(s)v_k^{(0)}(s) R^{(1)}_k(s).
	\end{align*}
	
	%					In particular,	the leading term in \eqref{4.A.11} satisfies the operator inequality
	%		\begin{align*}
		%			S^{(1)}(1,s)=&v_1^{(0)}(s) g_1^{(0)}(s) \ad{1}{\Phi}{A} v_1^{(0)}(s)\\
		%			\le&\norm{g_1^{(0)}(s) \ad{1}{\Phi}{A}}_{\rm op}\abs{G^{(0)}_1(s)}\le \kappa\chi_s'
		%		\end{align*}
	%		by the uniform estimate \eqref{A.5} with $k=1$, and the choices $\chi'\ge0.$. 
	%		
	
	Fix $1\le k\le n-1$.  We symmetrize each of $E^{(1)}(k,m,s),\,m=1,\ldots,n-k+1$ w.r.t. the function \begin{equation}\label{sym}
		-{1\over{m!}}g_k^{(0)}(s) v_k^{(0)}(s) G_{k,m}^{(1)}(s)
	\end{equation} in place of $G^{(0)}_k(s)$ in Step 1. This will introduce symmetrized operators $S^{(2)}(k,m,s)$, uniformly bounded operators  $E^{(2)}(k,m,l,s)$,  and   remainders  $R^{(2)}_{k,m}(s)$ as before. 
	
	3. From here one can see that  this process can be iterated for exactly $(n-1)$--times.  At the end, we obtain an expansion of $[A,  \chi_s]$ into a sum of the form 
	\begin{equation}
		\label{4.A.11}
		[A,  \chi_s]
		=\sum_{k=1}^ns^{-k}\sum_{p=1}^{n-1} \sum S^{(p)}(k_1,\ldots, k_p)+\sum_{k=0}^kR^{(k)}(s),
	\end{equation}
	where the third sum is over some  
	combinations of $k_i\ge1$ with $\sum_{i=1}^p k_i=k$. 
Each $S^{(p)},\,p=1,\ldots,n-1$ is of the form
	\begin{equation*}
		S^{(p)}\del{k_1,\ldots,k_p,t,s}=(-1)^{p-1}\frac{1}{k_p!}g^{(p-1)}_{k_1\ldots k_p}(s)v^{(p-1)}_{k_1\ldots k_p}(s)\ad{k_1+\ldots +k_p}{\Phi}{A}v^{(p-1)}_{k_1\ldots k_p}(s),
	\end{equation*}
	where the functions
	$v^{(p-1)}_{k_1\ldots k_p}\text{ are piece-wise smooth},$
	 uniformly bounded by a constant $C=C(\chi)$, and supported in $\supp \chi'$. 
	$\abs{g^{(p-1)}_{k_1\ldots k_p}(s)}$ are piece-wise constant functions, taking values in $\pm1$. This establishes expansion \eqref{4.10}. The uniform bound on  the remainder follows from corresponding uniform estimates obtained above. 
\end{proof}

\section{Proof of \thmref{thm2-gen}}\label{sec:4.11}

Within this proof we assume $t\ge 0$. The case $t\le0$ follows by the time reflection.  Recall the notations $A_t=\al_t(A)$ and $A_t^\xi=\al_t^{X_\xi}(A)$ for $\xi\ge0$, see \eqref{alloc}.

For a mixed state $\om$, we decompose $\om =\sum p_iP_{\psi^i}$, where $P_\psi$ is the rank-one projection onto $\Cb\psi$,  with  $p_i\ge0,\,\sum p_i<\infty$, and use linearity to reduce the problem to estimating $\abs{\inn{\varphi}{(A_t-A_t^\xi)\psi}}$ for appropriate $\varphi$ and $\psi$. The latter is done in Step 4 at the end of the proof.

%In the proof of  \thmref{thm2}, we use the following notations. To begin with, we
1. %Next, recall the definition
We fix an operator $A\in \cB_X$ and define the remainder operator $$\Rem_t\equiv \Rem_t(A)=A_t-A_t^\xi,$$  c.f.~\eqref{Rem-def}, 
as well as the ($X_\xi$-$X_\xi^\cp$)-coupling operator (c.f.~$R'$ in \eqref{410}) \[R=H-H_{X_\xi}-H_{X_\xi^\cp}=R'-H_{X_\xi^\cp}.\]  Using these definitions and that $[H_{X_\xi^\cp},A^\xi_s]=0$
%we compute for $0\le r\le t$ that 
%\begin{equation}\label{4.41'}
%	\begin{aligned}
	%	\di_r	\al_r(A^\xi_{t-r})=&i\al_r([H,A^\xi_{t-r}]-\al_{t-r}^{X_\xi}([H_{X_\xi},A]))
	%	\\=&i\al_r([R+H_{X_\xi^\cp},A^\xi_{t-r}])
	%	\\=&i\al_r([R,A^\xi_{t-r}]),
	%\end{aligned}
	%\end{equation}
	%where the last line follows from the fact that  
	for all $s\in\Rb,\xi\ge0,$ since $\supp A^\xi_s\subset X_\xi$ (see \eqref{Axi}),
	%
	%By \eqref{4.41'} and the fundamental theorem of calculus,  
	we find from \eqref{410} that
	\begin{equation}\label{4.41}
		\begin{aligned}
			\Rem_t 
			%	= &	\int_0^t \pd{}{r}\al_r(A^\xi_{t-r})\,dr\\
			%		= & i \int_0^t \al_r\del{\sbr{H,A^\xi_{t-r}}-\del{\sbr{\tau(\tilde H), A^\xi_{t-r}}}}\,dr\\
			= & \int_0^t \al_r\del{i\sbr{R,A^\xi_{t-r}}}\,dr.
		\end{aligned}
	\end{equation}
 	
 	Next, we use the standing assumptions $h_{xy}=\overline {h_{yx}}$ and $v_{xy}=v_{yx}=\overline{ v_{xy}}$ to split the ($X_\xi$-$X_\xi^c$)-coupling term $R$ %Hamiltonian $H$ in \eqref{1.1}
  into two terms arising respectively from the kinetic and potential terms in \eqref{1.1} %three different types of interactions
   (see \figref{fig:splitting}): 
%\begin{equation}\label{5.8} H=H_{X_\xi} +H_{X_\xi^\cp}+R, \end{equation} where,
\begin{align}
	%		\label{5.9}  H_1=&\sum_{x,y \in X_\xi} h_{xy} a_x^*a_y+\frac12\sum_{\substack{x,y\in X_\xi\\x\ne y}} V_{xy}, \\
	%		 H_2=&\sum_{x,y \in X_\xi^\cp} h_{xy} a_x^*a_y+\frac12\sum_{\substack{x,y\in X_\xi^\cp\\x\ne y}} V_{xy},\\
	\label{5.12}
	R:=& S+W, \\
	\label{5.10}
	S:=&S'+(S')^*,\quad S':= \sum_{x\in X_\xi,y\in X_\xi^\cp}h_{xy}a_x^*a_y,\\
	W:=&\sum_{x\in X_\xi,y\in X_\xi^\cp}V_{xy}\quad \text{ with $V_{xy}:=a_x^*a_y^*v_{xy}a_ya_x$}.\label{W-def}
\end{align} 
Eqs.~\eqref{4.41}--\eqref{5.12} imply, for any $\varphi,\,\psi\in\cD(N)\cap \cD(H)$,
	\begin{multline}\label{4.42}
		\abs{\undel{\Rem_t}}\le   t\sup_{0\le r \le t}\del{\abs{\urdel{\sbr{S,A^\xi_{t-r}}}}+\abs{\urdel{\sbr{W,A^\xi_{t-r}}}}},
	\end{multline}
where $\phi_r=e^{-irH}\phi$. 
In the rest of the proof we estimate the r.h.s.~of this expression.

	2. We first estimate the first term in the r.h.s.~of \eqref{4.42}. {Within this step, all constants $C>0$ depend only on $n,\,\kappa_n,$ and $c$.}

	Let $s:=t-r$. By formula \eqref{5.10},
	% and the CCR $[a_x^*,a_y]\equiv 0$ for all $x\in X_\xi,y\in X_\xi^\cp$,
	we have the estimate
	%\DETAILS{\(\abs{\ondel{[A,B]}\le 2 \abs{\ondel{AB}}}\) requires $A,B$ to be self adjoint. }
	\begin{align*}
		&\abs{\urdel{\sbr{S',A^\xi_{s}}}}\\\le& \sum_{x\in X_\xi,y\in X_\xi^\cp} \abs{h_{xy}}\del{\abs{\urdel{a_x^*a_yA^\xi_{s}}}+\abs{\urdel{A^\xi_{s}a_x^*a_y}}}\\
		=& \sum_{x\in X_\xi,y\in X_\xi^\cp} \abs{h_{xy}}\del{\abs{\inn{a_x\varphi_r}{a_yA^\xi_{s} \psi_r}}+\abs{\inn{a_x(A^\xi_{s})^*\varphi_r}{a_y \psi_r}}}.
	\end{align*}
	Using the Cauchy-Schwarz inequality and the fact that $\norm{a_z\varphi_r}=\ordel{n_z}^{1/2}$ (recall $n_z=a_z^*a_z$), we find
	\begin{align}
		&\abs{\urdel{\sbr{S',A^\xi_{s}}}}
		\le \mathrm{I}^{1/2}\mathrm{II}^{1/2}+\mathrm{III}^{1/2}\mathrm{IV}^{1/2},\label{4.71}\\
		&\mathrm{I}:=\sum_{x\in X_\xi,y\in X_\xi^\cp} \abs{h_{xy}}\ordel{n_x},\label{4.71.1}\\
		&\mathrm{II}:=\sum_{x\in X_\xi,y\in X_\xi^\cp} \abs{h_{xy}}\wrdel {(A^\xi_{s})^*n_yA^\xi_{s} },\label{4.71.2}\\
		&\mathrm{III} :=\sum_{x\in X_\xi,y\in X_\xi^\cp} \abs{h_{xy}}\wrdel{n_y},\label{4.71.3}\\
		&\mathrm{IV} :=\sum_{x\in X_\xi,y\in X_\xi^\cp} \abs{h_{xy}}\ordel {A^\xi_{s}  n_x (A^\xi_{s})^*}  \label{4.71.4}.
	\end{align}
We now estimate the terms in the r.h.s.~of \eqref{4.71}. Given $c>2\kappa$, we fix a number $0<\g<1/3$ in the remainder of the proof s.th.
		\begin{equation}\label{gamma-cond}
			c_1:=\frac{(1-\g)c}{2}>\kappa.
		\end{equation}
We introduce the local number operators  counting the number of particles in the curved annular regions (c.f.~\eqref{55s} and \figref{fig:N} below):  \begin{equation}\label{N1def}
		N_{\g,\xi}:=N'_{\g,\xi}+N''_{\g,\xi},\quad	N'_{\g,\xi}  := N_{ X_{(1-\g)\xi,\xi}},\quad 	N''_{\g,\xi}  := N_{ X_{\xi,(1+\g)\xi}}.
	\end{equation} 
	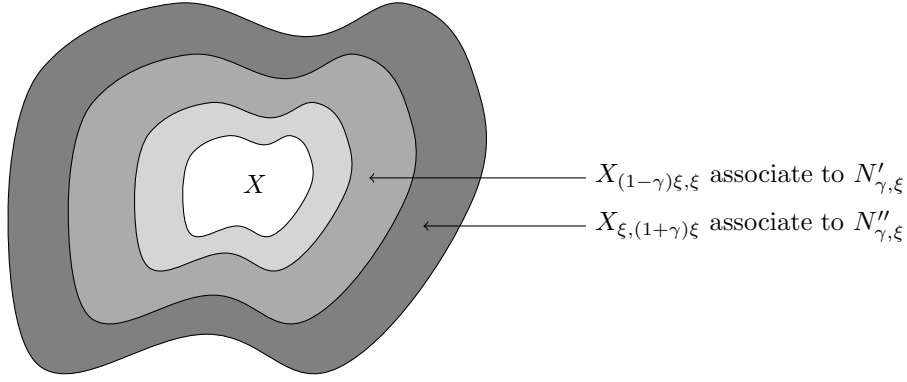
\begin{figure}[H]
		\centering
		\begin{tikzpicture}[scale=.8]
			\draw[fill=gray!100]  plot[shift={(-0.3,-0.45)}, scale=1.1,smooth, tension=.7] coordinates {(-3,0.5) (-2.5,2.5) (-.5,3.5) (1.5,3) (3,3.5) (4,2.5) (4,0.5) (2,-2) (0,-1.5) (-2.5,-2) (-3,0.5)};
			
			\draw[fill=gray!66]  plot[shift={(-0.2,-0.25)}, scale=.8,smooth, tension=.7] coordinates {(-3,0.5) (-2.5,2.5) (-.5,3.5) (1.5,3) (3,3.5) (4,2.5) (4,0.5) (2,-2) (0,-1.5) (-2.5,-2) (-3,0.5)};
			
			\draw[fill=gray!33]  plot[scale=.5,smooth, tension=.7] coordinates {(-3,0.5) (-2.5,2.5) (-.5,3.5) (1.5,3) (3,3.5) (4,2.5) (4,0.5) (2,-2) (0,-1.5) (-2.5,-2) (-3,0.5)};
			
			\draw[fill=white]  plot[shift={(0.2,.15)},scale=.3,smooth, tension=.7] coordinates {(-3,0.5) (-2.5,2.5) (-.5,3.5) (1.5,3) (3,3.5) (4,2.5) (4,0.5) (2,-2) (0,-1.5) (-2.5,-2) (-3,0.5)};
			
			\node at (.5,.4) {$X$};

			\draw [->] (6,.5)--(2.4,.5);
			\node [right] at (6,.5) {$X_{(1-\g)\xi,\xi}\text{ associate to }N'_{\g,\xi}$};
			
			\draw [->] (6,-.3)--(3.3,-.3);
			\node [right] at (6,-0.3) {$X_{\xi,(1+\g)\xi}\text{ associate to }N''_{\g,\xi}$};
		\end{tikzpicture}
		\caption{Schematic diagram illustrating the region associated to the local number operators in \eqref{N1def}.}
		\label{fig:N}
	\end{figure}

%		Write  $ N_{\g_1,\xi}= N_{Y^\cp_{(\g_2-\g_1)\xi}}$ and $ N_{\g_2,\xi}=N_{Y^\cp}$ for any two numbers $1\ge \g_2>\g_1\ge0$ and some fixed (possibly disconnected) subset $Y\subset \Lam$ (see \eqref{N1def}). 
		
In what follows, we will use the following estimate for the particle number in curved annular regions (c.f.~\cite[Thm.~2.3]{FLS2}):
\begin{align}\label{624}
 	\ordel{N_{\g_1,\xi}}
	\le C\del{\ondel{N_{\g_2,\xi}}+((\g_2-\g_1)\xi)^{-n} \ondel{N}},
\end{align}
valid for any two numbers $1\ge \g_2>\g_1\ge0$, $c>\kappa$, and  $r<(\g_2-\g_1)\xi/c$. Estimate \eqref{624} follows from  the `incoming' light cone estimate,  \eqref{MVEext}.

	2.1.  To estimate the term $\mathrm{I}$ from \eqref{4.71.1},
	we use the decomposition
	\begin{equation}\label{X-decomp}
		X_\xi=X_{(1-\g)\xi}\cup  X_{(1-\g)\xi, \xi},
	\end{equation}
	c.f.~\figref{fig:N}, and that $\abs{x-y}\ge\g\xi$ for all $x\in X_{(1-\g)\xi}$ and $y\in X_\xi^\cp$, to compute
	\begin{align}
		\mathrm{I}
		%	=&\sum_{x\in X_\xi,y\in X_\xi^\cp}|h_{xy}|\ordel{n_x} \notag \\ 
		&=\sum_{y\in X_\xi^\cp} |h_{xy}|\del{\sum_{x\in  X_{(1-\g)\xi,\xi}} \ordel{n_x}  +\sum_{x\in X_{(1-\g)\xi}} \ordel{n_x}}   \notag \\ 
		&\le \del{\sup_{x\in\Lam}\sum_{y\in X_\xi^\cp} |h_{xy}|}\sum_{x\in  X_{(1-\g)\xi,\xi}} \ordel{n_x} \notag \\&\quad +(\g\xi)^{-n} \del{\sup_{x\in\Lam}\sum_{y\in X_\xi^\cp} |h_{xy}||x-y|^n}\sum_{x\in X_{(1-\g)\xi}} \ordel{n_x}. \notag 
	\end{align}
	Recalling definition \eqref{k-cond} and noting the fact that $\sup_{x\in\Lam}\sum_{y\in X_\xi^\cp} |h_{xy}|\le\kappa_{n-1}$ as the grid size of the underlying lattice is at least $1$ (see \eqref{k-cond}), we conclude
	\begin{equation}
		\mathrm{I} \le\kappa_{n-1}\del{ \ordel{N'_{\g,\xi} }+(\g\xi)^{-n }\ordel{N} }, \label{4.811}
	\end{equation}
	where, recall, $N_{\g,\xi} '\equiv N_{ X_{(1-\g)\xi,\xi}}$ (see \eqref{N1def}).

To estimate the first term in line \eqref{4.811},  we use the relation $N_{\g,\xi}'\le N_{\g,\xi}$ and apply
\eqref{624} with  {$c\to c_1>\kappa$ (see the choice \eqref{gamma-cond})}  to obtain, {for all $0\le r< (1-\g)\xi/c_1$,}
{	\begin{align}\label{8.909}
			\ordel{N_{\g,\xi}'}\le&C\tau_0(\varphi),
				\\\tau_0(\phi):=&{\phidel{  N_{1,\xi} }+(\g\xi)^{-n} \phidel{N}}.\label{tau0def}
	\end{align}}
Plugging estimate \eqref{8.909} back to \eqref{4.811} and using the conservation $N$, we find that 
	\begin{equation}\label{4.804}
		\mathrm{I}\le C\tau_0(\varphi),
	\end{equation}
	uniformly for all $r$ with  {$0\le r< (1-\g)\xi/c_1$}.
	
	2.2. To estimate the term $\mathrm{II}$ from \eqref{4.71.2}, we use the decomposition \begin{equation}\label{Xc-decomp}
		X_\xi^\cp=X_{(1+\g)\xi}^\cp\cup  X_{\xi, (1+\g)\xi},
	\end{equation}c.f.~\figref{fig:N}, and the notation $\widetilde {n_y}:=(A^\xi_{s})^*n_yA^\xi_{s}$, to compute
	\begin{align}
		\mathrm{II}=&\sum_{x\in X_\xi} |h_{xy}|\del{\sum_{y\in  X_{\xi,(1+\g)\xi}} \wrdel {\widetilde {n_y} }+\sum_{y\in X_{(1+\g)\xi}^\cp} \wrdel {\widetilde {n_y} } }  \notag \\ 
		\le	&\del{\sup_{y\in\Lam}\sum_{x\in X_\xi} |h_{xy}|}\sum_{y\in  X_{\xi,(1+\g)\xi}}\wrdel {\widetilde {n_y} }   \notag\\&\quad+(\g\xi)^{-n} \del{\sup_{y\in\Lam}\sum_{x\in X_\xi} |h_{xy}||x-y|^n}\sum_{y\in X_{(1+\g)\xi}^\cp} \wrdel {\widetilde {n_y} }   \notag \\ 
		\le& \kappa_{n-1}\del{ \wrdel{(A^\xi_{s})^*N_{\g,\xi}''A^\xi_{s} }+(\g\xi)^{-n }\wrdel{(A^\xi_{s})^* NA^\xi_{s}} }. \label{4.812}
	\end{align}
To estimate the first term in line \eqref{4.812}, we note that since $\supp A^\xi_s\subset X_\xi$ for all $s$ and $\supp N''_{\g,\xi}\subset X_\xi^\cp$ by construction (see \eqref{N1def}), we have $\sbr{A^\xi_s,N''_{\g,\xi}}\equiv 0$ for all $s$. By this fact, the first term in line \eqref{4.812} can be bounded as 
	\begin{equation}\label{627}
\begin{aligned}
			&\wrdel{(A^\xi_{s})^*N_{\g,\xi}''A^\xi_{s} }\\=&\wrdel{(N_{\g,\xi}'')^{1/2}(A^\xi_{s})^*A^\xi_{s} (N_{\g,\xi}'')^{1/2}}\\\le& \Norm{A^\xi_{s}}^2\wrdel{N_{\g,\xi}'' }=\Norm{A}^2\wrdel{N_{\g,\xi}'' }.
\end{aligned}
	\end{equation}
	Using the relation $N_{\g,\xi}''\le N_{\g,\xi}$ and applying \corref{cor3} to the r.h.s. above with {$c\to c_1>\kappa$}, we  obtain  that for all {$0\le r < (1-\g) \xi/c_1$},
	\begin{align}\label{628}
			&\wrdel{(A^\xi_{s})^*N_{\g,\xi}''A^\xi_{s} }\le C\Norm  {A}^2\tau_0(\psi).
	\end{align}
To estimate the second term in line \eqref{4.812}, we use the relation $[A_s^\xi,N]=0$ (see \eqref{Axi}) and the conservation of the expectation of $N$ to get 
	$$\wrdel{(A^\xi_{s})^* NA^\xi_{s}}\le \Norm{A^\xi_{s}}^2\wrdel{N}=\Norm{A}^2\wndel{N}.$$
	Plugging the two preceding inequalities back to \eqref{4.812},  we obtain
	\begin{equation}\label{4.805}
		\begin{aligned}
			&\mathrm{II}\le C\Norm{A}^2 \tau_0(\psi).
		\end{aligned}
	\end{equation}
	uniformly for all $r$ with {$0\le r < (1-\g) \xi/c_1$}.

	2.3. The term $\mathrm{III}$ in \eqref{4.71.3}  can be bounded as \eqref{4.812}. (It is actually simpler because there is no $A$'s in \eqref{4.71.3}.) Here we record the result:
	\begin{equation}\label{4.801'}
		\mathrm{III} \le C\tau_0(\psi),
	\end{equation}
	which holds {uniformly for all $0\le r< (1-\g)\xi/c_1$.}

	2.4. To bound the term $\mathrm{IV}$ in \eqref{4.71.4}, we use the decomposition \eqref{X-decomp}
	to compute 
	\begin{align}
		\mathrm{IV} =&\sum_{y\in X_\xi^\cp} |h_{xy}|\sum_{x\in X_{(1-\g)\xi}}\ordel {A^\xi_{s}  n_x (A^\xi_{s})^*} \label{4.802} \\
		& + \sum_{y\in X_\xi^\cp} |h_{xy}|\sum_{x\in  X_{(1-\g)\xi,\xi}}\ordel {A^\xi_{s}  n_x (A^\xi_{s})^*}  \label{4.803}.
	\end{align}
Using the relation $[A_s^\xi,N]=0$ (see \eqref{Axi}) and the conservation of $N$, we bound the term in line \eqref{4.802} as 
	\begin{align}
		&\quad\sum_{y\in X_\xi^\cp} |h_{xy}|\sum_{x\in X_{(1-\g)\xi}}\ordel {A^\xi_{s}  n_x(A^\xi_{s})^*} \notag \\ 
		&\le (\g\xi)^{-n}  \sum_{y\in X_\xi^\cp} |h_{xy}||x-y|^n\sum_{x\in X_{(1-\g)\xi}} \ordel{A^\xi_{s}  n_x(A^\xi_{s})^*}\notag \\ 
		&\le\kappa_{n-1}(\g\xi)^{-n } \ordel{A^\xi_{s}N(A^\xi_{s})^*}\notag\\
		&\le \kappa_{n-1}\Norm{A}^2(\g\xi)^{-n}\ondel{N}.\label{4.802'}
	\end{align}
	To bound the term in line \eqref{4.803}, we 
		define $ \varphi_{r,s}:=e^{-isH_{X_\xi}} \varphi_r$ and recall $N_{\g,\xi} '\equiv N_{ X_{(1-\g)\xi,\xi}}$.
		Then, we have, by definition \eqref{Axi} for the local evolution,
		\begin{equation}\label{4.831'}
			\begin{aligned}
				&\sum_{y\in X_\xi^\cp} |h_{xy}|\sum_{x\in  X_{(1-\g)\xi,\xi}}\ordel {A^\xi_{s}  n_x (A^\xi_{s})^*} 
				\\\le& \kappa_{n-1} \ordel {A^\xi_{s} N_{\g,\xi} '(A^\xi_{s})^*} 
				\\=&\kappa_{n-1}\inn{ A^* \varphi_{r,s}} {  \al^{X_\xi}_{-s}( N_{\g,\xi}')A^* \varphi_{r,s}}.
			\end{aligned}
		\end{equation}
To estimate the quantity in the last line of \eqref{4.831'}, we 
		use \corref{cor3} with evolution $\al_{-s}^{X_\xi}(\cdot)$ to obtain  that for all {$s <  \tfrac{1-\g}{2}\xi/c_1$},
		\begin{equation}
			\label{4.831}
			\begin{aligned}
				&\inn{ A^* \varphi_{r,s}} {  \al^{X_\xi}_{-s}( N_{\g,\xi}')A^* \varphi_{r,s}}\\
				%		\le&(1+C(\tfrac12-\g)^{-1}\xi^{-1})\inn{ \varphi_{r,s}}{ A N_{1/2,\xi}' A^* \varphi_{r,s} }+C((\tfrac12-\g)\xi)^{-n}\inn{ \varphi_{r,s}}{ A  N A^* \varphi_{r,s}}\\
				\le&C\del{\inn{ \varphi_{r,s}}{ A N_{(1+\g)/2,\xi}' A^* \varphi_{r,s} }+(\g\xi)^{-n}\inn{ \varphi_{r,s}}{ A  N A^* \varphi_{r,s}}}.
			\end{aligned}
		\end{equation}
		For the remainder estimate, we use that $0<\g<1/3$ so that $\tfrac{1-\g}{2}>\g$. 
		%where in the last line we use the assumption $\g<1/4$ so that $(\tfrac12-\g)^{-p}<\g^{-p}$ for $p>0$. 
{		Note that this is the only place $\g<1/3$ is used. }
		
		Since $\supp A\subset X$ and $\supp N_{(1+\g)/2,\xi}'\subset X^\cp$ by construction, we can pull out $A$'s from \eqref{4.831} to obtain 
		\begin{equation}
			\label{4.832}
			\begin{aligned}
				&\inn{ A^* \varphi_{r,s}} {  \al^{X_\xi}_{-s}( N_{\g,\xi}')A^* \varphi_{r,s}}\\
				\le&\Norm{A}^2C\del{\inn{ \varphi_{r,s}}{  N_{(1+\g)/2,\xi}'   \varphi_{r,s} }+(\g\xi)^{-n}\inn{ \varphi_{r,s}}{  N \varphi_{r,s}}}.
			\end{aligned}
		\end{equation}
Now we use \corref{cor3} twice on the first term of \eqref{4.832}, first with the evolution $\al^{X_\xi}_s(\cdot)$ and then with $\al_r(\cdot)$. This way we obtain that for {$r+s=t< \tfrac{1-\g}{2}\xi/c_1$},
		\begin{align}\label{4.833}
				\inn{ \varphi_{r,s}}{  N_{(1+\g)/2,\xi}'   \varphi_{r,s} } \le &\inn{ \varphi}{ \al_r \circ \al_s^{X_\xi}( N_{(1+\g)/2,\xi}' ) \varphi }\le C\tau_0(\varphi),
		\end{align}
	{where $\tau_0(\varphi)$ is defined by \eqref{tau0def}.}
		This bounds the first term in the r.h.s.~of \eqref{4.832}. 	(The derivation is similar to \eqref{8.5}.) 
		By the conservation of $N$, we find 
		\begin{equation}\label{4.834}
			\inn{ \varphi_{r,s}}{  N \varphi_{r,s}}=\ondel{N}
		\end{equation}
		in the second term in the r.h.s. of \eqref{4.832}.

		Combining \eqref{4.802'}--\eqref{4.834} yields
		\begin{equation}\label{4.803'}
			\begin{aligned}
				\mathrm{IV}\le & C\Norm{A}^2 \tau_0(\varphi),
			\end{aligned}
		\end{equation}
		which holds uniformly for all  
		\begin{equation}\label{s-cond}
			{t< \frac{1-\g}{2}\xi/c_1. }
		\end{equation}
		
		2.5. At this point, we have uniform estimates \eqref{4.804}, \eqref{4.805}, \eqref{4.801'}, and \eqref{4.803'}, which are valid for all $t$ satisfying \eqref{s-cond}. Plugging these estimates back to \eqref{4.71}, we conclude that all $t$ satisfying \eqref{s-cond},
		\begin{align*}
			\sup_{0\le r \le t}\abs{\urdel{\sbr{S',A^\xi_{t-r}}}}\le C \Norm{A}\tau_0(\varphi)^{1/2}\tau_0(\psi)^{1/2}.
		\end{align*}
		Since $\abs{\urdel{\sbr{(S')^*,A^\xi_{t-r}}}}=\abs{\inn{\psi_r}{\sbr{S',(A^\xi_{t-r})^*}\varphi_r}}$ and $(A^\xi_{t-r})^*=(A^*)^\xi_{t-r}$ (see \eqref{Axi}), going through Steps 2.1--4 and interchanging the roles of $A^*$ (resp. $\varphi_r$) and $A$ (resp. $\psi_r$) yields the exact same estimate for $\sup_{0\le r \le t}\abs{\urdel{\sbr{(S')^*,A^\xi_{t-r}}}}$ as above.

		Recalling the definition of $c_1$ in \eqref{gamma-cond} and the validity interval \eqref{s-cond},  we conclude  that {for every $0\le t< \xi/c$,}
		\begin{align}\label{4.84}
			\sup_{0\le r \le t}\abs{\urdel{\sbr{S,A^\xi_{t-r}}}}\le C\Norm{A}\tau_0(\varphi)^{1/2}\tau_0(\psi)^{1/2}.
		\end{align}
		This bounds the first term in the r.h.s.~of \eqref{4.42}.

		3.
		Next,we estimate the second  term in the last line of \eqref{4.42}. {Within this step, all constants $C>0$ depend only on $n,\,\kappa_n,\,\nu_n,$ and $c$, where $\nu_n$ is as in \eqref{v-cond}. }
		
		By formula \eqref{W-def}, the fact that $[a_x^\sharp,n_y]\equiv 0$ for all $x\in X_\xi,\,y\in X_\xi^\cp$ and $a_x^\sharp= a_x,\,a_x^*$, and the localization property $[A^\xi_s,a_y]=0$,
		we have the estimate
		%\DETAILS{\(\abs{\ondel{[A,B]}\le 2 \abs{\ondel{AB}}}\) requires $A,B$ to be self adjoint. }
		\begin{align*}
			&\abs{\urdel{\sbr{W,A^\xi_{s}}}}\\\le& \sum_{x\in X_\xi,y\in X_\xi^\cp} \abs{v_{xy}}\del{\abs{\inn{\varphi_r}{n_xn_yA^\xi_{s} \psi_r}}+\abs{\inn{\varphi_r}{A^\xi_{s}n_xn_y \psi_r}}}\\
			=&\sum_{x\in X_\xi,y\in X_\xi^\cp} \abs{v_{xy}}\del{\abs{\inn{(A^\xi_s)^*n_x\varphi_r}{n_y \psi_r}}+\abs{\inn{n_x (A^\xi_s)^*\varphi_r}{n_y \psi_r}}}.
		\end{align*}
		Applying the Cauchy-Schwarz and triangle inequalities to the last line above, we find
		\begin{align}
			&\abs{\urdel{\sbr{W,A^\xi_{s}}}}\le \mathrm{V}^{1/2}\mathrm{VI}^{1/2}+\mathrm{V}^{1/2}\mathrm{VII}^{1/2}, \label{4.60}\\
			& \mathrm{V}:=\sum_{x\in X_\xi,y\in X_\xi^\cp} \abs{v_{xy}}\wrdel{n_y^2}\label{4.601},\\
			&\mathrm{VI}:= \sum_{x\in X_\xi,y\in X_\xi^\cp} \abs{v_{xy}}\ordel{n_xA^\xi_{s}(A^\xi_{s})^*n_x},\label{4.602}\\
			&\mathrm{VII}:= \sum_{x\in X_\xi,y\in X_\xi^\cp} \abs{v_{xy}}\ordel{A^\xi_{s}n_x^2(A^\xi_{s})^*}.\label{4.603}
		\end{align}
	
%		\DETAILS{This is correct}
		%Similarly, we find
		%\begin{align}
		%	&\abs{\ordel{\sbr{W,A^\xi_{s}}B_{-r}}}\notag\\
		%%	\le&	\sum_{x\in X_\xi,y\in X_\xi^\cp} \abs{v_{xy}}\del{\abs{\inn{(A^\xi_s)^*n_x\varphi_r}{n_y (B\varphi)_r}}+\abs{\inn{n_x (A^\xi_s)^*\varphi_r}{n_y (B\varphi)_r}}}\notag\\
		%	\le& 2\mathrm{V}_B+\mathrm{VI}+\mathrm{VII},\label{5.55}
		%\end{align}
		%where
		%and 
		%\begin{align}
		%	&\mathrm{V}_B:=\sum_{x\in X_\xi,y\in X_\xi^\cp} \abs{v_{xy}}\obrdel{n_y^2}\label{5.56}.
		%\end{align}
		
		3.1. To bound the term $\mathrm{V}$ in \eqref{4.601}, we use the fact $\sum_{x\in S} n_x^2\le N_S^2$ and proceed exactly as in Step 2.3 above for the estimate of \eqref{4.71.3} (see more details in Step 3.2 below). This way we obtain that, {for all $0\le r<(1-\g)\xi$, }
		\begin{align}\label{4.61}
				\mathrm{V}\le& C\tau(\psi),
\\\label{taudef}
\tau(\phi):=&\phidel{N_{1,\xi}N}+(\g\xi)^{-n} \phidel{N^{2}}.
		\end{align}

		3.2. To bound the term $\mathrm{VI}$ in \eqref{4.602}, we first use 
		$$
		\mathrm{VI}\le \Norm{A_t^\xi}^2 \sum_{x\in X_\xi,y\in X_\xi^\cp} \abs{v_{xy}}\ordel{n_x^2}=\Norm{A}^2 \sum_{x\in X_\xi,y\in X_\xi^\cp} \abs{v_{xy}}\ordel{n_x^2}.
		$$
		Using decomposition \eqref{X-decomp} and the relation $\sum_{x\in S} n_x^2\le N_S^2$,  we then proceed as in the estimate of \eqref{4.71.1} (see Step 2.1) to obtain, in place of \eqref{4.811},
		\begin{align*}
			\mathrm{VI} &\le \Norm{A}^2\Bigl( \sum_{y\in X_\xi^\cp} |v_{xy}|\sum_{x\in  X_{(1-\g)\xi,\xi}} \ordel{n_x^2}  \\&\quad +(\g\xi)^{-n} \sum_{y\in X_\xi^\cp} |v_{xy}||x-y|^n\sum_{x\in X_{(1-\g)\xi}} \ordel{n_x^2}\Bigr)  \\ 
			&\le\nu_n\del{ \ordel{ \del{N'_{\g,\xi}}^2  }+(\g\xi)^{-n }\ordel{N^2} }.
		\end{align*}
		For the first term in the last line above, we use \corref{cor3} to obtain, for all {$0\le r < (1-\g)\xi/c_1$,}
		\begin{equation}\label{647}
			\ordel{ \del{N'_{\g,\xi}}^2  }\le %\big(1+C\eta^{-1}\big) 
			C\tau(\varphi).
		\end{equation}
		Using the preceding two estimates and the conservation of $N$, we conclude that 
		\begin{equation}\label{4.62}
			\begin{aligned}
				\mathrm{VI}\le C\Norm{A}^2\tau(\varphi),
			\end{aligned}
		\end{equation}
		uniformly for all {$0\le r < (1-\g)\xi$.}
		
		3.3. 
		To bound the term $\mathrm{VII}$ in \eqref{4.603}, we  proceed as in the estimate of \eqref{4.71.4} (see Step 2.4). Using the decomposition \eqref{X-decomp}, we compute
		\begin{align}
			\mathrm{VII} =&\sum_{y\in X_\xi^\cp} |v_{xy}|\sum_{x\in X_{(1-\g)\xi}}\ordel {A^\xi_{s}  n_x^2 (A^\xi_{s})^*} \label{4.901} \\
			& + \sum_{y\in X_\xi^\cp} |v_{xy}|\sum_{x\in  X_{(1-\g)\xi,\xi}}\ordel {A^\xi_{s}  n_x^2 (A^\xi_{s})^*}  \label{4.902}.
		\end{align}
The term in line \eqref{4.901} can be bounded in the same way as \eqref{4.802'}:
		\begin{align}
			\quad\sum_{y\in X_\xi^\cp} |v_{xy}|\sum_{x\in X_{(1-\g)\xi}}\ordel {A^\xi_{s}  n_x^2(A^\xi_{s})^*} 
			&\le \nu_n(\g\xi)^{-n}\Norm{A}^2\ondel{N^{2}}.\label{4.901'}
		\end{align}
		To bound the term in line \eqref{4.902}, we first use that \begin{align*}
			\sum_{y\in X_\xi^\cp} |v_{xy}|\sum_{x\in  X_{(1-\g)\xi,\xi}}\ordel {A^\xi_{s}  n_x^2 (A^\xi_{s})^*} 
			\le&\nu_n\inn{ A^* \varphi_{r,s}} {  \al^{X_\xi}_{-s}( (N_{\g,\xi}')^2)A^* \varphi_{r,s}},
		\end{align*}
		which can be derived similarly as \eqref{4.831'}. Applying \corref{cor3} to the r.h.s. above, we find that for all $s<\tfrac{1-\g}{2}\xi/c_1$, 
		\begin{equation}
			\label{4.903}
			\begin{aligned}
				&\inn{ A^* \varphi_{r,s}} {  \al^{X_\xi}_{-s}( (N_{\g,\xi}')^2)A^* \varphi_{r,s}}\\
				\le&C\del{\inn{ \varphi_{r,s}}{ A N_{(1+\g)/2,\xi}' NA^* \varphi_{r,s} }+(\g\xi)^{-n}\inn{ \varphi_{r,s}}{ A  N^2 A^* \varphi_{r,s}}}.
			\end{aligned}
		\end{equation}
		{Since the local number operator $N_{(1+\g)/2,\xi}'$ is supported away from $X \supset \supp A$ (see \eqref{N1def} and \figref{fig:N}),} and since $[A,N] =0$, we can pull out the $A$'s from the first term in the r.h.s. of \eqref{4.903} as
		\begin{align}\label{theplace}
			\inn{ \varphi_{r,s}}{ A N_{(1+\g)/2,\xi}' NA^* \varphi_{r,s} }\le& \Norm{A}^2\inn{ \varphi_{r,s}}{  N_{(1+\g)/2,\xi}'N \varphi_{r,s} },
		\end{align}
	c.f.~\eqref{627}.
		%   \DETAILS{This estimate requires $[A,N]=0$!}
		The rest of the estimate of \eqref{4.903} follows similarly as in the estimate of \eqref{4.831}.
		Here we record the result: \begin{equation}\label{4.63}
			\begin{aligned}
				\mathrm{VII}\le C\Norm{A}^2\tau(\varphi),
			\end{aligned}
		\end{equation}
		which holds uniformly for all $0\le r< t,\,s=t-r$, so long as \eqref{s-cond} holds.
		
		3.4. Combining uniform estimates \eqref{4.60}, \eqref{4.61}--\eqref{4.63} yields, {for every $0\le t< \xi/c$,}
		\begin{align}
			\label{4.64}
			\sup_{0\le r \le t}	\abs{\urdel{\sbr{W,A^\xi_{t-r}}}} \le C \Norm{A}\tau(\varphi)^{1/2}\tau(\psi)^{1/2}.
	\end{align}

		4. Combining \eqref{4.84} and \eqref{4.64} in \eqref{4.42} {and recalling the choice $\g=\g(c,\kappa)$ in \eqref{gamma-cond}} and definitions \eqref{tau0def}, \eqref{taudef}, we conclude that, for $t<\eta/c$ and $c>2\kappa$,
		\begin{align}
				\abs{\undel{\Rem_t}}\le  C t  \Norm{A}\tau(\varphi)^{1/2}\tau(\psi)^{1/2}.\label{RemFormEst}
		\end{align}
	Finally, for any mixed state $\om$ satisfying \eqref{emptyshellcond} and any operator $B\in\cB_Y$, we use the spectral decomposition $\om=\sum p_i P_{\psi^i}$ with $p_i\ge0,\,\sum p_i\le C<\infty$, and the choice $\varphi^i=B^*\psi^i$ in \eqref{RemFormEst} to obtain
	\begin{align}\label{lcae-gen-pf1}
		|\om\big(B\,\Rem_t\big)|\le& \sum p_i\nonumber \abs{\inn{B^*\psi^i }{\Rem_t\psi^i}} \\\le& Ct \Norm{A}\sum p_i\, \tau(B^*\psi^i)^{1/2}\tau(\psi^i)^{1/2}.
	\end{align}
Since $B\in\cB_Y$ and $Y\subset X_{2\xi}^\cp$, we have $[B,N]=[B,N_{1,\xi}]=0$ (see \eqref{N1def}). Therefore, by definition \eqref{taudef}, we have $\tau(B^*\psi^i) \le \norm{B}^2\tau(\psi^i)$ for each $i$.
This, together with estimate \eqref{lcae-gen-pf1} and the facts that $\sum p_i\tau(\psi_i)=\om\del{N_{X_{2\xi}\setminus X}N}+(\g\xi)^{-n}\om(N^2)$ (see \eqref{taudef}) and $\,0<\g<1/3$, yields
\begin{align}
			|\om\big(B\,\Rem_t\big)|\le& C\g^{-n}t \Norm{A}\Norm{B} \del{ \om\del{N_{X_{2\xi}\setminus X}N}+\xi^{-n}\om(N^2)}.\end{align}
This completes the proof of Theorem \ref{thm2-gen}. 	\qed
		
		\section{Generalizations to unbounded observables}\label{sec:nu}
		
		\renewcommand{\Norm}[1]{{\left\vert\kern-0.25ex\left\vert\kern-0.25ex\left\vert #1 
				\right\vert\kern-0.25ex\right\vert\kern-0.25ex\right\vert}}

		In this section, we sketch the extension of the main theorems in \secref{sec:setup} to a large class of unbounded operators. We say that an operator $A$ acting on $\cF$ \textit{has finite degree} if $[A,N]=0$ and 
		\begin{equation}\label{1.18}
			\deg A:=\inf\Set{\nu\ge0:	A,\,A^*\text{ are $N^{\nu/2}$-bounded}}<\infty.
		\end{equation}
		By definition, $\deg A=0$ if and only if $A$ is bounded, and $\deg A\le 2M$ if $A$ is a polynomial in $n_x$ with degree at most $M$. For each $0\le\nu<\infty$, we define the norm
		\begin{equation}\label{5.33}
			\Norm{A}_\nu:= \max\del{\norm{AN^{-\nu/2}},\norm{A^*N^{-\nu/2}}}.
		\end{equation}
		Let \begin{align}
			&\cB^\nu:= \Set{\text{operators $A$ on $\cF$ with  $\Norm{A}_\nu<\infty$}},\\
			&\cB^\nu_X:=\Set{A\in\cB^\nu:\supp A\subset X,\,[A,N]=0}.\label{Bnudef}
		\end{align}
		Then $\cB^\nu_X$ with $\nu=0$ coincides with \eqref{BX}, and we have the following lemma: 
		\begin{lemma}
			\label{lemA.3}
			%	Let $\om$ be positive linear functional on the C*-algebra of observables on $\cF$. 
			Let $0\le\nu<\infty$, $X\subset\Lam$, and $A\in\cB^\nu_X$. Then, for all numbers $p,q\ge0$ and operator $B\ge0$ with $[B,N]=0,\,\supp B\subset X^\cp$, we have  the following operator inequalities:
			\begin{align}\label{A5.34}
				A^*N^qA\le& \Norm{A}_\nu^2N^{\nu+q}&\text{ on $\cD(N^{(\nu+q)/2})$},\\
				A^*N^{p/2}B^qN^{p/2}A\le& \Norm{A}_\nu^2N^{(\nu+p)/2}B^q N^{(\nu+p)/2}&\text{ on $\cD(B^{q/2}N^{(\nu+p)/2})$}.\label{A5.34'}
			\end{align}
		\end{lemma}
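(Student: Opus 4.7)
The plan is to exploit three commutation facts and one boundedness fact, then combine them mechanically. First, $[A,N]=0$ (hence $[A^*,N]=0$) by the defining property of $\cB^\nu_X$ in \eqref{Bnudef}, so $A$ and $A^*$ commute with every (Borel) function of $N$, in particular with $N^{q/2}$, $N^{p/2}$, and $N^{\nu/2}$. Second, since $\supp A\subset X$ and $\supp B\subset X^\cp$, the Fock-space factorization \eqref{F-factor}--\eqref{loc-cond} gives $U_X A U_X^* = A_X\otimes\one$ and $U_X B U_X^* = \one\otimes B_{X^\cp}$, so $[A,B]=[A^*,B]=0$; combined with the assumption $[B,N]=0$, every operator in sight except the pair $A,A^*$ commutes through. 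Third, the $N^{\nu/2}$-boundedness of $A$ from \eqref{1.18}--\eqref{5.33} yields the basic operator inequality
\begin{equation}\label{E:core}
A^*A \le \Norm{A}_\nu^{\,2}\,N^{\nu} \qquad \text{on } \cD(N^{\nu/2}),
\end{equation}
since for any $\psi\in\cD(N^{\nu/2})$, $\langle\psi,A^*A\psi\rangle=\|A\psi\|^2\le \Norm{A}_\nu^{\,2}\|N^{\nu/2}\psi\|^2=\Norm{A}_\nu^{\,2}\langle\psi,N^\nu\psi\rangle$.

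For \eqref{A5.34}, I would commute $A$ past $N^{q/2}$ to write
\[
A^* N^q A = A^* N^{q/2} N^{q/2} A = N^{q/2}(A^*A)N^{q/2},
\]
then sandwich \eqref{E:core} between $N^{q/2}$'s (both sides are bounded from $\cD(N^{(\nu+q)/2})$ into $\cF$, and $N^{q/2}$ is self-adjoint and non-negative), and use $N^{q/2}N^\nu N^{q/2}=N^{\nu+q}$.

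For \eqref{A5.34'}, I would push $A^*$ and $A$ to the centre of the expression using $[A,N]=[A,B]=0$ (and the adjoint statements). Concretely,
\[
A^*N^{p/2}B^q N^{p/2}A = N^{p/2}B^{q/2}(A^*A)B^{q/2}N^{p/2},
\]
then substitute \eqref{E:core} into the middle; since $B^{q/2}$ commutes with $N^\nu$, the right-hand side becomes $\Norm{A}_\nu^{\,2}N^{p/2}B^{q/2}N^\nu B^{q/2}N^{p/2}=\Norm{A}_\nu^{\,2}N^{p/2}N^\nu B^q N^{p/2}$, and a final relabelling $N^{p/2}N^\nu = N^{(\nu+p)/2}N^{\nu/2}$ together with $[B,N]=0$ rearranges this into $\Norm{A}_\nu^{\,2}N^{(\nu+p)/2}B^q N^{(\nu+p)/2}$.

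There is no real obstacle here; the only point requiring care is the domain bookkeeping: one must verify that each intermediate product is densely defined on the stated domain and that the factor $B^q$ (which is only assumed non-negative and commuting with $N$, not bounded) does not cause trouble, but this is handled because $B^{q/2}$ is self-adjoint and $B^q=(B^{q/2})^2$, so the sandwich inequality \eqref{E:core} transfers to $B^{q/2}(A^*A)B^{q/2}\le \Norm{A}_\nu^{\,2}B^{q/2}N^\nu B^{q/2}$ on $\cD(B^{q/2}N^{(\nu+p)/2})$, and everything from there reduces to the scalar arithmetic of exponents of $N$.
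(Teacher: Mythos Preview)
Your proof is correct and follows essentially the same approach as the paper: establish the core bound $A^*A\le\Norm{A}_\nu^2 N^\nu$ from the $N^{\nu/2}$-boundedness of $A$, use $[A,N]=0$ to commute powers of $N$ past $A$ and $A^*$, and for \eqref{A5.34'} additionally use $[A,B]=0$ and $[B,N]=0$ to push $B^{q/2}$ through. The paper's presentation of \eqref{A5.34} is a one-line symmetrization $A^*N^qA=N^{(\nu+q)/2}\bigl(N^{q/2}AN^{-(\nu+q)/2}\bigr)^*\bigl(N^{q/2}AN^{-(\nu+q)/2}\bigr)N^{(\nu+q)/2}$ followed by an operator-norm bound, but this is equivalent to your sandwich argument; your added domain remarks are a welcome clarification that the paper omits.
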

		\begin{proof}
			Since $\nu$ is fixed, we write $\Norm{\cdot}\equiv \Norm{\cdot}_\nu$ within this proof. Symmetrizing as $$ A^* N ^qA=N ^{(\nu+q)/2} N ^{-(\nu+q)/2} A^* N ^{q/2} N ^{q/2} A N ^{-(\nu+q)/2} N ^{(\nu+q)/2},$$
			and using definitions \eqref{5.33}--\eqref{Bnudef},
			we see that $$
			A^*N^qA\le\norm{N ^{q/2} A N ^{-(\nu+q)/2}}^2N^{\nu+q}= \Norm{A}^2N^{\nu+q}.
			$$
			This gives \eqref{A5.34}. Next, since $\supp A\in X$ and $\supp B\subset X^\cp$, we have $[A,B]=[A,N]=0$ and therefore
			$$A^*N^{p/2}B^qN^{p/2}A=N^{p/2}B^{q/2}A^*AB^{q/2}N^{p/2}.$$ Applying \eqref{A5.34} to the r.h.s. with $q=0$ and then using the fact that $[N,B]=0$, we find
			\begin{align*}	
				A^*N^{p/2}B^qN^{p/2}A\le& \Norm{A}^2N^{p/2}B^{q/2}N^\nu B^{q/2}N^{p/2}\\=&\Norm{A}^2N^{(\nu+p)/2}B^q N^{(\nu+p)/2}.
			\end{align*}
			This gives \eqref{A5.34'}.
		\end{proof}
		
		Now we sketch the proof of \thmref{thm2-gen} for operators $A\in\cB^\nu_X$ with $\nu>0$. Through this, the corresponding results for Thms.~\ref{thm2}--\ref{thm3} follow readily, wherefore extensions of the results in Sects.~\ref{sec:cor}--\ref{sec:2.8}, which are applications of Thms.~\ref{thm2}--\ref{thm3}, follow.
		
		The main idea is to use \corref{cor3} together with \lemref{lemA.3}, in places where \thmref{thm1} is used. For example, in place of \eqref{627}, we use \eqref{A5.34} with $B=N''_{\g,\xi}$, which is supported away from $X_\xi$ (see \eqref{N1def}), to get
		\begin{equation}\label{627'}
			\wrdel{(A^\xi_{s})^*N_{\g,\xi}''A^\xi_{s} }\le \Norm{A}^2\wrdel{N_{\g,\xi}''N^{\nu} }.
		\end{equation}
		Then we use \corref{cor3} to the r.h.s. above to get
		$$
		\begin{aligned}
			\wrdel{N_{\g,\xi}''N^{\nu} }
			\le& C  \del{\wndel{N_{1,\xi}N^{\nu}}+(\g\xi)^{-n} \wndel{N^{\nu+1}}}.
		\end{aligned}
		$$
		This, together with \eqref{627'}, yields \begin{equation}\label{628'}
			\begin{aligned}
				&\wrdel{(A^\xi_{s})^*N_{\g,\xi}''A^\xi_{s} }
				\\\le& C\Norm  {A}^2\del{\wndel{N_{1,\xi}N^{\nu}}+(\g\xi)^{-n} \wndel{N^{\nu+1}}}
			\end{aligned}
		\end{equation}
		in place of \eqref{628}. Similar modifications are then made to \eqref{4.832}, \eqref{4.833}, \eqref{647}, etc. This way one can obtain,  for  states $\varphi,\,\psi\in\cD(N^{\frac{\nu+2}{2}})\cap\cD(H)$  and all other notations the same as in \thmref{thm2-gen},
		$$
		\begin{aligned}
			&\abs{\undel{(A_t-A_t^\xi)}}\\
			\le& C\abs{t}\Norm{A}
			\del{\ondel{N_{X_{2\xi}\setminus X}N^{\nu+1}}+(\g\xi)^{-n} \ondel{N^{\nu+2}}}^{1/2}
			\\
			&\quad \times  \del{\wndel{N_{X_{2\xi}\setminus X}N^{\nu+1}}+ (\g\xi)^{-n}\wndel{N^{\nu+2}}}^{1/2}.
		\end{aligned}
		$$

\begin{bibdiv}
	\begin{biblist}
		
		\bib{APSS}{article}{
			author={Arbunich, J.},
			author={Pusateri, F.},
			author={Sigal, I.~M.},
			author={Soffer, A.},
			title={Maximal speed of quantum propagation},
			date={2021},
			journal={Lett. Math. Phys.},
			volume={111},
			number={62},
		}
		
		\bib{AshMerm}{book}{
			author={Ashcroft, N.~W.},
			author={Mermin, N.~D.},
			title={Solid {S}tate {P}hysics},
			publisher={Thomson Learning},
			address={Toronto},
			date={1976},
		}
		
		\bib{BoFauSig}{article}{
			author={Bony, J.-F.},
			author={Faupin, J.},
			author={Sigal, I.~M.},
			title={Maximal velocity of photons in non-relativistic {QED}},
			date={2012},
			journal={Adv. Math},
			volume={231},
			pages={3054\ndash 3078},
		}
		
		\bib{Bose}{article}{
			author={Bose, S.},
			title={Quantum communication through spin chain dynamics: an
				introductory overview},
			date={2007},
			journal={Contemporary Physics},
			volume={48:1},
			pages={13\ndash 30},
		}
		
		\bib{BrRo}{book}{
			author={Bratteli, O.},
			author={Robinson, D.~W.},
			title={Operator {A}lgebras and {Q}uantum {S}tatistical {M}echanics. vol.
				2},
			edition={$2$-nd ed.},
			publisher={Texts and Monographs in Physics},
			address={Springer, Berlin},
			date={1997},
		}
		
		\bib{BH}{article}{
			author={Bravyi, S.},
			author={Hastings, M.~B.},
			title={Topological quantum order: Stability under local perturbations},
			date={2010},
			journal={J. Math. Phys.},
			volume={51},
			pages={093512},
		}
		
		\bib{BHV}{article}{
			author={Bravyi, S.},
			author={Hastings, M.~B.},
			author={Verstraete, F.},
			title={{L}ieb-{R}obinson bounds and the generation of correlations and
				topological quantum order},
			date={2006},
			journal={Phys. Rev. Lett.},
			volume={97},
			pages={050401},
		}
		
		\bib{Breteaux_2022}{incollection}{
			author={Breteaux, S{\'{e} }bastien},
			author={Faupin, J{\'{e}}r{\'{e}}my},
			author={Lemm, Marius},
			author={Sigal, Israel~Michael},
			title={Maximal speed of propagation in open quantum systems},
			date={2022},
			booktitle={The {P}hysics and {M}athematics of {E}lliott {L}ieb},
			publisher={{EMS} Press},
			pages={109\ndash 130},
		}
		
		\bib{CL}{article}{
			author={Chen, C.-F.},
			author={Lucas, A.},
			title={Finite speed of quantum scrambling with long range interactions},
			date={2019},
			journal={Phys. Rev. Lett.},
			volume={123},
			pages={250605},
		}
		
		\bib{DefenuEtAl}{article}{
			author={Defenu, Nicolò},
			author={Donner, Tobias},
			author={Macrì, Tommaso},
			author={Pagano, Guido},
			author={Ruffo, Stefano},
			author={Trombettoni, Andrea},
			title={Long-range interacting quantum systems},
						journal={Rev. Mod. Phys., To appear},
			date={2023},
		}
		
		\bib{DerGer}{article}{
			author={Derezi\'nski, J.},
			author={G\'erard, C.},
			title={Asymptotic completeness in quantum in field theory: massive
				{Pauli--Fierz} hamiltonians},
			date={1999},
			journal={Rev. Math. Phys.},
			volume={11},
			pages={383\ndash 450},
		}
		
		\bib{EisOsb}{article}{
			author={Eisert, J.},
			author={Osborne, T.~J.},
			title={General entanglement scaling laws from time evolution},
			date={2006},
			journal={Phys. Rev. Lett.},
			volume={97},
			pages={150404},
		}
		
		\bib{EldredgeEtAl}{article}{
			author={Eldredge, Zachary},
			author={Gong, Zhe-Xuan},
			author={Young, Jeremy~T},
			author={Moosavian, Ali~Hamed},
			author={Foss-Feig, Michael},
			author={Gorshkov, Alexey~V},
			title={Fast quantum state transfer and entanglement renormalization
				using long-range interactions},
			date={2017},
			journal={Physical review letters},
			volume={119},
			number={17},
			pages={170503},
		}
		
		\bib{EpWh}{article}{
			author={Epstein, J.~M.},
			author={Whaley, K.~B.},
			title={Quantum speed limits for quantum-information-processing task},
			date={2017},
			journal={Phys. Rev. A},
			volume={95},
			pages={042314},
		}
		
		\bib{FLS}{article}{
			author={Faupin, J.},
			author={Lemm, M.},
			author={Sigal, I.~M.},
			title={Maximal speed for macroscopic particle transport in the
				{B}ose-{H}ubbard model},
			date={2022},
			journal={Phys. Rev. Lett.},
			volume={128},
			pages={150602},
		}
		
		\bib{FLS2}{article}{
			author={Faupin, J.},
			author={Lemm, M.},
			author={Sigal, I.~M.},
			title={On {L}ieb-{R}obinson bounds for the {B}ose-{H}ubbard model},
			date={2022},
			journal={Commun. Math. Phys.},
			volume={394},
			pages={1011\ndash 1037},
		}
		
		\bib{FaRe}{article}{
			author={Fawzi, O.},
			author={Renner, R.},
			title={Quantum conditional mutual information and approximate {M}arkov
				chains},
			date={2015},
			journal={Commun. Math. Phys.},
			volume={340},
			pages={575\ndash 611},
		}
		
		\bib{FeffLThWein}{article}{
			author={Fefferman, C.~L.},
			author={Lee-Thorp, J.~P.},
			author={Weinstein, M.~I.},
			title={Honeycomb {S}chr\"odinger operators in the strong binding
				regime},
			date={2018},
			journal={Comm. Pure Applied Math},
			volume={71},
			pages={1178\ndash 1270},
		}
		
		\bib{FrGrSchl}{article}{
			author={Fr\"ohlich, J.},
			author={Griesemer, M.},
			author={Schlein, B.},
			title={Asymptotic completeness for {R}ayleigh scattering},
			date={2002},
			journal={Ann. Henri Poincar\'e},
			volume={3},
			pages={107\ndash 170},
		}
		
		\bib{GogEis}{article}{
			author={Gogolin, C.},
			author={Eisert, J.},
			title={Equilibration, thermalisation, and the emergence of statistical
				mechanics in closed quantum systems},
			date={2016},
			journal={Rep. Prog. Phys.},
			volume={79},
			pages={056001},
		}
		
		\bib{GS}{book}{
			author={Gustafson, Stephen~J.},
			author={Sigal, Israel~Michael},
			title={Mathematical {C}oncepts of {Q}uantum {M}echanics},
			edition={$3$-rd ed.},
			series={Universitext},
			publisher={Springer},
			date={2020},
			ISBN={978-3-030-59561-6; 978-3-030-59562-3},
			url={https://doi.org/10.1007/978-3-030-59562-3},
		}
		
		\bib{Harr}{book}{
			author={Harrison, W.~A.},
			title={Electronic {S}tructure and the {P}roperties of {S}olids},
			publisher={Dover Publications},
			date={1989},
		}
		
		\bib{H1}{article}{
			author={Hastings, M.~B.},
			title={{{L}ieb-Schultz-Mattis in higher dimensions}},
			date={2004},
			journal={Phys. Rev. B},
			volume={69},
			pages={104431},
		}
		
		\bib{H0}{article}{
			author={Hastings, M.~B.},
			title={{Locality in quantum and Markov dynamics on lattices and
					networks}},
			date={2004},
			journal={Phys. Rev. Lett.},
			volume={93},
			pages={140402},
		}
		
		\bib{H2}{article}{
			author={Hastings, M.~B.},
			title={An area law for one-dimensional quantum systems},
			date={2007},
			journal={J. Stat. Mech. Theory Exp.},
			number={8},
			pages={P08024, 14},
			url={https://doi.org/10.1088/1742-5468/2007/08/p08024},
		}
		
		\bib{H3}{article}{
			author={Hastings, M.~B.},
			title={Quantum belief propagation, an algorithm for thermal quantum
				systems},
			date={2007},
			journal={Phys. Rev. B},
			volume={76},
			pages={201102(R)},
		}
		
		\bib{HastKoma}{article}{
			author={Hastings, M.~B.},
			author={Koma, T.},
			title={Spectral gap and exponential decay of correlations},
			date={2006},
			journal={Commun. Math. Phys.},
			volume={265},
			pages={781\ndash 804},
		}
		
		\bib{HeSk}{article}{
			author={Herbst, I.},
			author={Skibsted, E.},
			title={Free channel {F}ourier transform in the long-range $n$-body
				problem.},
			date={1995},
			journal={J. d'Analyse Math},
			volume={65},
			pages={297\ndash 332},
		}
		
		\bib{HunSig1}{article}{
			author={Hunziker, W.},
			author={Sigal, I.~M.},
			title={Time-dependent scattering theory of $n$-body quantum systems},
			date={2000},
			journal={Rev. Math. Phys.},
			volume={12},
			pages={1033\ndash 1084},
		}
		
		\bib{KGE}{book}{
			author={Kliesch, M.},
			author={Gogolin, C.},
			author={Eisert, J.},
			title={{L}ieb-{R}obinson bounds and the simulation of time-evolution of
				local observables in lattice systems},
			publisher={In Many-Electron Approaches in Physics},
			address={Chemistry and Mathematics, 301-318. Springer},
			date={2014},
		}
		
		\bib{KS_he}{article}{
			author={Kuwahara, T.},
			author={Saito, K.},
			title={Absence of fast scrambling in thermodynamically stable long-range
				interacting systems},
			date={2021},
			journal={Phys. Rev. Lett.},
			volume={126},
			pages={030604},
		}
		
		\bib{KVS}{article}{
			author={Kuwahara, T.},
			author={Vu, T.~V.},
			author={Saito, K.},
			title={Optimal light cone and digital quantum simulation of interacting
				bosons},
			date={2022},
			journal={Preprint, arXiv 2206.14736},
		}
		
		\bib{KS1}{article}{
			author={Kuwahara, Tomotaka},
			author={Saito, Keiji},
			title={Strictly linear light cones in long-range interacting systems of
				arbitrary dimensions},
			date={2020Jul},
			journal={Phys. Rev. X},
			volume={10},
			pages={031010},
			url={https://link.aps.org/doi/10.1103/PhysRevX.10.031010},
		}
		
		\bib{KS2}{article}{
			author={Kuwahara, Tomotaka},
			author={Saito, Keiji},
			title={{L}ieb-{R}obinson bound and almost-linear light cone in
				interacting boson systems},
			date={2021},
			journal={Phys. Rev. Lett.},
			volume={127},
			pages={070403},
			url={https://link.aps.org/doi/10.1103/PhysRevLett.127.070403},
		}
		
		\bib{LR}{article}{
			author={{L}ieb, Elliott~H.},
			author={Robinson, Derek~W.},
			title={The finite group velocity of quantum spin systems},
			date={1972},
			ISSN={0010-3616},
			journal={Commun. Math. Phys.},
			volume={28},
			pages={251\ndash 257},
			url={http://projecteuclid.org/euclid.cmp/1103858407},
		}
	
			\bib{LRSZ}{article}{
		author={Lemm, Marius},
		author={Rubiliani, Carla},
		author={Sigal, Israel Michael},
		author={Zhang, Jingxuan},
		title={Information propagation in long-range quantum many-body systems
		},
		date={2023},
		journal={Phys. Rev. A 108, L060401 },
%		volume={127},
%		pages={070403},
%		url={https://link.aps.org/doi/10.1103/PhysRevLett.127.070403},
	}
		
		\bib{NachOgS}{article}{
			author={Nachtergaele, B.},
			author={Ogata, Y.},
			author={Sims, R.},
			title={Propagation of correlations in quantum lattice systems.},
			date={2006},
			journal={J. Stat. Phys.},
			volume={124},
			pages={1\ndash 13},
		}
		
		\bib{NachSchlSSZ}{article}{
			author={Nachtergaele, B.},
			author={Schlein, B.},
			author={Sims, R.},
			author={Starr, S.},
			author={Zagrebnov, V.},
			title={On the existence of the dynamics for anharmonic quantum
				oscillator systems},
			date={2010},
			journal={Rev. Math. Phys.},
			volume={22},
			pages={207\ndash 231},
		}
		
		\bib{NachS}{article}{
			author={Nachtergaele, B.},
			author={Sims, R.},
			title={{L}ieb-{R}obinson bounds and the exponential clustering theorem},
			date={2006},
			journal={Commun. Math. Phys.},
			volume={265},
			pages={119\ndash 130},
		}
		
		\bib{NSY2}{article}{
			author={Nachtergaele, B.},
			author={Sims, R.},
			author={Young, A.},
			title={Quasi-locality bounds for quantum lattice systems. i.
				{L}ieb-{R}obinson bounds, quasi-local maps, and spectral flow automorphisms},
			date={2019},
			journal={J. Math. Phys.},
			volume={60},
			pages={061101},
		}
		
		\bib{NachVerZ}{article}{
			author={Nachtergaele, B.},
			author={Vershynina, A.},
			author={Zagrebnov, V.},
			title={{L}ieb-{R}obinson bounds and existence of the thermodynamic limit
				for a class of irreversible quantum dynamics},
			date={2011},
			journal={Contemp. Math},
			volume={552},
			pages={161\ndash 175},
		}
		
		\bib{MR2472026}{article}{
			author={Nachtergaele, Bruno},
			author={Raz, Hillel},
			author={Schlein, Benjamin},
			author={Sims, Robert},
			title={Lieb-{R}obinson bounds for harmonic and anharmonic lattice
				systems},
			date={2009},
			ISSN={0010-3616},
			journal={Comm. Math. Phys.},
			volume={286},
			number={3},
			pages={1073\ndash 1098},
			url={https://doi.org/10.1007/s00220-008-0630-2},
			review={\MR{2472026}},
		}
		
		\bib{Pou}{article}{
			author={Poulin, D.},
			title={{L}ieb-{R}obinson bound and locality for general {M}arkovian
				quantum dynamics},
			date={2010},
			journal={Phys. Rev. Lett.},
			volume={104},
			pages={190401},
		}
		
		\bib{RS}{article}{
			author={Roberts, D.~A.},
			author={Swingle, B.},
			title={{L}ieb-{R}obinson bound and the butterfly effect in quantum field
				theories},
			date={2016},
			journal={Phys. Rev. Lett.},
			volume={117},
			pages={091602},
		}
		
		\bib{SHOE}{article}{
			author={Schuch, N.},
			author={Harrison, S.~K.},
			author={Osborne, T.~J.},
			author={Eisert, J.},
			title={Information propagation for interacting-particle systems},
			date={2011},
			journal={Phys. Rev. A},
			volume={84},
			pages={032309},
		}
		
		\bib{SigSof2}{article}{
			author={Sigal, I.~M.},
			author={Soffer, A.},
			title={Long-range many-body scattering},
			date={1990},
			journal={Invent. Math},
			volume={99},
			pages={115\ndash 143},
		}
		
		\bib{Skib}{article}{
			author={Skibsted, E.},
			title={Propagation estimates for $n$-body schr\"odinger operators},
			date={1992},
			journal={Commun. Math. Phys.},
			volume={142},
			pages={67\ndash 98},
		}
		
		\bib{TranEtAl3}{article}{
			author={Tran, M.~C.},
			author={Su, Yuan},
			author={Carney, D.},
			author={Taylor, J.~M.},
			title={Faster digital quantum simulation by symmetry protection},
			date={2021},
			journal={PRX QUANTUM},
			volume={2},
			pages={010323},
		}
		
		\bib{TranEtal4}{article}{
			author={Tran, Minh~C},
			author={Chen, Chi-Fang},
			author={Ehrenberg, Adam},
			author={Guo, Andrew~Y},
			author={Deshpande, Abhinav},
			author={Hong, Yifan},
			author={Gong, Zhe-Xuan},
			author={Gorshkov, Alexey~V},
			author={Lucas, Andrew},
			title={Hierarchy of linear light cones with long-range interactions},
			date={2020},
			journal={Physical Review X},
			volume={10},
			number={3},
			pages={031009},
		}
		
		\bib{TranEtAl1}{article}{
			author={Tran, Minh~C.},
			author={Guo, Andrew~Y.},
			author={Baldwin, Christopher~L.},
			author={Ehrenberg, Adam},
			author={Gorshkov, Alexey~V.},
			author={Lucas, Andrew},
			title={{L}ieb-{R}obinson light cone for power-law interactions},
			date={2021},
			journal={Phys. Rev. Lett.},
			volume={127},
			pages={160401},
			url={https://link.aps.org/doi/10.1103/PhysRevLett.127.160401},
		}
		
		\bib{TranEtal5}{article}{
			author={Tran, Minh~C},
			author={Guo, Andrew~Y},
			author={Deshpande, Abhinav},
			author={Lucas, Andrew},
			author={Gorshkov, Alexey~V},
			title={Optimal state transfer and entanglement generation in power-law
				interacting systems},
			date={2021},
			journal={Physical Review X},
			volume={11},
			number={3},
			pages={031016},
		}
		
		\bib{YL}{article}{
			author={Yin, Chao},
			author={Lucas, Andrew},
			title={Finite speed of quantum information in models of interacting
				bosons at finite density},
			date={2022},
			journal={Phys. Rev. X},
			volume={12},
			pages={021039},
			url={https://link.aps.org/doi/10.1103/PhysRevX.12.021039},
		}
		
	\end{biblist}
\end{bibdiv}

%		\bibliography{MVBshortBibfile}
	\end{document}